\documentclass[acmsmall,screen,nonacm]{acmart}
\AtBeginDocument{%
  }

\setcopyright{none}
\copyrightyear{2026}
\acmYear{2026}
\acmDOI{XXXXXXX.XXXXXXX}

\acmJournal{PACMPL}
\acmVolume{0}
\acmNumber{0}
\acmArticle{0}
\acmMonth{0}

\usepackage{cleveref}
\usepackage{minted}
\usepackage{mathtools}
\usepackage{outlines}
\usepackage{tikz}
\usepackage{amsmath}
\usepackage{bm}
\usepackage{dirtytalk}
\usepackage[titletoc]{appendix}
\usepackage{subcaption}

\usepackage{tikz}
\usetikzlibrary{shapes.geometric, positioning}

\setminted[haskell]{breaklines,fontsize=\small}

\definecolor{dgreen}{rgb}{0,0.4,0}
\definecolor{dblue}{rgb}{0,0,0.7}

\newif\ifarxiv
\arxivtrue   % arXiv version
\newif\ifcomments
\commentsfalse       % comments OFF: original layout, notes vanish

\newlength{\notewidth}
\newlength{\notesep}  
\newlength{\noteedge} 

\ifcomments
  \usepackage{geometry}
  \newlength{\saveleft}  
  \newlength{\savetextw} 
  \newlength{\savepaperw}
  \newlength{\saveright} 
  \newlength{\notetotal} 
  \newlength{\newpaperw} 
  \newlength{\newright}  
  \makeatletter \@mparswitchfalse \makeatother
  \normalmarginpar
  \usepackage[textsize=footnotesize, colorinlistoftodos]{todonotes}
  \usepackage{marginfix}            % shifts bottom-anchored notes UP into free margin space
  \makeatletter
  \def\dumpmargins{%
    \@tempcnta\z@
    \loop
    \unless\ifx\mfx@marginlist\@empty
      \let\temp@\mfx@marginlist
      \vbox{}\clearpage
      \ifx\temp@\mfx@marginlist            % stuck: nothing got placed this pass
        \advance\@tempcnta\@ne
        \ifnum\@tempcnta>5\relax
          \PackageWarning{marginfix}{Could not place some margin notes;
            dropping them. Use \string\AD[inline]{...} for long comments}%
          \global\let\mfx@marginlist\@empty
        \else
          \extendmargin{2\textheight}\vbox{}\clearpage   % emergency: force them out
        \fi
      \fi
    \repeat
  }
  \makeatother
\else
  \usepackage[disable]{todonotes}   % geometry untouched -> original format
\fi

\newcommand{\takeaway}[1]{\smallskip\par\noindent%
  \textbf{Consequences for PBT:} \emph{#1}\par\smallskip}

\newcommand{\GEN}{\textit{GEN}}
\newcommand{\RE}{\textit{RE}}

\newcommand{\NTIME}{\textit{NTIME}}
\newcommand{\NSPACE}{\textit{NSPACE}}
\newcommand{\GTIME}{\textit{GTIME}}
\newcommand{\GSPACE}{\textit{GSPACE}}
\newcommand{\DTIME}{\textit{DTIME}}
\newcommand{\DSPACE}{\textit{DSPACE}}

\newcommand{\Poly}{\textit{P}}
\newcommand{\NP}{\textit{NP}}
\newcommand{\PG}{\textit{PG}}
\newcommand{\EPG}{\textit{EPG}}
\newcommand{\LG}{\textit{LG}}
\newcommand{\NL}{\textit{NL}}
\newcommand{\REACH}{\textit{REACH}}

\newcommand{\BPP}{\textit{BPP}}
\newcommand{\PSPACE}{\textit{PSPACE}}
\newcommand{\PSPACEG}{\textit{PSPACEG}}
\newcommand{\NPSPACE}{\textit{NPSPACE}}
\newcommand{\EXPTIME}{\textit{EXPTIME}}

\newcommand{\PARTIAL}{\textit{PARTIAL}}
\newcommand{\DEFAULT}{\textit{DEFAULT}}

\newcommand{\Lang}[1]{\mathcal{L}(#1)}
\newcommand{\logr}{\operatorname{log}}

\newtheorem{theorem}{Theorem}[section]
\newtheorem{definition}{Definition}[section]

\newtheorem{lemma}{Lemma}[section]
\newtheorem{proposition}{Proposition}[section]
\newtheorem{corollary}{Corollary}[section]

\newtheorem{observation}{Observation}[section]

\newtheorem{example}{Example}[section]

\newcommand{\blanksymbol}{\,
  \tikz[baseline=-0.4ex]{ % lower baseline to sit under text
    \draw[line width=0.4pt]
      (0,0) -- (0,-0.4ex)           % left vertical downward
      -- (0.8em,-0.4ex)             % bottom horizontal
      -- (0.8em,0);                 % right vertical upward
  }\,
}

\newcommand{\concat}{\mathbin{+\!\!+}}

\newsavebox{\boxA}
\newsavebox{\boxB}
\newsavebox{\boxC}
\newlength{\snippetwd}
\newlength{\snippetht}
\theoremstyle{acmdefinition}
\newtheorem{remark}{Remark}

\begin{document}

\title{Complexity Theory of Randomised Testing}    
%%
%% The "author" command and its associated commands are used to define
%% the authors and their affiliations.
%% Of note is the shared affiliation of the first two authors, and the
%% "authornote" and "authornotemark" commands
%% used to denote shared contribution to the research.

\author{Pingshi Yu}
\email{p.yu22@imperial.ac.uk}
\orcid{0000-0002-4998-4878}
\affiliation{%
  \institution{Imperial College London}
  \city{London}
  \country{United Kingdom}
}

\author{Chengsong Tan}
\email{tanchengsong@kaihong.com}
\orcid{0009-0008-7822-8407}
\affiliation{%
  \institution{Kaihong}
  \city{Shenzhen}
  \country{China}
}

\author{Nicolas Wu}
\email{n.wu@imperial.ac.uk}
\orcid{0000-0002-4161-985X}
\affiliation{%
  \institution{Imperial College London}
  \city{London}
  \country{United Kingdom}
}

\author{Alastair Donaldson}
\email{alastair.donaldson@imperial.ac.uk}
\orcid{0000-0002-7448-7961}
\affiliation{%
  \institution{Imperial College London}
  \city{London}
  \country{United Kingdom}
}

%%
%% The abstract is a short summary of the work to be presented in the
%% article.
\begin{abstract}
Randomised testing is a widely-used approach to software validation,
yet despite years of practical development its theoretical foundations remain thin. 
In particular, the fundamental question of what it means for a set of inputs to be \emph{generable} has gone unanswered in both the literature and folklore.
We present, for the first time, complexity-theoretic foundations for random generators in software testing.
We model generators as Turing machine transducers that consume random bits and produce string-encoded outputs,
and show that the theoretically generable languages coincide exactly with the recursively enumerable languages.
This has direct implications for testing at the boundaries of decidability, such as in the field of compiler testing.
Turning to \emph{efficient} generation,
we show that the polynomial-time generable languages lie within \NP{},
that certain important \NP{}-complete languages admit efficient generators,
and that---under standard cryptographic assumptions---there are languages in \Poly{} for which no efficient generator exists: the complexity of efficienct generation and of efficient decision are not the same.
We then show that space-bounded complexity is the natural framework for generators producing \emph{correlated} samples,
capturing methodologies such as coverage-guided fuzzing and symbolic execution.
Beyond classification, we characterise efficient generability: a language has a polynomial-time generator iff it admits a \emph{certificate scheme} over a verifier---so witness planting, the folklore technique behind generators to test SAT solvers, is in a sense the only route to efficient generation.
Our theory also yields design principles for property-based testing libraries: we prove no library can compositionally derive efficient generators from logical predicates involving conjunction or negation, under standard assumptions. 
However, restricted classes like \NL{} (equivalently, linear Datalog predicates) would admit such a compilation.

\end{abstract}

\maketitle

\section{Introduction}\label{sec:intro}

At first sight, \emph{software testing},
concerned with the observed behaviour of particular programs on particular
inputs,
is far removed from \emph{complexity theory},
which distinguishes what can be computed in principle from what can be computed by an efficient algorithm.
Yet, in \emph{property-based testing} (PBT)~\cite{claessen2000quickcheck, maciver2019hypothesis}, where a \emph{system under test} (SUT) is executed on many automatically generated inputs,
a critical component to the success of a testing campaign is the \emph{generator}, a randomised algorithm. 
Due to the intricacies of realistic generators, much work has gone into their implementation~\citep{yang2011finding,
livinskii2020random},
library tooling~\citep{maciver2019hypothesis,
claessen2000quickcheck, yu2025ratte, hodovan2018grammarinator},
and on automatic \emph{derivation} of generators from predicates~\citep{goldstein2022parsing, lampropoulos2017generating, goldstein2026search}.
But despite their importance, little is known about the
\emph{theoretical limits} of generator expressivity:
what can be generated \emph{at all}, and what can be generated \emph{efficiently}?

\begin{figure}
  \centering
  \setminted{fontsize=\footnotesize, frame=none, framesep=0pt, baselinestretch=1.05}
  \setlength{\fboxsep}{3pt}
  \setlength{\snippetwd}{\dimexpr0.32\columnwidth-2\fboxsep-2\fboxrule\relax}

  % 1. Capture each snippet — NOTE the [t] : top-aligned, not centered -------
  \begin{lrbox}{\boxA}\begin{minipage}[t]{\snippetwd}
\begin{minted}{C}
// [a=19, b=4]
void foo(int a, int b) {
  unsigned i = 0;
  
  while(i != a) {
    i += b;
  }
}
\end{minted}
  \end{minipage}\end{lrbox}
  \begin{lrbox}{\boxB}\begin{minipage}[t]{\snippetwd}
\begin{minted}{lisp}
(declare-const p Bool)
(declare-const r Bool)
(declare-const q Bool)
; from unsat generator
(assert (=> (and q r)
            (and (not q) r)))
; from sat generator 
; by e.g. [p=F, q=T, r=F]
(assert (=> (and p q) r))
\end{minted}
  \end{minipage}\end{lrbox}

  \begin{lrbox}{\boxC}\begin{minipage}[t]{\snippetwd}
\begin{minted}{C}
// [a=21219] 
// halts in 100 steps
void bar(int a) {
  int i;
  for (i = 0; i < 100; i++) {
     if (i == a) 
        break;
  }
}
\end{minted}
  \end{minipage}\end{lrbox}

  % 2. Tallest of the three -------------------------------------------------
  \setlength{\snippetht}{\dimexpr\ht\boxA+\dp\boxA\relax}
  \ifdim\dimexpr\ht\boxB+\dp\boxB\relax>\snippetht
    \setlength{\snippetht}{\dimexpr\ht\boxB+\dp\boxB\relax}\fi
  \ifdim\dimexpr\ht\boxC+\dp\boxC\relax>\snippetht
    \setlength{\snippetht}{\dimexpr\ht\boxC+\dp\boxC\relax}\fi

  % 3. Emit framed at common height, each box top-aligned (outer [t]) --------
  \begin{subfigure}[t]{0.32\columnwidth}
    \fbox{\begin{minipage}[t][\snippetht][t]{\snippetwd}\usebox{\boxA}\end{minipage}}
    \smallskip\par
    \caption{Nonterminating C program.}
    \label{fig:example-testcase-non-term}
  \end{subfigure}
  \hfill
  \begin{subfigure}[t]{0.32\columnwidth}
    \fbox{\begin{minipage}[t][\snippetht][t]{\snippetwd}\usebox{\boxB}\end{minipage}}
    \smallskip\par
    \caption{(Un)satisfiable formulas.}
    \label{fig:example-testcase-sat-unsat}
  \end{subfigure}
  \hfill
  \begin{subfigure}[t]{0.32\columnwidth}
    \fbox{\begin{minipage}[t][\snippetht][t]{\snippetwd}\usebox{\boxC}\end{minipage}}
    \smallskip\par
    \caption{Terminating C program.}
    \label{fig:example-testcase-bounded-term}
  \end{subfigure}

  \caption{Possible test cases produced by example generators.}
  \label{fig:example-testcase}
\end{figure}

We give the first complexity-theoretic formalisation of generators in randomised testing,
establishing a number of results about what can and cannot be generated,
and why.
Our formulation is applicable to the majority of generators found in testing practice:
generators are modelled as Turing machine transducers that consume bitstrings and must \emph{eventually} produce string-encoded outputs.
Once generators are cast within this framework,
each generator associates with a formal language---the set of outputs it can produce.
This allows rigorous investigation of the limits of generation in general, as well as under a variety of time and space constraints.
In establishing our novel framework, we build upon previous theoretical work on language generation that predates PBT (and which in our view deserves more attention)~\citep{sanchis1990efficient, sanchis1990complexity}.

To illustrate the kinds of questions our theoretical framework allows us to ask and answer, we give three examples in the context of programming languages research.

\smallskip\noindent\textbf{Example: testing a termination checker.}
Suppose we are presented with a termination checking tool for C-programs, which is claimed to be conservative:
the tool should emit a ``may not terminate'' diagnostic for \emph{at least} those $(\text{program}, \text{input})$ pairs where the program does not terminate on the input.
To apply PBT to this tool we thus need a generator of elements from the set $\mathbf{NonTerm} = \{ (p, i) : p \text{ a C program, } i \textrm{ an input, }p \textrm{ does not terminate on } i\}$.
If testing finds a $(p, i)$ pair for which the checker does \emph{not} emit a diagnostic, it has found a bug in the checker.
\Cref{fig:example-testcase-non-term} illustrates one such pair that should be producible by the generator.
While of course we can create a generator that produces many such pairs of examples (possibly including \Cref{fig:example-testcase-non-term}), our theory tells us that it is \emph{impossible} to create such a generator without sacrificing some part of the $\mathbf{NonTerm}$ space:
we show that the recursively enumerable (\RE{}) languages are precisely the generable ones, and since $\mathbf{NonTerm}$ is not in \RE{} (assuming an unbounded memory semantics for C), it cannot be generable.

\smallskip\noindent\textbf{Example: testing a SAT solver.}
To apply PBT to a SAT solver we would like two generators,
generating satisfiable (unsatisfiable) formulas: if the solver claims that any are unsatisfiable (satisfiable) this is a bug.
\Cref{fig:example-testcase-sat-unsat} uses SMT-LIB syntax to sketch simple formulas that such generators might be expected to produce.
Our theory confirms that generators covering the full space of satisfiable or unsatisfiable problems can \emph{exist},
and that a generator of satisfiable problems that runs in polynomial time can be constructed (even though deciding satisfiability is NP-complete).
However, we show that (unless $\Poly{}=\NP{}$) a fully general polynomial-time generator of unsatisfiable problems cannot exist.
Furthermore, if we move from SAT to QBF instances by allowing quantifiers, a fully general polynomial-time generator cannot exist even for the satisfiable case.

\smallskip\noindent\textbf{Example: time-bounded differential compiler testing.}
As a final motivating example, a campaign for randomised differential compiler testing typically requires $(\text{program}, \text{input})$ pairs where the program \emph{does} terminate on the input.
To make such a campaign efficient we might further want the generator to provide a time bound for each generated program, avoiding the need for arbitrarily-selected timeout values.
We would thus like to generate $(p, i, k)$ triples comprising a program $p$ that is guaranteed to terminate on input $i$ within $k$ time steps\footnote{Assuming that each basic operation of the program takes one time step.} (see \Cref{fig:example-testcase-bounded-term}).
Although this input space \emph{is} generable,
we show that unless standard complexity-theoretic assumptions turn out to be false,
a time-efficient generator \emph{cannot} cover the set of all such programs: no generator of independent samples exist that uses $q(n)$ time to generate programs of size $n$,
for some polynomial $q$.
Not only that, polynomial generation remains out of reach even if the generator had access to an efficient SAT solver.
The results established in \Cref{sec:space-generators} show that polynomial space $q(n)$ would also not be sufficient to produce all such programs of size $n$---limiting the use of feedback-driven generation techniques such as concolic execution~\citep{cadar2021klee, busse2020running} and coverage-guided fuzzing~\citep{bohme2016coverage} for this particular set.
A practical solution would have to make some engineering trade-offs.

Beyond classification, our framework yields a \emph{characterisation} of efficient generability: we show that a language has a polynomial-time generator if and only if it has a verifier admitting a \emph{certificate scheme}---a way to sample certificates and reconstruct random instances around them.
This shows that \emph{some form} of witness planting (including one based on a trivial verifier)---the folklore technique behind generators for SAT and graph reachability---is used by \emph{all} time-efficient generators.

Separately, a long-standing goal in the PBT literature is to automatically derive efficient generators from predicates~\cite{goldstein2026search, claessen2014generating, lampropoulos2017generating}. 
We prove a negative result in this context: under standard assumptions, no library can \emph{compositionally} map predicates from a logic containing conjunction or negation to polynomial-time generators---any homomorphic translation must give up either expressivity or efficiency.
On the other hand, we show that restricted classes like \NL{} (equivalently, predicates in linear Datalog) can in principle be compiled compositionally to polynomial-time generators.

By studying the power of generators under various resource constraints, we establish a wide range of results about testing that fall out naturally. 
Some confirm intuitions from testing folklore; others are novel insights that were previously inaccessible without such a formal model.

\smallskip\noindent\textbf{Contributions.}
In summary, our main contributions are:
\begin{itemize}
    \item We provide the first complexity-theoretic foundations of generators for randomised testing (\Cref{sec:generators}), and derive expressivity bounds applicable to all generators (\Cref{sec:general-statements}).
    \item We propose a notion of time/space constraints for generators, and derive theoretical limits of efficient generation (\Cref{sec:generator-families}, \Cref{sec:polytime-generators}, \Cref{sec:space-generators}).
    \item We establish general principles for constructing time-efficient generators, and limitations on the design of composable testing libraries (\Cref{sec:practitioners}).
\end{itemize}

Throughout the paper, we introduce several generator-specific complexity classes: $\LG{}$ (log-space generable), $\PG{}$ (polynomial-time generable), $\EPG{}$ (expected polynomial-time generable),
$\PSPACEG{}$ (polynomial-space generable) and $\GEN{}$ (everything that can be generated).
\Cref{fig:complexity-classes} serves as a roadmap to the complexity landscape established in our work.

\iffalse 
\textbf{Paper organisation.}
This paper is organised as follows.
We formalise the notion of generators in \Cref{sec:generators}.
In \Cref{sec:general-statements} we investigate unconstrained generators and relate them to recursively enumerable languages.
In \Cref{sec:generator-families} we discuss and define time and space constraints for generators.
The notion of polynomial-time generators are defined and investigated in \Cref{sec:polytime-generators}, and in \Cref{subsec:pg-np} we give an example for a language likely not polynomial time generable.
An alternative class of expected polynomial time generators are considered in \Cref{subsec:epg}.
We explore space-constrained generators in \Cref{sec:space-generators}, along with their relations to \NP{} and time-bounded generation classes.
We discuss some implications of our results for practitioners and library designers in \Cref{sec:practitioners}.
In \Cref{sec:related-work} we discuss related work, and conclude in \Cref{sec:conclusion}.
\fi

%%%%%%%%%%%%%%%%%

\begin{figure}[t]
\centering
\resizebox{\linewidth}{!}{% Preamble requirements:
% \usepackage{tikz}
% \usepackage{xcolor}

\definecolor{gencolor}{HTML}{3C19FF}

\begin{tikzpicture}[x=0.5cm, y=0.15cm]

\tikzset{
  gensolid/.style  = {draw=gencolor, thick,  solid, rounded corners=3pt},
  gendashed/.style = {draw=gencolor, thick, dotted, rounded corners=3pt},
  classical/.style = {draw=black,     thin, dashed, rounded corners=3pt},
}

% GEN = RE (outermost, purple dashed)
\draw[gendashed] (0, 0) rectangle (39, 32);
\node[color=gencolor] at (37, 20) {$\GEN$};
\node at (37, 18) {$=$};
\node at (37, 16) {$\RE$};

% PSPACEG = PSPACE (purple dashed)
\draw[gendashed] (1, 2) rectangle (35, 30);
\node[color=gencolor] at (33, 20) {$\PSPACEG{}$};
\node at (33, 18) {$=$};
\node at (33, 16) {$\PSPACE{}$};

% NP (classical, grey dashed)
\draw[classical] (2, 4) rectangle (31, 28);
\node at (29, 16) {$\NP{}$};

% P (classical, grey dashed)
\draw[classical] (11, 10) rectangle (27, 22);
\node at (25, 16) {$\textit{P}$};

% NL (classical, grey dashed)
\draw[classical] (12,12) rectangle (21, 20);
\node at (19, 16) {$\NL{}$};

% EPG (generator, purple solid)
\draw[gensolid]  (3, 6) rectangle (23, 26);
\node[color=gencolor] at (5, 16) {$\EPG{}$};

% PG (generator, purple solid)
\draw[gensolid]  (7, 8) rectangle (22, 24);
\node[color=gencolor] at (9, 16) {$\PG{}$};

% LG (generator, purple solid)
\draw[gensolid]  (13, 14) rectangle (17, 18);
\node[color=gencolor] at (15, 16) {$\LG{}$};

% ---- Legend ----

\draw[gensolid]  (2, -2) -- (3, -2);
\node[anchor=west] at (3, -2) {generator classes};

\draw[classical] (10, -2) -- (11, -2);
\node[anchor=west] at (11, -2) {classical classes};

\draw[gendashed] (18, -2) -- (19, -2);
\node[anchor=west] at (19, -2) {classical and generator classes coincide};

\node at (34, -2) {$\textit{P} \not\subseteq \PG{}, \EPG{}$ (*)};

\end{tikzpicture}}
\caption{Complexity relationships between the new generators classes and known classes.
Purple dotted boundaries denote generator classes that do not coincide with classical ones; purple solid boundaries denote generator classes that do coincide with classical ones; black dashed boundaries denote classical complexity classes that do not coincide with generator classes.
The generator classes are \LG{}: logspace generable, \PG{}: polynomial-time generable, \EPG{}: expected polynomial-time generable, \PSPACEG{}: polynomial-space generable, and \GEN{}: all generable languages.
$\textit{P}$ contains $\NL{}$, intersects $\PG{}$ and
$\EPG{}$, and is contained in $\NP{}$.
(*) Non-containment of $\textit{P}$ is contingent on standard cryptographic assumptions.}
\label{fig:complexity-classes}
\end{figure}

\section{Modelling Generators}\label{sec:generators}

A \emph{generator} in randomised testing produces structured inputs for an SUT~\citep{mili2015software},
e.g.\ inputs that the SUT is supposed to accept, or deliberately malformed inputs designed to stress the SUT.
These structured inputs can be viewed as strings in a language.
For generality, and to allow the application of complexity theory, we formulate generators as Turing machines, and their outputs as string encodings of data.
We model randomness as an explicit input bitstring supplied to a deterministic Turing machine, where
random selections are made during computation by reading from this bitstring.
This is in line with how randomness is implemented in common randomised testing libraries such as Hypothesis~\citep{maciver2019hypothesis}.\footnote{%
It is also possible to model randomness using nondeterministic Turing machines, which would result in an equivalent model.
We choose deterministic Turing machines with an explicit random input to stay in line with practical implementations, and because it makes later definitions more convenient.
The results for the deterministic case translate directly to the nondeterministic case.}

In practice, generation is driven by a \emph{pseudorandom number generator} (PRNG), which we idealise as an inexhaustible supply of independent, unbiased random bits: a generator consumes some finite prefix of an infinite stream and produces an output. 
There are two caveats with the informal description, which we make formal below. 
First, we permit a generator that produces nothing for \emph{every} stream, characterising generators of the empty set, and is useful for our theoretical model. 
Second, a generator being productive ``given an infinite stream'' is too strong if interpreted as ``given \emph{every} infinite stream''.
We show in \Cref{sec:eventual-productivity} that reasonable generators can still fail to produce a value in extreme, albeit probability 0 events---so we require producivity only \emph{almost surely}.

Although it is natural to define generators to take infinite streams as input, infinite streams are not a natural input space for Turing machines.
To reconcile this, we proceed in three steps:

\begin{enumerate}
  \item We define \emph{bitstring transducers}, which operate on \emph{finite} bitstrings, together with a \emph{halting} condition.
    A halting transducer either accepts, producing an output, or rejects, where a rejection only occurs when the transducer runs out of its finite input (\Cref{sec:finite-transducers}).
    % This property is also present in real generator implementations.

  \item We lift a halting transducer to infinite streams.
    Because rejection only occurs after input exhaustion, acceptance is monotone under extension of the input, thus every accepting run belongs to an equivalence class represented by a unique minimal accepting prefix.
    This lets us define what a transducer does on an infinite stream, and lets us \emph{measure} the set of accepted bitstreams (\Cref{sec:infinite-streams}).
    
  \item We define the notion of \emph{eventual productivity} as almost-sure acceptance of infinite streams under the measure of infinite sequences of fair coin-flips, and obtain the definition of a generator (\Cref{sec:eventual-productivity}).\footnote{%
    In \Cref{appendix:alternative-eventually-productive} we discuss an alternative natural notion of eventual productivity based only on Turing machines.}
\end{enumerate}

\paragraph{Preliminaries}
Unless stated otherwise, all tapes of a \emph{Turing machine} (TM) are assumed to be infinite (to the right), initially empty, with the head pointed at some fixed, designated initial cell.
The alphabet on all tapes of Turing machines is finite, and includes the ``$\blanksymbol$'' symbol to represent blank spaces.
The blank symbol will never be written.
When we make the statement ``tape with alphabet $\Gamma$'', we implicitly mean it has alphabet $\Gamma \uplus \{\blanksymbol\}$, and all empty cells on the tape are populated by $\blanksymbol$ symbols.
We say a tape is read-once-only if the head never moves left and the machine never writes to the tape.
Similarly, a tape is write-once-only if the head never moves left, and after any write operation the tape head is immediately moved to the right.
For a Turing machine $M$ with a single input tape, we say ``$M$ ran on $x$'', or ``$M$ ran with input $x$'', or ``$M(x)$'', to mean the computation performed by $M$ in the initial state, with the input tape populated using $x$ starting from the initial cell.
This also extends to machines with multiple input tapes.
For example, on a machine $M$ with two input tapes, $M(x, y)$ means the computation performed by $M$ in the initial state, with the first input tape populated using $x$ and the second using $y$.

\subsection{Transducers on finite bitstrings}\label{sec:finite-transducers}

\begin{definition}[Bitstring Transducer]\label{def:transducer}
  A \emph{bitstring transducer} $T$ is a Turing machine with three tapes: a read-once-only \emph{input tape} with alphabet $\{0, 1\}$, a \emph{working tape} with alphabet $\Sigma$, and a write-once-only \emph{output tape} with alphabet $\Gamma$.
  For a bitstring $b \in \{0, 1\}^*$, if $T(b)$ accepts, the contents of the output tape (disregarding empty cells) $x \in \Gamma^*$ are said to be \emph{produced} by $T$, denoted $T(b) = x$.

  If there is no ambiguity, we may say transducer for short.
\end{definition}

\begin{definition}[Random Selection]\label{def:rand-selection}
  During an execution, we say that the transducer $T$ makes a \emph{random selection} whenever a transition moves the input tape head right---that is, it reads a bit from the input bitstring.
\end{definition}

Multiple random bits can be combined together in a \emph{procedure} to make more refined random selections (e.g.\ conditional dependence, or selections among more than two possibilities).
Procedures are computations (potentially with side effects) that consume randomness in the sense of \Cref{def:rand-selection}, and generators are special cases of procedures that write values to the output tape.
Procedures are used to modularise algorithmic descriptions and help with clarity of exposition.

\begin{example}\label{def:naturalencoding}
  A transducer $N$ can select any $n \in \mathbb{N}$ at random via the following iterative procedure.
  $N$ reads the input in pairs of bits.
  In each round it reads a \emph{continuation bit}; if it is $1$, $N$ reads a further \emph{data bit}, appends it to a buffer on the working tape, and begins a new round; if it is $0$, $N$ halts and outputs the buffer, interpreted as a binary numeral.
  Writing the input stream as $b_0 b_1 b_2 \ldots$, the transducer computes
  $\sum_{i=0}^{k-1} b_{2i+1} 2^{i}$ where $k = \min\{ i : b_{2i} = 0 \}$, whenever this $k$ exists.
  Note that the only way $N$ rejects is when it runs out of input bits, e.g.\ on $1$ or $101$ (the continuation bit promises another bit that is not supplied), or on $10$ (no continuation bit is supplied).
\end{example}

The accepting runs of \Cref{def:naturalencoding} consume $2k+1$ bits for some $k\geq 0$,
and rejections occur due to the stream ending prematurely.
This is the behaviour we now generalise to our \emph{halting} condition.

\begin{definition}[Halting Transducer]\label{def:halting-transducer}
  A bitstring transducer $T$ is \emph{halting} if
  \begin{enumerate}
    \item for every input $b \in \{0, 1\}^*$, $T(b)$ halts; and
    \item $T$ rejects iff the input head attempts to read a blank cell (a cell containing $\blanksymbol$).
  \end{enumerate}
\end{definition}

The second clause prevents premature rejections and is a necessary precursor for the formal definition of ``eventual productivity'' below.
From \Cref{def:halting-transducer}, we can immediately observe the following, useful for the upcoming generalisation to infinite streams.

\begin{observation}[Monotonicity]\label{obs:monotone}
  Let $T$ be halting and suppose $T(b)$ accepts.
  Then the run $T(b)$ reads no cell beyond the $|b|$-th, since by \Cref{def:halting-transducer} reading a blank would force rejection.
  Hence for every $b'$ having $b$ as a prefix, the run of $T$ on $b'$ is identical, so $T(b')$ accepts and $T(b') = T(b)$.
\end{observation}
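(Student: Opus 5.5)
The plan is a determinism argument that compares the run of $T$ on $b$ with its run on $b'$ step by step. First I will fix the accepting run $T(b)$, which is finite by clause (1) of \Cref{def:halting-transducer}. I will show that the input head never \emph{reads} a cell at position $|b|$ or beyond (counting cells from $0$ at the designated initial cell). Suppose some transition of the run had the input head on such a cell. That cell holds $\blanksymbol$ in the run on $b$, so the head would be attempting to read a blank. Clause (2) then says the run rejects, which contradicts the assumption that $T(b)$ accepts. The one point needing care is what ``attempts to read'' means. I will interpret it as: some transition is taken while the input head scans that cell. Under this reading the claim follows directly from the ``only if'' direction of clause (2).

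Next, I will show that the runs on $b$ and $b'$ agree step by step. Let $b'$ have $b$ as a prefix. The two input tapes agree on every cell with index below $|b|$. The working and output tapes start empty, and both runs start in the same initial state with all heads at their initial cells. I will induct on the step count $t$, up to the halting step of $T(b)$. The hypothesis is that after $t$ steps both configurations are identical on the state, all head positions, and the working and output tape contents. For the induction step, the next transition depends only on the state and the three scanned symbols. The working and output symbols agree by hypothesis. The input symbol agrees because, by the first step, the input head is at a position below $|b|$, where $b$ and $b'$ coincide. Since $T$ is deterministic, both runs apply the same transition, and the input tape is read-once-only and never written, so the hypothesis carries over to step $t+1$.

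It follows that the run $T(b')$ reaches the same halting accepting configuration at the same step. So $T(b')$ accepts, and the output tapes are identical, which gives $T(b') = T(b)$.

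The main obstacle is the first step: making clause (2) precise enough that a blank cell scanned under the input head necessarily counts as reading it and so forces rejection. Once that is settled, the rest is routine simulation by induction.
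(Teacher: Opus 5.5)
Your proposal is correct and follows the same route as the paper's inline justification. Acceptance together with clause (2) of the halting-transducer definition rules out the input head ever reading a cell at index $\geq |b|$, and determinism then makes the runs on $b$ and on any extension $b'$ coincide step for step. Your explicit induction, and your reading of ``attempts to read'' as ``takes a transition while scanning that cell'', make precise what the paper leaves implicit; this reading is the one needed, since otherwise transitions taken while merely parked on a blank could depend on the scanned symbol.
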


\subsection{From finite bitstrings to infinite streams}\label{sec:infinite-streams}

By \Cref{obs:monotone}, the accepting behaviour of a halting transducer $T$ is completely determined by the shortest bitstrings on which it accepts.

\begin{definition}\label{def:accepting-minimal-bitstings}
  For a halting transducer $T$, the \emph{minimal accepted bitstrings} $A_T$ are
  $$
    A_T = \{\;b \in \{0, 1\}^* : T(b) \text{ accepts}\ \land\ \forall b'.\ b'\ \text{a strict prefix of}\ b \implies T(b')\ \mathrm{rejects} \;\}.
  $$
\end{definition}

Again by \Cref{obs:monotone}, no element of $A_T$ is a prefix of another, i.e.\ $A_T$ is prefix-free, and $T$ accepts a finite bitstring $b$ exactly when some element of $A_T$ is a prefix of $b$.
Note that $A_T$ is defined for conceptual convenience, and is not intended to be computed directly.

We can now assign a natural semantics of $T$ on an infinite bitstream $\textit{bs} \in \{0,1\}^\omega$: its behaviour is identical to $T(b)$, where $b \in A_T$ is a prefix of $bs$, if $b$ exists.

\begin{definition}\label{def:stream-semantics}
  For a halting transducer $T$ and an infinite bitstream $\textit{bs}$, define $T(\textit{bs})$ as:
  \begin{enumerate}
    \item $T(b)$, if $b \in A_T$ and $b$ is a prefix of $\textit{bs}$;
    \item divergent, if no $b \in A_T$ is a prefix of $\textit{bs}$.
  \end{enumerate}
  This is well-defined: as $A_T$ is prefix-free, at most one $b \in A_T$ is a prefix of $\textit{bs}$.
\end{definition}

To speak of the probability that $T$ produces an output, we use the standard measure of infinite sequences of fair coin-flips, which models the PRNG that drives generators in practice.

\begin{definition}[Distribution of Random Bitstreams]
  Let $\mathcal{I}$ denote the distribution of bitstreams $\{0, 1\}^\omega$, where for a random variable $X$ distributed according to $\mathcal{I}$ (written $X \sim \mathcal{I}$), all bits of $X$ are independently and identically distributed (i.i.d.) following a fair Bernoulli distribution $\mathrm{Ber}(0.5)$.
  For a set of bitstreams $B \subseteq \{0, 1\}^\omega$, denote the probability of $B$ under $\mathcal{I}$ by $\mu(B)$.
\end{definition}

\begin{definition}
  For a bitstring $b \in \{0, 1\}^*$, define the \emph{cylinder} $S_b$ as:%  the set of all infinite bitstreams with prefix $b$:
  $$S_b = \{b \concat \textit{bs} : \textit{bs} \in \{0, 1\}^\omega\}.$$
\end{definition}

\begin{definition}\label{def:accepted-streams}
  The \emph{accepted} infinite bitstreams of a halting transducer $T$ are
  $$A_T^\omega = \bigcup_{b \in A_T} S_b.$$
\end{definition}

Since $A_T$ is prefix-free, the cylinders $S_b$ for $b \in A_T$ are pairwise disjoint.
$A_T$ is also countable, thus $A_T^\omega$ is a countable disjoint union of cylinders, and is measurable with
$$\mu(A_T^\omega) = \sum_{b \in A_T} 2^{-|b|}.$$
This quantity is the probability that $T$, when driven by an idealised PRNG, produces an output.

\subsection{Eventual productivity}\label{sec:eventual-productivity}

We are now ready to formalise the ``eventual productivity'' requirement of a generator.
The strongest interpretation of eventual productivity is that $T(\textit{bs})$ is defined for \emph{every} $\textit{bs} \in \{0,1\}^\omega$, equivalently $A_T^\omega = \{0,1\}^\omega$.
However, this is too strong in practice.
Consider the natural number transducer $N$ of \Cref{def:naturalencoding}.
It diverges on exactly the streams in:
$$
X = \{ b_0b_1\ldots : \forall i \in \mathbb{N}.\ i \textrm{ even} \Rightarrow b_i = 1 \},
$$
that is, streams in which every continuation bit is $1$, so that $N$ never stops asking for another data bit.
Ruling out transducers like $N$ would make it impossible to sample from an infinite set---any transducer with infinitely many possible outputs must have infinitely many accepting prefixes, and so there must exist streams on which the transducer is never productive.
This would severely limit the applicability of our model.

Observe however, that the set $X$ has trivial measure, as $\mu(X) = \lim_{i \rightarrow \infty}2^{-i} = 0$.
We therefore formalise \emph{eventual productivity} as \emph{almost-sure} acceptance.

\begin{definition}[Generators]\label{def:generators}
  A halting transducer $G$ is a \emph{generator} if $\mu(A_G^\omega) = 1$ or $\mu(A_G^\omega) = 0$.
\end{definition}

\begin{observation}
  $\mu(A_G^\omega) = 0$ iff $A_G = \emptyset$ iff $G$ rejects every finite bitstring.
\end{observation}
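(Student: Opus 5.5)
I will prove the three-way equivalence by closing a cycle: $\mu(A_G^\omega) = 0 \Rightarrow A_G = \emptyset \Rightarrow G$ rejects every finite bitstring $\Rightarrow \mu(A_G^\omega) = 0$. The only tools needed are the measure formula $\mu(A_G^\omega) = \sum_{b \in A_G} 2^{-|b|}$ established after \Cref{def:accepted-streams}, and \Cref{obs:monotone}.

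For the first implication I argue by contraposition. Suppose $A_G \neq \emptyset$ and pick some $b \in A_G$. Every term of the sum is nonnegative, so
\[
\mu(A_G^\omega) \;\geq\; 2^{-|b|} \;>\; 0 .
\]
Equivalently, $S_b \subseteq A_G^\omega$ and the cylinder $S_b$ has positive measure. This is the only place where the measure structure genuinely enters.

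For the second implication, suppose $A_G = \emptyset$ and, for contradiction, that $G(b)$ accepts for some finite $b$. Among the prefixes of $b$ (including $b$ itself) on which $G$ accepts, choose one of minimal length, say $b_0$. Every strict prefix of $b_0$ is rejected, since by minimality it does not accept and $G$ is halting, so every run either accepts or rejects. Hence $b_0 \in A_G$ by \Cref{def:accepting-minimal-bitstings}, contradicting $A_G = \emptyset$. Note that this step does not even need \Cref{obs:monotone}; well-foundedness of prefix length suffices.

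For the third implication, if $G$ rejects every finite bitstring then no bitstring satisfies the acceptance clause of \Cref{def:accepting-minimal-bitstings}. So $A_G = \emptyset$, the union defining $A_G^\omega$ is empty, and $\mu(A_G^\omega) = 0$.

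No step is a real obstacle. The one point needing care is the second implication: it must use the halting property to ensure that ``does not accept'' means ``rejects'', since \Cref{def:accepting-minimal-bitstings} is phrased in terms of rejection.
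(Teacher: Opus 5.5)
Your proof is correct and follows the paper's argument. The positivity of each term $2^{-|b|}$ in $\mu(A_G^\omega)=\sum_{b \in A_G} 2^{-|b|}$ gives the first equivalence, and the second comes from the definition of $A_G$. The only difference is that the paper cites \Cref{obs:monotone} for ``$A_G = \emptyset$ iff $G$ accepts no finite bitstring''. Your minimal-length accepting prefix argument, with the halting property turning ``does not accept'' into ``rejects'', is the justification that direction actually needs.
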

\begin{proof}
  Each $S_b$ has $\mu(S_b) = 2^{-|b|} > 0$, so $\mu(A_G^\omega) = \sum_{b \in A_G} 2^{-|b|} = 0$ iff $A_G$ is empty.
  By \Cref{obs:monotone}, $A_G = \emptyset$ iff $G$ accepts no finite bitstring.
\end{proof}

The two cases ($\mu(A_G^\omega) = 1 \text{ or } 0$) are for the two behaviours permitted of generators: the first characterises almost-sure productivity; the second describes the case where all inputs are rejected (corresponding to a generator for the empty set).
It is undecidable to check whether a given transducer is a generator, but this is not an issue since we show generator conditions on a case-by-case basis throughout this work.\footnote{In a similar vein, the undecidability of the Halting problem does not prevent people from developing useful algorithms.}
\ifarxiv
For added intuition, it may help to consider the transducers that are \emph{not} generators.
These are the transducers that fail to halt on some finite bitstring, or that diverge on a non-negligible --- but not co-negligible --- set of streams.
\fi
To illustrate the definition, we verify that it matches our intuitions on \Cref{def:naturalencoding}, and that rejection sampling yields true generators.

\begin{proposition}
  The transducer $N$ of natural numbers described in \Cref{def:naturalencoding} is a generator.
\end{proposition}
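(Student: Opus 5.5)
We must check two things: that $N$ is a halting transducer in the sense of \Cref{def:halting-transducer}, and that $\mu(A_N^\omega)=1$ as \Cref{def:generators} requires.

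\emph{Halting.} On a finite input $b$, each round of $N$ reads at least one fresh bit. Since the input tape is read-once-only, after at most $|b|$ rounds $N$ either halts with a $0$ continuation bit or tries to read a blank. So $N(b)$ always halts. By construction, the only rejecting transition of $N$ is the one taken when the head reads $\blanksymbol$, namely in place of a continuation bit or a data bit. Conversely, every time $N$ reads a blank it rejects, and no accepting run ever reads a blank. Hence $N$ rejects iff it attempts to read a blank cell, which is clause (2).

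\emph{Minimal accepted bitstrings.} I will show that
$$A_N=\{\,1d_0\,1d_1\cdots 1d_{k-1}\,0 : k\ge 0,\ d_i\in\{0,1\}\,\}.$$
Each such string is accepted. It is minimal because every strict prefix ends either mid-pair or before the final $0$, so $N$ runs out of input on it and rejects. Conversely, an accepting run reads pairs $1d_i$ until the first $0$ continuation bit and then halts, so its minimal accepting prefix has exactly this form. Equivalently, $A_N^\omega$ is the complement of the set $X$ from the text, though I will not need that fact.

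\emph{Measure.} There are exactly $2^k$ strings in $A_N$ of length $2k+1$. Using the formula $\mu(A_N^\omega)=\sum_{b\in A_N}2^{-|b|}$ established after \Cref{def:accepted-streams},
$$\mu(A_N^\omega)=\sum_{k\ge0}2^k\,2^{-(2k+1)}=\tfrac12\sum_{k\ge0}2^{-k}=1.$$
As a cross-check, the complement is $X=\bigcap_m\{\text{first } m \text{ continuation bits are } 1\}$, a decreasing intersection of sets of measure $2^{-m}$, so $\mu(X)=0$ and therefore $\mu(A_N^\omega)=1$.

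The main obstacle is only the careful verification of clause (2) of \Cref{def:halting-transducer}: I have to confirm that the informal description of $N$ has no other rejecting transition. In particular, an empty buffer at $k=0$ must be accepted with output $0$, which I take to be the intended reading of ``the buffer interpreted as a binary numeral''. Everything else is routine.
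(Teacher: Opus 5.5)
Your proof is correct and follows the paper's approach: first check the halting conditions, then show $\mu(A_N^\omega)=1$. The paper gets the measure only from the complement $X$ having measure $0$, which is your cross-check, whereas your main computation sums $2^{-|b|}$ over the explicit prefix-free set $A_N$. Both are valid, and yours also spells out clause (2) of the halting definition, which the paper treats as evident.
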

\begin{proof}
  $N$ is clearly halting: on any finite bitstring it either meets a $0$ continuation bit and accepts, or exhausts its input and rejects.
  The set of bitstreams \emph{not} accepted by $N$ is $X$ as above, so $A_N^\omega = \{0,1\}^\omega \setminus X$ and $\mu(A_N^\omega) = 1 - \mu(X)$.
  Since $\mu(X) = 0$, $\mu(A_N^\omega) = 1$ and $N$ is a generator.
  
  % Let $X_k = \{ \textit{bs} : b_{2i} = 1 \text{ for all } i < k \}$.
  % The $X_k$ are decreasing with $X = \bigcap_{k} X_k$, and as the bits are i.i.d.\ we have $\mu(X_k) = 2^{-k}$.
  % By continuity from above, $\mu(X) = \lim_{k \to \infty} 2^{-k} = 0$, so $\mu(A_N^\omega) = 1$ and $N$ is a generator.
\end{proof}

\begin{definition}[Generator by Rejection Sampling]
  Let $f$ be a total computable predicate on $\{0, 1\}^*$.
  The generator $G_f$ based on rejection sampling on $f$ is defined as follows.
  \begin{enumerate}
    \item Sample an $n$ (e.g.\ using \Cref{def:naturalencoding}), then sample a string $w$ of length $n$.
    \item Compute $f(w)$.
      If $f(w)$ returns true, output $w$.
      Otherwise restart from step 1.
  \end{enumerate}
\end{definition}

\begin{proposition}\label{prop:rejection-sampling-valid}
  If $f$ is a computable predicate on $\{0, 1\}^*$ such that $f$ holds for at least one $w \in \{0, 1\}^*$, then $G_f$ is a generator.
\end{proposition}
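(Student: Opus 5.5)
We must check two things: that $G_f$ is a halting transducer in the sense of \Cref{def:halting-transducer}, and that $\mu(A_{G_f}^\omega) = 1$. By the hypothesis that some $w$ satisfies $f$, we are in the $\mu = 1$ case of \Cref{def:generators}.

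\emph{Halting.} Each round first runs the procedure of \Cref{def:naturalencoding} to obtain $n$. It then reads exactly $n$ further bits to form $w$, and finally evaluates $f(w)$ on the working tape. On a finite bitstring $b$, every round either completes or tries to read past the end of $b$. Each completed round consumes at least one input bit, because the continuation bit is always read. Hence only finitely many rounds can complete before the input is exhausted. Since $f$ is total computable, the evaluation of $f(w)$ halts in every round. The machine therefore halts on every finite $b$. It rejects only when it tries to read a blank cell, namely when $n$ or $w$ is truncated; otherwise it loops or accepts. So clause (2) holds, and we must make sure the construction never rejects for any other reason.

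\emph{Almost-sure acceptance.} Fix $w^*$ with $f(w^*)$ true, and let $m = |w^*|$. Consider a single round started on a fresh segment of an i.i.d.\ stream. The probability that it outputs exactly $w^*$ is at least
\[
p = 2^{-(2k+1)} \cdot 2^{-m},
\]
where $k$ is the length of the binary numeral for $m$ used by $N$, a fixed encoding. This bound needs care with leading zeros, since several strings may encode $m$; we only need one of them. Thus $p > 0$, and $p$ does not depend on the round.

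The main technical step is independence between rounds. The stream is i.i.d., and each round reads a fresh contiguous block whose endpoint is a stopping time. By the strong Markov property for Bernoulli sequences, the bits after that stopping time are again i.i.d.\ and independent of the past. So on the event that the first $j$ rounds complete without success, round $j+1$ succeeds with conditional probability at least $p$.

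We must also handle the case where a round diverges on a stream because $N$ never stops. This is a null event by the computation in the proposition for $N$. A countable union over rounds of such null events is still null.

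Combining these facts, the probability that none of the first $j$ rounds accepts is at most $(1-p)^j + 0$. Letting $j \to \infty$ gives probability $0$ of never accepting, so $\mu(A_{G_f}^\omega) = 1$.

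The main obstacle is stating the independence and stopping-time argument rigorously in terms of cylinders. A more elementary alternative is to work directly with the cylinder sets $S_b$. Each one has $\mu(S_b) = 2^{-|b|}$. The event ``the first $j$ rounds fail'' is a disjoint union of cylinders of failed-round prefixes, so the product bound follows by summing cylinder measures inductively. This avoids explicitly invoking the strong Markov property.
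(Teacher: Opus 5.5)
Your proof is correct, but it takes a different route from the paper's.

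\textbf{The paper's route.} It proves a general lemma: if from every finite prefix $b$ there is an accepting extension $e(b)$ whose length is bounded by a uniform constant $c$, then non-acceptance has probability at most $\prod_i (1-2^{-c}) = 0$. It applies this to $G_f$ by asserting that the shortest terminating extension is bounded by $|w|$, for a shortest $w$ with $f(w)$ true. It never separately checks the two halting-transducer clauses.

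\textbf{Your route.} You work round by round:
\begin{itemize}
\item a uniform lower bound $p = 2^{-(2k+1)-m}$ on success at the start of each round;
\item independence across rounds, via the strong Markov property or, more elementarily, the cylinder-sum induction over the prefix-free set of prefixes that complete exactly $j$ failed rounds;
\item a separate null-set argument for rounds in which $N$ diverges.
\end{itemize}

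\textbf{What each buys.} The paper's lemma is reusable: it disposes of $N$ at once with a small constant $c$, and it avoids stopping times.

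However, its hypothesis does not hold for $G_f$. A prefix such as $(11)^j0$ has already fixed the current round's length at $n = 2^j-1$, so every accepting extension has length at least $2^j-1$. Hence no uniform $c$ exists, and the paper's argument does not literally apply.

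Your argument needs the bound only at round boundaries, which is exactly what rejection sampling provides. The price is justifying independence across rounds of random, unbounded length, which you do. The cylinder version you sketch at the end is the cleanest way to make this rigorous within the paper's measure-theoretic setup. Your explicit verification of halting on every finite input, and of rejection occurring exactly on reading a blank, also fills a step the paper leaves implicit.
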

\begin{proof}
  See \Cref{prop:termination-proof}.
\end{proof}

\section{A Framework for Generability}\label{sec:general-statements}

In testing, randomness is an essential tool to enable systematic exploration of the input space. 
The goal of randomised testing can be viewed as the validation of a certain property that must hold on the program when executed on the random inputs. 
The commonly practised paradigm of \emph{property-based testing} (PBT) validates properties of the form $\forall x. \mathit{Pre}(x) \implies \mathit{Post}(x)$, which suits the vast majority of use cases.
For instance, fuzzing can be seen as validating the property: ``for all inputs $x$, the program should not crash''.
Differential testing can be seen as validating: ``for all valid inputs $x$, the two implementations $T, T'$ must have $T(x) = T'(x)$''.

The key to the successful application of randomised testing is an effective generator, whose job is to generate inputs that satisfy the preconditions of the property, which then enables the property to be checked on the program.
We now show a few example scenarios below of programs and their properties that may be tested, and will be referring back to them throughout this section.

\begin{example}[Binary Search Trees]\label{example:bst}
Program $P$ takes as input a binary search tree $T$ and an index $i$, and outputs the $i$th smallest element within the tree.
$P$ should satisfy the following property, where $\mathit{bst}(T)$ is a predicate checking that $T$ is a binary search tree, and $i, j$ are integers:
$$
  \forall T, i, j. \mathit{bst}(T) \land 0 \leq i \leq j < |T|
            \implies P(T,i) \leq P(T,j)
$$
\end{example}
\begin{example}[Counting SAT Assignments]\label{example:counting-sat}
Program $P$ takes as input a CNF boolean formula $\psi$ and outputs the number of satisfying assignments of $\psi$.
$P$ should satisfy the following property, where $\mathit{sat}$ is the predicate for the satisfiability of the formula, and $\psi[x/A]$ is the substitution of $A$ for $x$ in $\psi$:
$$
    \forall \psi. \mathit{sat}(\psi) 
        \implies P(\psi) > 0 \land P(\psi) = P(\psi[x_0/T]) + P(\psi[x_0/F])
$$
\end{example}

\begin{example}[Tensor Decomposition]\label{example:tensor-decomposition}
Program $P$ takes a tensor $T$ over field $F$, and outputs a decomposition of $T$ as a product of rank-1 tensors.
$P$ should satisfy the following:
\[ \forall T,r.\ \mathit{rank}(T)=r \Longrightarrow \mathit{length}(P(T))=r \ \wedge\ \left(\forall i<r.\ \mathit{rank}(P(T)_i)=1\right) \ \wedge\ \bigotimes_{i<r} P(T)_i = T . \]
\end{example}

\begin{example}[C Compiler]\label{example:c-compiler}
Program $P$ is a compiler for the C language, taking an optimisation flag (0 or 1) as the first argument and a C program as the second argument.
$P$ should satisfy the following property, where $\mathit{wd}(c)$ is a predicate checking $c$ is a C program with well-defined semantics (assume it doesn't take inputs), and $P(o, c)()$ denotes executing the executable produced by compiling $c$ using $P$ on optimisation $o$:
$$
    \forall c. \mathit{wd}(c)
        \implies P(0, c)() = P(1, c)()
$$
\end{example}

In each case, for a property $\mathit{Prop} = \forall x. \mathit{Pre}(x) \implies \mathit{Post}(x)$, the ideal generator should implement a distribution on the elements of the set $\mathit{Precond} = \{x : \mathit{Pre}(x)\}$.

\subsection{The Design Space of Generability}

We first address some nuances in the design space of the formal language associated with a generator.
This is an important step, since languages are the central object of study in any complexity theory. 

\paragraph{Allow generators to sample only a subset of $\mathit{Precond}$?}
Sampling from a strict subset of $\mathit{Precond}$ would mean there exist elements $x \in \mathit{Precond}$ that can never be generated. 
Such a generator can still find bugs, but never ones triggered only by inputs outside its range.
Thus the generator of a subset of $\mathit{Precond}$ would not be appropriate to \emph{fully} test $\mathit{Prop}$.

As a pragmatic choice, it is common in practice to create generators that sample from only a subset of inputs that satisfies a property's precondition.
For instance, when the precondition is seemingly too complex\footnote{Note this is only an empirical understanding, which is thus far not formalised. When is a set really \emph{too complex} to generate? We will look to give theoretical answers to this question in the coming sections.} to write a generator for, such as the property described in \Cref{example:c-compiler}, generators typically focus on specific subsets of the input space heuristically understood to contain potential bug triggers~\citep{livinskii2020random, even2022csmithedge}.
Although existing literature refers to these as ``generators'' of C-programs for validating compilers, the property they are testing for is more accurately understood as a different from the one in \Cref{example:c-compiler}, with a more restrictive precondition:
$$\forall c. \mathit{producible}(c) \implies P(0, c)() = P(1, c)()$$
where $\mathit{producible}$ is to denote the predicate corresponding to the targeted subset of inputs the generator produces.
For example, restricted C-programs amenable to implemented analyses~\citep{yang2011finding}, or C-programs containing constrained forms of loops~\citep{livinskii2023fuzzing}.

\paragraph{Allow generators to sample a superset of $\mathit{Precond}$?}

Generators sampling from a superset of $\mathit{Precond}$ would be able to trigger all possible property violations, but risk false positives from inputs outside of $\mathit{Precond}$. % have a risk of triggering false positives when an input outside of $\mathit{Precond}$ is used in testing.
Rejection sampling is required to filter the inputs, selecting only ones that satisfy the precondition.
Depending on requirements, this filter can be applied either as a final step of generation, or after a bug-triggering input has been found to check for false-positivity.

While filtering works well when deciding the precondition is easy (\Cref{example:bst}), 
the precondition itself can be hard to decide, and is not in general related to the complexity of the algorithm under test.
For example, checking the precondition is \NP-hard in \Cref{example:counting-sat} and \Cref{example:tensor-decomposition}, and undecidable in \Cref{example:c-compiler} (deciding if a C program is well-defined is undecidable~\citep{ellison2012executable}).
\Cref{example:tensor-decomposition} is worth highlighting here: computing the rank of $T$ is \NP{}-hard, but if $\mathit{rank}(T)$ is known, the decomposition is computable in time polynomial in the size of the tensor.
The runtime of the test campaign can thus be dominated by the filtering stage.
These filters also need to be provided separately, and when the filter is complex, it can be an additional source of complexity and bugs in the testing system.

Moreover, test case throughput is an important factor for the effectiveness of a testing campaign~\citep{marcozzi2019compiler, bohme2020fuzzing}.
Spending significant time deciding the validity of inputs can often mean fewer or missed bugs.
If a precondition-satisfying input is produced rarely by a generator, for instance, using a generator of random ASTs (or even random strings) for \Cref{example:c-compiler}, then it can be the case that no tests are ever run, due to the low probability of finding a satisfying input!
Indeed, if the generator produces precondition-satisfying inputs with probability $p$, then one expects $1/p$ attempts to find a usable input---and $1/p$ can easily scale exponentially with the size of the generated test case. 
In the \Cref{example:c-compiler} case, a random AST would have vanishingly low probability of satisfying the language specification~\citep{cppstandard1998}, and a random string even more so.

\paragraph{What distributions should generators implement?}

A generator implements a particular distribution over the set of inputs, and in practice impacts the effectiveness of a testing campaign. 
Randomised testing employs a wide range of distributions, which can also be adjusted as testing progresses~\cite{padhye2019semantic, groce2012swarm}. 
However, there is no canonical distribution around which to build a complexity theory.
The uniform distribution (on elements of a given size) is commonly studied theoretically, but fail to capture the diversity of distributions used in practice. 
Indeed, in the design of generators, the distribution implemented is often an orthogonal concern to samplability itself.

Therefore, as a first step towards a complexity-theoretic model of generators, we remain agnostic to distributions, and aim to gain fundamental understanding of \emph{which} sets can be generated at all.
This serves as a prerequisite to the harder question of generability under \emph{prescribed} distributions, and also ensures generality of our results.
The ``no-go'' results we derive, a type of result common in complexity theory, apply to generators implementing \emph{any} computable distribution, rather than being contingent on distributional assumptions. 
The only requirement we impose is \emph{distributional support}~\cite{paraskevopoulou2015foundational}: the support of the implemented
distribution must equal the whole set.

One natural concern is that the weak assumption of only distributional support may be overly permissive, since it allows distributions that assign negligible weight to parts of the set.
However, distributional support is a stronger guarantee than it may first appear---once an input is producible, however improbably, practical techniques can often amplify the probability of producing desirable outcomes.
For example, the bitstring input to a generator can be viewed as an optimisation parameter, and this insight was used in previous works~\citep{padhye2019semantic, goldstein2022parsing, maciver2019hypothesis} to tune generators, producing outputs that would have otherwise been infeasible due to their low probability weights.

\subsection{Unconstrained Generability}
In light of the insights from the above discussion, we give the following definition for the language associated with a generator.

\begin{definition}[Generated Language]\label{def:generated-language}
For a generator $G$, the \emph{language generated by $G$} is:
$$\Lang{G} = \{ x \mid \exists \; b \in \{0, 1\}^+ \,.\, G(b) = x \}$$
\end{definition}

The following is the most general class of languages that is generable.
The only requirement is that there \emph{exists} an algorithm (generator) to generate the elements.

\begin{definition}[Generable Languages]
  For an alphabet $\Sigma$, we define the set of \emph{generable languages}, $\GEN{}$, as follows:
  
  $$\GEN{} = \{ \Lang{G} : G \textrm{ is a generator}\}$$

  A language $L$ is \emph{generable} if $L \in \GEN{}$.
\end{definition}

For a language $L$, we have defined our generators to associate with $L$ a surjective computable function $\{0, 1\}^* \rightarrow L$ that constructs instances of $L$ from randomness.
On the other hand, one of the most commonly studied objects in complexity theory is Turing recognisers, which implement a (potentially partial) predicate for some language $L$.
Turing recognisers associate a language $L$ with a function of type $\Sigma^* \rightarrow \mathbb{B}$, which accepts $x \in \Sigma^*$ iff $x \in L$.

The standard class of $\RE{}$, or recursively enumerable languages, is the set of all languages whose elements can be recognised by a Turing machine (predicate).

\begin{definition}[Recursively Enumerable Languages]\label{def:re}
For a given alphabet $\Sigma$, the recursively enumerable languages are:

$$
\RE{} = \{ L \subseteq \Sigma^* :\;
  \exists\;\text{a Turing machine } M.\; \forall x \in \Sigma^*, \quad x \in L \Leftrightarrow M\ \text{halts and accepts } x
\}
$$  
\end{definition}

In \Cref{def:re}, $M$ refers to a standard Turing machine that accepts/rejects an input, rather than the transducer style within \Cref{def:transducer}.
$M$ can be seen as a membership predicate for some set $S$, where $M(x)$ accepts if and only if $x \in S$ (but importantly may loop instead of rejecting if $x \notin S$).
The $\RE{}$ sets/languages with a membership predicate algorithm that always halts are called \emph{decidable}, %
and the complement of the decidable languages in $\mathcal{P}(\Sigma^*)$ is called \emph{undecidable}.

The well-definedness of a C-program ($\mathit{wd}$ in \Cref{example:c-compiler}) is an undecidable problem, making it a ``hard'' member of $\RE{}$.
As noted earlier, the natural approach of rejection sampling cannot be used within any generator of well-defined C-programs (programs $x$ satisfying $\mathit{wd}(x)$), since generators must terminate on any finite input, and the test for a string $x$ satisfying $\mathit{wd}(x)$ may never terminate.
The inability to use rejection sampling of course extends to all $\RE{}$ languages.

This raises the natural question: is it actually possible to write a generator for well-defined C-programs, or indeed any undecidable set?
The question is not limited to a single generation algorithm that one may write: any conceivable procedure, or combinations of procedures whose set of outputs equals the set of well-defined C-programs would be a generator. 

We now show that in the resource unconstrained case, the generable languages are precisely the recursive enumerable languages.
This is consistent with the known fact that standard transducers have images equalling the $\RE{}$ languages---the additional constraints of a generator do not sacrifice any expressive power.
However, for testing this already leads to interesting conclusions, which we discuss in more detail later in the section: programs such as compilers often do deal with undecideability, and this places fundamental constraints on the types of properties that can be fully tested on such programs. 
Later in \Cref{sec:polytime-generators} and \Cref{sec:space-generators}, we will see that additional resource (time or space) constraints lead to situations where the generator and their corresponding decision complexity classes do not coincide.

\begin{theorem}[Generability is Recursive Enumerability]\label{prop:re-equals-gen}
$\RE{} = \GEN{}$.
\end{theorem}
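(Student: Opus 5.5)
We prove the two inclusions separately.

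\emph{$\GEN{} \subseteq \RE{}$.} Let $G$ be a generator with $L = \Lang{G}$. If $A_G = \emptyset$ then $L = \emptyset \in \RE{}$. Otherwise we build a recogniser $M$ that, on input $x$, dovetails over all finite bitstrings $b \in \{0,1\}^+$, for instance in length-lexicographic order. For each $b$ it runs $G(b)$ and accepts if $G(b)$ accepts with output $x$. Since $G$ is halting, every run $G(b)$ terminates, so no dovetailing over time steps is needed. $M$ simply enumerates the bitstrings one after another. If $x \in L$, some $b$ witnesses it and $M$ eventually accepts. If $x \notin L$, $M$ never accepts, and it may run forever, which \Cref{def:re} allows.

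\emph{$\RE{} \subseteq \GEN{}$.} Let $M$ recognise $L$. If $L = \emptyset$, take the transducer that immediately tries to read past its input. Concretely, it reads bits forever until it hits a blank and rejects. It halts on every finite input and rejects only on blank reads, so $A_G = \emptyset$ and $\mu = 0$, and it is a generator for $\emptyset$. If $L \neq \emptyset$, fix some $x_0 \in L$ and hard-code it into the generator $G$. Using the procedure of \Cref{def:naturalencoding}, $G$ samples a natural number that it interprets as a pair $(x, t)$ via a computable bijection $\mathbb{N} \to \Sigma^* \times \mathbb{N}$. It then simulates $M(x)$ for $t$ steps. If $M$ accepts within $t$ steps, $G$ outputs $x$; otherwise it outputs $x_0$.

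The output side is straightforward. Every output lies in $L$. Every $x \in L$ is accepted by $M$ in some finite number of steps $t$, and the pair $(x,t)$ has positive probability of being sampled, so $\Lang{G} = L$.

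The main thing to check is that $G$ satisfies the generator definition. This is the reason for the fallback to $x_0$ rather than rejection sampling: a bounded simulation always terminates, so all post-sampling work is total. The only input reads happen inside the $N$ procedure, so $G$ halts on every finite bitstring and rejects exactly when it runs out of input, which matches \Cref{def:halting-transducer}. The non-accepted streams of $G$ are exactly those of $N$, namely the set $X$, which has measure $0$. Hence $\mu(A_G^\omega) = 1$ and $G$ is a generator.

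Output symbols are written only after the simulation finishes, so the write-once-only output tape causes no difficulty. The only subtle point is making sure $G$ never rejects other than at input exhaustion, and the hard-coded fallback $x_0$ handles this. With the construction set up this way, I expect no real obstacle.
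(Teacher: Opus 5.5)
Your proof is correct and follows the paper's argument. For $\GEN{} \subseteq \RE{}$ you enumerate random seeds and compare outputs, and for $\RE{} \subseteq \GEN{}$ you sample a candidate together with a step bound, run a bounded simulation, and fall back to a fixed $x_0 \in L$. The differences are cosmetic: you hard-code $x_0$ where the paper computes it by dovetailing, you sample $(x,t)$ through a single pairing bijection rather than sampling $n$, $m$ and a length-$m$ string, and you treat the empty language explicitly, which the paper folds into its ``finite'' case.
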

\begin{proof}
We defer the full proof to \Cref{proof:re-equals-gen}.
Intuitively, $\GEN{} \subseteq \RE{}$ is due to the $\RE{}$ machine being able to enumerate through all finite bitstrings (as random seeds $b$) to check if the generator can produce some $x$.
$\RE{} \subseteq \GEN{}$ is due to the generator being able to perform dovetailing to find a single accepted string by the recogniser, which can then be used as fallback for a nondeterministic guess of the membership certificate for some randomly chosen $x$.
\end{proof}

\takeaway{When testing a property for which the set of acceptable inputs is recursively enumerable, \Cref{prop:re-equals-gen} implies that a generator for the complete input format exists---i.e., ambition in generator-writing is at least theoretically justified.
However, properties whose preconditions are not recursively enumerable cannot be fully tested by any generator, no matter how it is engineered.}

For instance, it is possible to write a generator that will output, with non-zero probability, all instances of seemingly very hard problems, such as:
\begin{definition}\label{def:halting-language}
The language of halting computations on Turing machines:
$$ \textit{HALT} = \{ \langle M, w \rangle : \textrm{Turing machine}\ M\ \text{halts on input}\ w \} $$
\end{definition}

The equality also yields a class of non-generable languages---ones that lie outside of \RE{}.
\begin{corollary}
All languages outside of $\RE{}$ are not generable.
\end{corollary}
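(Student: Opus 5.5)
The corollary follows directly from \Cref{prop:re-equals-gen}, and I would prove it by contraposition using only the inclusion $\GEN{} \subseteq \RE{}$. Suppose $L \notin \RE{}$ but $L$ is generable. Then $L = \Lang{G}$ for some generator $G$, so $L \in \GEN{}$. By \Cref{prop:re-equals-gen}, $L \in \RE{}$, which is a contradiction. Hence no language outside $\RE{}$ is generable.

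For a self-contained presentation, I would recall the intuition for the direction actually used. Given a generator $G$, a recogniser for $\Lang{G}$ on input $x$ enumerates all finite bitstrings $b \in \{0,1\}^+$ in length-lexicographic order. For each $b$ it simulates $G(b)$, and it accepts if $G(b)$ accepts with output $x$.

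Two points make this recogniser correct:
\begin{itemize}
\item Each individual simulation terminates, because $G$ is halting (\Cref{def:halting-transducer}). So no dovetailing is needed.
\item The recogniser accepts exactly when some seed produces $x$, which is precisely the condition $x \in \Lang{G}$ from \Cref{def:generated-language}.
\end{itemize}

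There is essentially no obstacle here, since all the work is contained in \Cref{prop:re-equals-gen}. The only care needed is the alphabet convention. The languages in $\RE{}$ and the outputs of generators should be taken over the same output alphabet $\Gamma$, so that ``outside of $\RE{}$'' is interpreted relative to that alphabet. With this convention the corollary can be stated immediately, for example yielding that $\mathbf{NonTerm}$, or the complement of \textit{HALT}, is not generable.
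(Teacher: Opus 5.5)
Your proposal is correct and follows the paper: the corollary is read off directly from \Cref{prop:re-equals-gen} via the inclusion $\GEN{} \subseteq \RE{}$. The recogniser you recall is the same construction as the paper's appendix proof of that inclusion: enumerate seeds and simulate the halting generator on each, which needs no dovetailing.
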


Non-$\RE{}$ languages are a large set, and one way to obtain languages in it is via complementing languages known to be in \RE{}.
A language $L$ is undecidable if there is no Turing machine that halts on all inputs and accepts exactly the strings in $L$.
Rice's theorem~\cite{rice1953classes} is a well-known result that states all (non-trivial) semantic properties on Turing machines are undecidable.
Hence, at least one of the language itself, or its complement, is non-\RE{}.
The result also extends from Turing machines to Turing-complete programming languages, and we will quote this version.

For a Turing-complete programming language $K$, let $P_K$ be the set of
encodings of well-defined programs written in $K$, and for $M \in P_K$ let
$\Lang{M}$ be the set of strings $w$ such that $M(w)$ executes without
rejection (crashes or errors).
A \emph{semantic property} is a set $S \subseteq P_K$ that depends only on
program semantics: if $\Lang{M} = \Lang{N}$ then $M \in S \iff N \in S$.
It is \emph{non-trivial} if $S \neq \emptyset$ and $S \neq P_K$, and its
complement is $S^c = P_K \setminus S$.
Rice's theorem~\citep{rice1953classes} implies every non-trivial semantic
property is undecidable, yielding the following:

\begin{corollary}
Let $S \subseteq P_K$ be a non-trivial semantic property over programs written in a Turing complete language $K$.
Then at least one of $S$ or $S^c$ is not in $\GEN{}$.
\end{corollary}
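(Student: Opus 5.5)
The plan is to argue by contradiction, combining \Cref{prop:re-equals-gen} with Rice's theorem in the form quoted above. Suppose both $S$ and $S^c$ lie in $\GEN{}$. By \Cref{prop:re-equals-gen}, both $S$ and $S^c$ are then in $\RE{}$. Let $M_S$ and $M_{S^c}$ be the corresponding recognisers: each halts and accepts exactly on the members of its set, and may loop otherwise.

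Next I would show that $S$ is decidable relative to $P_K$. Given an input $x$, first check syntactically that $x$ is an encoding of a well-defined program in $P_K$; I am taking $P_K$ itself to be decidable, as is implicit when Rice's theorem is stated over $P_K$. Then run $M_S(x)$ and $M_{S^c}(x)$ in alternation, one step each. Since $x \in P_K = S \uplus S^c$, exactly one of the two machines eventually halts and accepts. Answer ``yes'' if it is $M_S$ and ``no'' if it is $M_{S^c}$. This procedure halts on every input and decides $S$. That is the classical fact that a set which is both $\RE{}$ and co-$\RE{}$ (relative to a decidable universe) is decidable.

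Finally, since $S$ is a non-trivial semantic property, Rice's theorem says $S$ is undecidable. This contradicts the previous step, so at least one of $S$ or $S^c$ lies outside $\RE{}$. By \Cref{prop:re-equals-gen} (or simply the preceding corollary), that set is not in $\GEN{}$.

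The only delicate point is the handling of the ambient universe $P_K$. The complement $S^c$ is taken within $P_K$, not within $\Sigma^*$, so the dovetailing argument needs a decidable membership test for $P_K$ to reject non-programs. If ``well-defined'' were meant semantically and $P_K$ were undecidable, I would instead restate Rice's theorem as undecidability of $S$ under the promise that inputs lie in $P_K$. The dovetailing procedure above still decides $S$ on that promise, so the contradiction goes through unchanged.
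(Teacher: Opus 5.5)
Your proposal is correct and follows the paper's argument: assume both $S$ and $S^c$ are generable, use \Cref{prop:re-equals-gen} to place both in $\RE{}$, conclude that $S$ is decidable, and contradict Rice's theorem. Your explicit dovetailing step and your remark that $P_K$ must be decidable (or that Rice's theorem must be read as a promise problem over $P_K$) spell out a detail the paper's one-line proof leaves implicit.
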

\begin{proof}
Suppose for a contradiction both $S$ and $S^c$ are generable, then both are also recursively enumerable by \Cref{prop:re-equals-gen}, thus $S$ is actually decidable. 
This contradicts Rice's Theorem.
\end{proof}

These results are particularly applicable to software that can come face-to-face with non-\RE{} languages, such as compilers.
In this context, it means that there are properties of the compiler implementation that can never be completely tested, due to the lack of a generator for the property.
This includes the use of rejection sampling. 
For example, in C++, termination is a property relevant for the well-definedness of programs, defined by the language specification.
One might be interested in building a generator to test the compiler (its inputs being programs) for the following: programs that do not terminate should nonetheless not crash the compiler.
However, although the terminating programs are generable, there are no generators of all non-terminating programs:
\begin{corollary}\label{prop:loop-not-generable}
For a Turing-complete language $K$, the set of non-Halting programs, $\textit{LOOP}$, is not generable, where
$$\textit{LOOP} = \{ M \in P_K : M \text{ takes no inputs and does not terminate} \}$$
\end{corollary}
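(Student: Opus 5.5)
By \Cref{prop:re-equals-gen}, it suffices to show that $\textit{LOOP} \notin \RE{}$. Suppose for contradiction that $\textit{LOOP}$ is recursively enumerable. Let $\textit{HALT}_K = \{ M \in P_K : M \text{ takes no inputs and terminates} \}$, the input-free halting problem for $K$. This language is in $\RE{}$, since we can simply simulate $M$ and accept if it halts. It is also generable by \Cref{prop:re-equals-gen}, consistent with the remark preceding the corollary.

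Next, we use the complementation argument from the previous corollary. Let $Q = \{ M \in P_K : M \text{ takes no inputs}\}$. Checking membership in $Q$ is a decidable syntactic condition. We implicitly assume here, as the paper does, that $P_K$ is decidable, or more generally at least that $Q$ is; this is the standard idealisation for a Turing-complete language. Then $\textit{HALT}_K$ and $\textit{LOOP}$ partition $Q$. Given $M$, we first check $M \in Q$, rejecting otherwise, and then run the two recognisers for $\textit{HALT}_K$ and $\textit{LOOP}$ in parallel, alternating steps. Exactly one of them halts and accepts. This decides $\textit{HALT}_K$.

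The remaining step, which I expect to be the only point needing care, is to show that $\textit{HALT}_K$ is undecidable for input-free programs. I would reduce from the classical $\textit{HALT}$ of \Cref{def:halting-language}. Given $\langle M, w\rangle$, Turing-completeness of $K$ lets us compute a $K$-program $M_w$ that takes no input, hardcodes $w$, and simulates $M$ on $w$. Then $M_w$ terminates iff $M$ halts on $w$. The map $\langle M,w\rangle \mapsto M_w$ is computable, and it produces well-defined programs in $Q$. So a decider for $\textit{HALT}_K$ would decide $\textit{HALT}$, which is a contradiction.

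Alternatively, one can invoke Rice's theorem as quoted above, applied to the semantic property ``halts on the empty input'' restricted to input-free programs. Either way, $\textit{LOOP} \notin \RE{}$, and hence $\textit{LOOP} \notin \GEN{}$.
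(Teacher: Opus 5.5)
Your argument is correct and is essentially the paper's own. The corollary is stated without a separate proof and is meant to follow from the preceding Rice-based corollary: the input-free terminating programs are \RE{}, hence generable by \Cref{prop:re-equals-gen}, so if $\textit{LOOP}$ were generable as well, input-free halting would become decidable. Your restriction to the set $Q$ of input-free programs and your explicit reduction from $\textit{HALT}$ make that step precise, and the same map $\langle M,w\rangle \mapsto M_w$ already gives $\overline{\textit{HALT}} \le_m \textit{LOOP}$ directly, so you do not even need the decidability assumption on $Q$.
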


\takeaway{You cannot fully test compiler properties with non-\RE{} preconditions, e.g., the behaviour on non-terminating programs (\Cref{prop:loop-not-generable}).
Any generator of a producible subset is testing a \emph{different, weaker} property---a trade-off that should be made explicitly, not discovered accidentally.}

Of course, the existence of generators is only one of the concerns of PBT practitioners.
We now turn to another important consideration: their \emph{efficiency}.

\section{Generator Families}\label{sec:generator-families}
We cannot wait all day for a generator to produce an output for an SUT: an effective generator needs to rapidly produce a large corpus of inputs for the SUT. 
Thus, as with algorithms, it is helpful to associate resource constraints with generators. 
But, unlike algorithms, generators do not have a natural input parameter on which the input size can be defined, and hence there also lacks a definition of resource constraint, which are based on the input size.

In practice, PBT libraries like QuickCheck indeed include an optional \emph{size} parameter, which a generator implementation can use to control the sizes of the generated objects.

We introduce a similar parameter in our model, which allows a priori specification of the output size---from which we can then define the notion of resource usage for generators.
Adding these constraints would allow our model to be applicable to generators that use the size parameter.

The following aims to capture the notion of parameterised generators, namely a family of generators that each generates a portion of the intended language (instances of a particular size), and where the union of their generated languages is the whole language.

\ifarxiv
\begin{definition}[Implementation of a Family of TMs]
Let $(M_i)_{i \in C}$ be a family of TMs indexed by strings $C \subseteq \{0, 1\}^*$.
We say that $M$ implements $(M_i)_{i \in C}$, written $M = (M_i)_{i \in C}$, if $(M_i)_{i \in C}$ is a uniform family as follows:
\begin{itemize}
    \item $M$ has an extra read-only input tape compared to each $M_i$.
    \item For all $i \in C$, $M(i) = M_i$. 
    That is, with $i$ on the input tape of $M$, $M$ behaves exactly as $M_i$. 
\end{itemize}
\end{definition}

\begin{definition}[Family of Generators]\label{def:size-dep-generator}
Let $L$ be a language, and $(L_i)_{i \in C}$ a partition of $L$ indexed by the set $C \subseteq \{0, 1\}^*$. 
$G = (G_i)_{i \in C}$ is a generator for $L$ if for each $i \in C, \Lang{G_i} = L_i$. 

The language generated by $G$, $\Lang{G}$ is defined as
$\Lang{G} = \bigcup_{i \in \mathbb{N}} \Lang{G_i}$.
\end{definition}

For the special case where $C = \{1^i : i \in \mathbb{N}\}$ is used to partition a set $L$ into $L_i = \{w \in L : |w| = i\}$, we will also denote $C$ as $\mathbb{N}$, and $L_{1^n}$ as $L_n$, for brevity.
This will be used when we consider length-based partitions of a language $L$ in the following sections.
\else
\begin{definition}[Implementation of a Family of TMs]
Let $(M_i)_{i \in \mathbb{N}}$ be a family of TMs.
We say that $M$ implements $(M_i)_{i \in \mathbb{N}}$, written $M = (M_i)_{i \in \mathbb{N}}$, if $(M_i)_{i \in \mathbb{N}}$ is a uniform family:
\begin{itemize}
    \item $M$ has an extra read-only input tape compared to each $M_i$.
    \item For all $i \in \mathbb{N}$, $M(i) = M_i$. 
    That is, with $i$ on the input tape of $M$, $M$ behaves exactly as $M_i$. 
\end{itemize}
\end{definition}

\begin{definition}[Family of Generators]\label{def:size-dep-generator}
The family of TMs $G = (G_i)_{i \in \mathbb{N}}$ is a family of generators if, for all $i \in \mathbb{N}$, $G_i$ is a generator that outputs strings of length $i$. 
Write $\Lang{G}$ to denote the language generated by $G$, where $\Lang{G} = \bigcup_{i \in \mathbb{N}} \Lang{G_i}$.

% Let $L$ be a language, and $(L_i)_{i \in \mathbb{N}}$ a partition of $L$ where $L_i = \{ x \in L : |x| = i \}$. 
\end{definition}
\fi

Using this, we can define the generable languages under space and time constraints.

\begin{definition}\label{defn:time-bounded-generator}
For a time constructible function $t : \mathbb{N} \rightarrow \mathbb{N}$, the generator family $G = (G_i)_{i \in \mathbb{N}}$ is time bounded by $t$ if for all $i \in \mathbb{N}$, $G_i$ terminates within time $t(i)$.

Define the class of languages $\GTIME(t(n))$ generable by some $t$-time bounded generators as:
$$
\GTIME(t(n)) = \{ \Lang{G} : \forall i. G_i\text{ is time bounded by $t(i)$} \}
$$
\end{definition}

\begin{definition}\label{defn:space-bounded-generator}
For a space constructible function $s : \mathbb{N} \rightarrow \mathbb{N}$, the generator family $G = (G_i)_{i \in \mathbb{N}}$ is space bounded by $s$ if for all $i \in \mathbb{N}$, $G_i$ uses no more than $s(i)$ cells on its work tape.

Define the class of languages $\GSPACE(s(n))$ generable by some $s$-space bounded generators as:
$$
\GSPACE(s(n)) = \{ \Lang{G} : \forall i. G_i\text{ is space bounded by $s(i)$} \}
$$
\end{definition}

The following is immediate for both time and space bounded generators:
\begin{proposition}\label{prop:subset-nondet}
Suppose $s : \mathbb{N} \rightarrow \mathbb{N}$ is space-constructible with $s(i) > \logr(i)$, and $t : \mathbb{N} \rightarrow \mathbb{N}$ is time-constructible with $t(i) > i$, then:
$$
    \GSPACE(s(n)) \subseteq \NSPACE(s(n)) \qquad
    \GTIME(t(n)) \subseteq \NTIME(t(n))
$$
\end{proposition}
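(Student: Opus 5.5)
Given a family $G = (G_i)_{i\in\mathbb{N}}$ witnessing $L = \Lang{G} \in \GSPACE(s(n))$ (resp.\ $\GTIME(t(n))$), I will build a nondeterministic recogniser $N$ that guesses the random bits and simulates the generator. On input $x$ with $|x| = n$, $N$ runs the uniform implementation $M$ of the family with index $1^n$. Since $\Lang{G_n}$ consists exactly of the elements of $L$ of length $n$, only $G_n$ can produce $x$. Whenever $G_n$ makes a random selection in the sense of \Cref{def:rand-selection}, $N$ nondeterministically branches on the bit. So a computation path of $N$ corresponds exactly to a finite bitstring $b$ read by $G_n$.

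The output is not written anywhere. Each time $G_n$ writes a symbol to its write-once-only output tape, $N$ compares it with the next symbol of $x$, tracked by a pointer of $O(\log n)$ bits. $N$ rejects on a mismatch, or if more than $n$ symbols are written. $N$ accepts iff $G_n$ accepts and exactly $n$ symbols have been matched. Because the output tape is write-once and the head only moves right, the output equals $x$ iff every written symbol matches in order and the count is $n$.

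Correctness then follows. If $x \in L$, then $x \in \Lang{G_n}$, so some $b$ with $G_n(b) = x$ exists, and the branch guessing $b$ accepts. Conversely, an accepting branch exhibits a bitstring $b$ with $G_n(b) = x$, so $x \in L$. Halting of every branch is inherited: $G_n$ is halting, hence halts on every finite bitstring. It also halts within $t(n)$ in the time-bounded case, so $N$ may additionally cut off any branch after $t(n)$ steps using time-constructibility.

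For resources, the time case is direct. $N$ simulates $G_n$ step for step with constant overhead, plus the comparison against the input tape. The input head moves monotonically alongside the output writes, so the total is $O(t(n))$. The assumption $t(n) > n$ covers reading $x$, and linear speed-up absorbs the constants. In the space case, $N$ uses the $s(n)$ work cells of $G_n$ plus an $O(\log n)$ counter/pointer into $x$. The assumption $s(n) > \log n$ absorbs this into $O(s(n))$, and the output tape costs nothing since it is never stored.

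The main obstacle is the uniformity bookkeeping. $M$ receives its index on an extra tape, whereas $N$ has only $x$. I would avoid writing $1^n$ explicitly by simulating reads of the index tape through the input head position on $x$. Alternatively, I would just note that the index is implicit in $|x|$, which costs only an $O(\log n)$ counter. I would also need to argue that the space of $M$ on the index tape is not counted, since that tape is read-only like a standard input tape.
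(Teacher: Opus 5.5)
Your construction is the paper's: guess each random bit nondeterministically, simulate $G_n$, and accept iff the run accepts with output $x$. Comparing output symbols on the fly against the input is a cleaner version of the paper's remark about reusing space, and it avoids the $\log n$ factor in the paper's time estimate.

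The step that fails is your claim that a computation path of $N$ ``corresponds exactly to a finite bitstring'' and therefore halts. The halting property of $G_n$ (\Cref{def:halting-transducer}) concerns finite bitstrings, where a run that keeps asking for bits is stopped by reading a blank. Your $N$ supplies bits on demand indefinitely, so its paths follow infinite streams, and $G_n$ can diverge on streams. For instance, if $L_n = \emptyset$ then $G_n$ rejects every finite bitstring, which it can only do by running off the end of its input. So on inputs of length $n$, every path of $N$ that is not killed by a mismatch runs forever. The natural-number generator of \Cref{def:naturalencoding} (every continuation bit guessed as $1$) and the waiting generator of \Cref{example:logspace-waiting} give infinite paths even for nonempty languages. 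A space bound does not bound running time here. Your time-bounded case is fine, because your $t(n)$ clock cuts every path. The paper's short proof is also silent on this point.

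Whether this matters depends on your definition of \NSPACE{}. If it only asks for an accepting path for members and the space bound on every path, simply drop the halting claim. If, as in Sipser, all branches must halt, you need the standard clock. Because $s(n) \geq \log n$, $N$ has $2^{O(s(n))}$ configurations: state, $O(s(n))$ work cells and head position, input-head position, simulated index-head position, and comparison pointer. If an accepting path repeats a configuration, the loop can be cut out. Since the comparison pointer is part of the configuration, the shortened path still outputs exactly $x$. So $N$ can keep an $O(s(n))$-bit step counter, laid out using space-constructibility of $s$, and reject once it passes $2^{c\,s(n)}$ steps. This is a natural use of the space-constructibility hypothesis, which your argument otherwise never invokes.

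A smaller point about the index tape in the time case: your counter-based tricks add overhead per simulated step, either head travel on $x$ or $O(\log n)$-bit counter arithmetic, so they do not obviously give $O(t(n))$. Instead, copy $1^n$ onto a work tape first, at cost $O(n) \subseteq O(t(n))$, and keep the input head for the monotone comparison.
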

\begin{proof}
By simulation.
For a full proof, see \Cref{proof:subset-nondet}.
\end{proof}

\subsection{Alternative Notions of Size}
In our formalisation, a generator $G$ for a language $L$ will produce, on input $n$, a random element $x \in L$ of length \emph{exactly} $n$.
We discuss here some alternative possible formalisations of instance size, inspired by testing practice.

\subsubsection{On generating elements of size $|x| \leq n$}
Some generators in practice produce, given an $n \in \mathbb{N}$, elements of size $|x| \leq n$ rather than $|x| = n$.
These generators can fit within our framework by observing that such generator $G$ of the language $L$ can be mapped to an equivalent generator of the following language:
$$L_{\mathit{pad}} = \{ x\#0^i : x \in L \land i \in \mathbb{N} \}$$
where $0^i$ denotes $i$ repetitions of the reserved element $0$, and $\#$ is a dedicated separator element.
If $L_{\mathit{pad}}$ is generable by the $|x| = n$ scheme (for a given $n$, produce elements of length exactly $n$), then $L$ is generable by the $|x| \leq n$ scheme (for a given $n$, produce elements of length at most $n$).
Moreover, observe that the decision complexity of $L_{pad}$ is the same as $L$, hence the later no-go results from connections to decision complexity classes also applies for $|x| \leq n$ generation schemes.

\subsubsection{Other parameterisations of size}
Apart from string length, there can also be other concievable notions of size for a language $L$, in forms of some computable function $sz : L \rightarrow \mathbb{N}$.
For instance, for CNF formulas, the number of clauses or variables can both be useful notions of size.

However, it is difficult to define notions of complexity on metrics like the clauses or variables count.
This is because formulas with a fixed number of clauses or variables can, in general, describe formulas of arbitrary size in terms of their string representations.
Grouping these together in a complexity-theoretic formalisation would mean assigning the same measure of complexity to all such elements---yet the arbitrary lengths of their string representations would genuinely take vastly different (particularly time) resources to produce.

For $sz$ based on the number of clauses or variables, the string representations of $x$ with $sz(x) = m$ has unbounded ratios: $\forall n \in \mathbb{N}. \exists m \in \mathbb{N}. \exists x \in L_m. \frac{|x|}{m} > n$, where $L_m  = \{x \in \Sigma^* : sz(x) = m\}$.
Alternatively, consider another $sz$ where $sz(x)$ is within some range of $|x|$, e.g.
$$(1-\delta)|x|\leq sz(x) \leq (1 + \delta)|x|$$
In these cases, we can recover the fixed-length paradigm by padding all elements with $sz(x) = m$ to a string of length exactly $(1 + \delta)m$: the $sz$-$m$ elements are generable in the $sz$-based paradigm if and only if length $(1+\delta)m$ elements are generable in the fixed-length paradigm.

\section{Polynomial-Time Generators}\label{sec:polytime-generators}
In this section, we discuss generators that run in polynomial time, which corresponds to the class of realistically implementable generators.

We note that prior work~\citep{sanchis1990complexity, sanchis1990efficient} also considered polynomial-time generators, and actually predates PBT~\citep{claessen2000quickcheck} itself.
In the course of establishing a number of new results, we will also walk through some previously known results (\Cref{prop:pg-np}, \Cref{prop:cnf-sat-pg}, \Cref{prop:pg-oracle-superset-np}) within our framework to highlight the links between modern-day PBT and complexity theory.

An equivalent version of the following definition was first introduced in Definition 3.1 of~\citep{sanchis1990complexity}.
\begin{definition}[Polynomial-Time Generators]\label{def:size-poly-generator}
The generator family $G = (G_i)_{i \in \mathbb{N}}$ runs in polynomial time if there exists a polynomial $p$ such that for all $i \in \mathbb{N}$, $G_i$ terminates in time $p(i)$. 
\end{definition}

\begin{remark}
We do not need the empty set clause of Definition 3.1 in~\citep{sanchis1990complexity}, because we define the generator of an empty set as one that rejects all inputs, and we can use its rejection within $p(i)$ steps (on a length $p(i)$ input bitstring) to determine the emptiness of $\Lang{G_i}$.
On the other hand, if $\Lang{G_i} \neq \emptyset$, then it can never reject when provided with a length $p(i)$ input bitstring.
\end{remark}

\begin{definition}
The class of languages generable by some polynomial-time generator is:
$$ \PG{} = \bigcup_{i \in \mathbb{N}} \GTIME(n^i) $$
\end{definition}

Generators in randomised testing often use \emph{rejection sampling} to produce generators for properties that are difficult to write explicit generators for. 
% and is also an easy way to write generators that satisfy any decidable precondition.
However, this approach trades ease in development for runtime during generation.
If a property $t$ is satisfied by the generator with probability $p$, then the expected samples (time) to find such an instance is $1/p$---and this can easily be exponential if $t$ occurs rarely enough.
The polynomial time constraint implies that for realistic generators, the use of such rejection sampling is only possible in controlled cases where $1/p$ is at most a polynomial.

We will first see that the polynomial time generable languages are wholly contained within $\NP$.
% Note that this observation was first made in~\citep{sanchis1990complexity}.
\begin{corollary}\label{prop:pg-np}
$\PG{} \subseteq \NP{}$.
\end{corollary}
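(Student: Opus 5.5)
This is labelled a corollary, so the plan is to derive it directly from \Cref{prop:subset-nondet}. Take $L \in \PG{}$. By definition $L \in \GTIME(n^k)$ for some $k$, witnessed by a generator family $G = (G_i)_{i \in \mathbb{N}}$ in which each $G_i$ terminates within $n^k$ steps on outputs of length $n = i$. Since $\PG{} = \bigcup_k \GTIME(n^k)$ and $\NP{} = \bigcup_k \NTIME(n^k)$, it suffices to show $\GTIME(n^k) \subseteq \NTIME(n^k)$ for each $k$. \Cref{prop:subset-nondet} gives exactly this whenever $t(n) = n^k$ is time-constructible with $t(n) > n$, which holds for $k \geq 2$. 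For $k \in \{0,1\}$ we use monotonicity: $\GTIME(n^k) \subseteq \GTIME(n^2)$, because a generator bounded by $n^k$ is also bounded by $n^2$ (up to a finite number of small $n$, which can be absorbed into the constant or handled by a lookup table).

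To keep the note self-contained, we also sketch the simulation behind \Cref{prop:subset-nondet}, since that is where the content lies. On input $x$ with $|x| = n$, a nondeterministic machine $N$ does the following:
\begin{enumerate}
\item It writes $1^n$ as the index for the uniform implementation $G$ of the family.
\item It simulates $G_n$ for at most $n^k$ steps. Each time $G_n$ makes a random selection in the sense of \Cref{def:rand-selection}, $N$ guesses that bit nondeterministically.
\item It compares each output symbol of $G_n$ against the corresponding symbol of $x$ as it is written, rejecting on any mismatch.
\item It accepts iff $G_n$ accepts and the output equals $x$ exactly.
\end{enumerate}

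Correctness is as follows. If $x \in \Lang{G_n}$ then some $b$ has $G_n(b) = x$. By \Cref{obs:monotone} we may take $b \in A_{G_n}$, and then $|b| \leq n^k$, since a run of length at most $n^k$ reads at most $n^k$ bits. So some guess sequence leads $N$ to accept. Conversely, any accepting branch of $N$ exhibits a bitstring $b$ with $G_n(b) = x$. Each simulated step costs $O(1)$ time, or polynomial overhead for the tape-count reduction, so $N$ runs in polynomial time.

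The only potential obstacle is bookkeeping. We must ensure the output-tape comparison can be done online within the time bound, and that the uniform-family implementation $G$ really runs $G_n$ within $p(n)$ steps with the index $1^n$ on its extra tape, which is what \Cref{def:size-poly-generator} provides. Any polynomial overhead from the simulation, such as multi-tape to single-tape conversion, still lands in $\NP{}$, so the exact constants in \Cref{prop:subset-nondet} do not matter for this corollary.
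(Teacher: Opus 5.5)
Your proposal is correct and is essentially the paper's argument. The paper derives the corollary directly from \Cref{prop:subset-nondet}, whose proof is exactly your simulation (branch nondeterministically on each random bit, accept iff the simulated $G_{|x|}$ accepts with output $x$); your handling of the side condition $t(n) > n$ for small exponents and of polynomial simulation overhead is bookkeeping the paper glosses over and does not change the route.
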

\begin{proof}
Consequence of \Cref{prop:subset-nondet}.
\end{proof}

\takeaway{\Cref{prop:pg-np} shows that \NP{} membership of the precondition is a necessary condition for efficient generation: if a precondition is \PSPACE{}-hard or \textit{coNP}-hard (e.g.\ quantified formulas, unsatisfiable formulas), there is no hope in seeking a fast, fully general generator---one must resort to rejection sampling or restrictions.}

\begin{example}
If $\NP{} \neq \PSPACE{}$, then the following $\PSPACE$-complete language is not generable in polynomial time:
$$ \textit{QBF} = \{ \phi : \phi\ \text{is a satisfiable quantified boolean formula} \} $$
where a quantified boolean formula is a propositional boolean formula with quantifiers.
\end{example}

\begin{example}
If $\NP{} \neq \textit{coNP}$, then the following $\mathit{coNP}$-complete language is not generable in polynomial time:
$$ \textit{TAUT} = \{ \phi : \phi\ \text{is true for all assignments} \} $$
\end{example}

The natural follow up question is whether the containment $\PG{} \subseteq \NP{}$ is strict.
We see that $\PG{}$ does contain some of the hardest problems in $\NP{}$, namely \textit{CNF-SAT}.

\begin{proposition}\label{prop:cnf-sat-pg}
$\text{CNF-SAT} \in \PG{}$.
\end{proposition}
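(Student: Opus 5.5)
We plan to prove this by \emph{witness planting}: rather than sample a formula and test it, we sample a satisfying assignment first and then build a random formula that it satisfies. We fix a string encoding of CNF formulas, for instance variables written as $x$ followed by a binary index, literals possibly prefixed by a negation symbol, clauses separated by a delimiter. We then describe a uniform family $G = (G_n)_{n \in \mathbb{N}}$ in which $G_n$ outputs exactly the satisfiable CNF formulas whose encoding has length $n$. The claim will be verified in three parts: $G_n$ is a generator in the sense of \Cref{def:generators}, $G_n$ runs in polynomial time, and $\Lang{G_n}$ is exactly the length-$n$ slice of CNF-SAT.

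The construction of $G_n$ uses a fixed budget of $p(n)$ random bits for a polynomial $p$, so that acceptance never depends on an unbounded loop. Any formula of length $n$ mentions at most $n$ variables, each with index below $2^n$. $G_n$ first reads $n$ bits to choose an assignment $\alpha$ to $x_0,\dots,x_{n-1}$. It then reads further bits to choose a candidate string $w \in \Sigma^n$, using $\lceil \log|\Sigma| \rceil$ bits per symbol and mapping out-of-range codes to a fixed symbol. It checks deterministically, in polynomial time, whether $w$ is a well-formed CNF encoding all of whose variables have index below $n$ and which $\alpha$ satisfies. If so, it outputs $w$.

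The main obstacle is the failure branch, because we may not loop without bound and a generator may not reject. We resolve it with a fallback: if the check fails, $G_n$ outputs a fixed satisfiable formula $f_n$ of length exactly $n$, computable in time polynomial in $n$. For example, $f_n$ can be a single clause $x_0 \vee x_0 \vee \dots$ padded to length $n$ by repeating literals or by choosing a variable index of suitable binary length. For the small values of $n$ where no satisfiable encoding of length $n$ exists, $G_n$ instead rejects every input, which \Cref{def:generators} permits as the empty-set generator; deciding which case applies takes constant work for those finitely many $n$. A minor point is that renaming variables could seem to lose formulas with large indices. The restriction to indices below $n$ loses nothing, however, since a length-$n$ formula with an index $\ge 2^n$ cannot exist, and the enumeration bounds are chosen so that every length-$n$ string is a candidate.

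With this construction, $G_n$ reads exactly $p(n)$ bits on every path. It accepts every bitstring of length at least $p(n)$ and rejects only shorter ones, so $\mu(A_{G_n}^\omega) = 1$ and the halting condition of \Cref{def:halting-transducer} holds. Its runtime is polynomial. For soundness, every output is either a $w$ satisfied by $\alpha$ or $f_n$, so every output is satisfiable. For completeness, any satisfiable $\phi$ of length $n$ has some satisfying $\alpha$, and the bitstring encoding $\alpha$ followed by the symbols of $\phi$ makes $G_n$ output $\phi$. Hence $\Lang{G} = \text{CNF-SAT}$, and therefore $\text{CNF-SAT} \in \PG{}$.
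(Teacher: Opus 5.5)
\textbf{Gap: the completeness claim fails as you have written it.} Your assignment $\alpha$ covers only $x_0,\dots,x_{n-1}$, and your check discards every $w$ that mentions an index $\ge n$. Your justification, that no length-$n$ formula has an index $\ge 2^n$, bounds indices by $2^n$, not by $n$, so it does not support this restriction. Under your own binary-index encoding, the single-literal formula $x\,1^{n-1}$ has length $n$ and is satisfiable. It mentions the variable of index $2^{n-1}-1 \ge n$ (for $n \ge 3$), so it fails your check, and every seed that draws this $w$ outputs $f_n$ instead. Of the many satisfiable length-$n$ formulas of this shape, at most one, the fallback itself, is ever produced. Hence $\Lang{G_n}$ is a proper subset of the length-$n$ slice of CNF-SAT. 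Your suggested $f_n$, a variable with a long binary index, is itself a formula your check would reject, which shows the inconsistency. You also cannot enlarge $\alpha$ to cover all indices below $2^n$, since that needs exponentially many bits.

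\textbf{The repair is easy.} Use the fact you already noted, that a length-$n$ formula has at most $n$ distinct variables. Read the $n$ assignment bits as values for the distinct variables of $w$ in order of first occurrence; this is the renaming you hinted at, applied inside the check rather than to the output. Soundness and the time bound are unchanged. For a satisfiable $\phi$ with satisfying assignment $\gamma$, the seed that selects $w=\phi$ and sets the bits to $\gamma$'s values on the variables of $\phi$ outputs $\phi$.

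\textbf{Comparison with the paper.} With the fix, your proof is correct but takes a different route. The paper plants the assignment clause by clause, forcing one true literal per clause and partitioning by clause count, literal count and variable-name length to hit the exact length. So every run is satisfiable by construction, with no verification and no fallback. Your argument is guess-and-verify with a polynomial-time default element, which is exactly the construction the paper later uses for \Cref{prop:default-elem-implies-pg-membership}. Your route is generic and avoids the length bookkeeping. Its drawback is that almost all probability mass goes to $f_n$, because a random string is rarely a well-formed CNF satisfied by an independent random assignment. That is harmless for \PG{} membership, where only support matters, but it makes your generator useless in practice, whereas planting yields non-trivial formulas with substantial probability.
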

\begin{proof}
We give a sketch of the key ideas, and defer full detail to \Cref{proof:cnf-sat-pg}.
Consider a generator that first samples a random assignment $\mathbf{x}$, then generates the formula $\phi$ based on $\mathbf{x}$ clause-by-clause.
For each clause, we pick a random literal to evaluate to \textit{True}, referring to the earlier chosen assignment $\mathbf{x}$.
The other literals in the clause are chosen at random.
By construction, as $\phi$ is in CNF, each clause of $\phi$ is satisfied by $\mathbf{x}$, and thus $\phi[\mathbf{x}]$ also evaluates to \textit{True}.

Any satisfiable \textit{CNF-SAT} formula $\psi$ has a chance of being sampled this way: if $\psi$ is satisfied by $\mathbf{y}$, then the above procedure has a chance of choosing $\mathbf{y}$ and then constructing $\psi$ around $\mathbf{y}$.
\end{proof}

\takeaway{A precondition that is hard to \emph{decide} can still be easy to \emph{generate}: you can fully test properties on a \#SAT solver (\Cref{example:counting-sat}) without ever solving SAT, for example by using the generator in \Cref{prop:cnf-sat-pg}.
Thus, testing difficulty cannot be inferred from decision difficulty.}

However, having an efficient generator for an $\NP{}$-hard language does not directly yield a way to sample from easier languages, due to the lack of natural reductions for generators (as far as we know). 
The reduction used in $\NP{}$-hardness embeds the easier language into the harder language.
For generators, we would like the opposite: to surjectively map from instances of the hard language to the easier one, which is what is needed to implement a generator for the easier language.

In fact, despite having a generator for an $\NP{}$-complete problem, we will show below an example of a language in $\NP{}$ (in fact, $P$) that is unlikely to have a polynomial-time generator.

\subsection{\PG{} vs. \NP{}}\label{subsec:pg-np}
We now introduce a concrete problem in \NP{} that is likely not in \PG{}, as its membership in \PG{} would contradict cryptographic assumptions that have withstood decades of scrutiny.

Following \cite{russell1995necessary}, we will base our construction on so-called collision-resistant hash functions.
\begin{definition}[Hash Functions \citep{russell1995necessary}]\label{def:hash-functions}
A hash function $h$ is a collection of functions $(h_n)_{n \in \mathbb{N}}$, $h_n : \{0, 1\}^{n+1} \rightarrow \{0, 1\}^{n}$, such that there exists a computable function $h : \mathbb{N} \times \{0, 1\}^+ \rightarrow \{0, 1\}^+$ with $h(n, x) = h_n(x)$ for all $n \in \mathbb{N}$, and $h$ runs in time polynomial in $n$.
\end{definition}

\begin{definition}[Collision-Resistant Hash Functions]\label{def:collision-resistance}
Hash function $h$ is considered \emph{collision-resistant} if there exists no polynomial-time bounded randomised algorithm $A : \mathbb{N} \rightarrow \{0, 1\}^* \times \{0, 1\}^*$ and polynomial $q$ such that for infinitely many $n \in \mathbb{N}$:

$$\mathbb{P}(A(n) = (x, y) \land x \neq y \land h_n(x) = h_n(y)) > \frac{1}{q(n)}$$
\end{definition}

If a randomised collision-finding algorithm of \Cref{def:collision-resistance} exists for the polynomial $p$-bounded hash function $h$, then one can find size-$n$ hash collisions of $h$ in expected polynomial time. 

\begin{remark}[Unkeyed hashes and uniform attacks]\label{remark:unkeyed-uniform}
In cryptographic literature~\citep{katz2007introduction}, \emph{keyed} hash families $(h_k)_{k \in K}$ are more commonly used, with security against non-uniform attacks.
We instead use \emph{unkeyed} hash functions (\Cref{def:hash-functions}) with security against \emph{uniform} attacks (\Cref{def:collision-resistance}).
Unkeyed hashes capture practical algorithms like SHA-256, but cannot resist non-uniform attacks: a hash collision is guaranteed to exist for each $n$ by the pigeonhole principle, and it is possible to hard-wire them into non-uniform circuits.
Uniform attacks~\citep{rogaway2006formalizing} are a formalisation of the security guarantees in practice: collisions exist, but there exist no efficient algorithms to find them.
As we shall see later, the uniformity of generators and the resistance of unkeyed hashes against uniform attacks are precisely what give us the non-generability result of \Cref{prop:collisions-not-generable}.
\end{remark}

Define the following language for a hash function $h$:
\begin{definition}
For hash function $h$:
$$
\textit{COLLISION}_h = \{(x, y) \in \{0, 1\}^+ \times \{0, 1\}^+:
    x \neq y \land |x| = |y| \land h_{|x|}(x) = h_{|x|}(y)\}
$$
\end{definition}

Clearly, for any hash function $h$ (collision-resistant or not), given $(x, y)$, a Turing machine $M$ can execute $h(x)$ and $h(y)$, and accept if they compute the same hash. 
Since $h$ executes in time polynomial in its input size, $M$ will terminate in polynomial time.
\begin{lemma}
For all hash functions $h$, $\textit{COLLISION}_h \in P$.
\end{lemma}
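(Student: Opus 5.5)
The plan is to exhibit an explicit polynomial-time decider $M$ for $\textit{COLLISION}_h$, directly following the informal argument preceding the statement.

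On input $w$, $M$ first parses $w$ as an encoding of a pair $(x,y)$ and rejects if the encoding is malformed. Parsing is a linear scan. Then $M$ checks $x \neq y$ and $|x| = |y|$, which takes linear time in $|w|$ by comparing the two strings symbol by symbol. At this point we can write $n+1 = |x| = |y|$, so that $x$ and $y$ lie in the domain of $h_n$. If $|x| = 1$ there is no $h_0$ mismatch to worry about beyond matching the indexing convention of the definition of the language; I would read the subscript $h_{|x|}$ so that the argument has the correct length $|x|$ for the domain $\{0,1\}^{|x|}$, and if $|x|$ is outside the domain of every $h_n$, reject.

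Next, $M$ simulates the uniform algorithm $h(n,\cdot)$ from \Cref{def:hash-functions} on $x$ and on $y$ and accepts iff the two outputs agree. By \Cref{def:hash-functions}, $h(n,x)$ runs in time $p(n)$ for some polynomial $p$. Since $n \le |w|$, the total running time is $O(|w|) + 2p(|w|) + O(n)$, where the last term is for comparing the two $n$-bit outputs. This is polynomial in $|w|$. Correctness is immediate: $M$ accepts exactly when all three defining conditions of $\textit{COLLISION}_h$ hold.

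The only delicate point is the bookkeeping around the index and domain of $h_n$. The subscript $|x|$ in the definition must match the domain $\{0,1\}^{n+1}$, and the definition's runtime guarantee is stated as polynomial in $n$ rather than in the input length. Both are harmless, since $n$ and $|x|$ differ by at most one and both are at most $|w|$. Nothing else is needed; in particular collision resistance plays no role.
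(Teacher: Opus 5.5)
Your proposal is correct and matches the paper's argument: run the polynomial-time function $h$ on both $x$ and $y$, check $x \neq y$ and $|x| = |y|$, and accept iff the hashes agree. Your handling of the off-by-one between the subscript $h_{|x|}$ and the domain $\{0,1\}^{n+1}$ tidies up an indexing sloppiness that the paper passes over, and it is harmless.
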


\begin{proposition}\label{prop:collisions-not-generable}
If for all hash functions $h$, $\textit{COLLISION}_h \in \PG$, then collision resistant hash functions do not exist.
\end{proposition}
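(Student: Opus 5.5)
We prove the contrapositive. Fix an arbitrary hash function $h$ and assume $\textit{COLLISION}_h \in \PG{}$. We build a uniform polynomial-time randomised algorithm $A$ that finds collisions with probability $1$ for every sufficiently large $n$. This violates \Cref{def:collision-resistance} for $h$. Since $h$ is arbitrary, no hash function is collision-resistant.

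Let $G = (G_m)_{m \in \mathbb{N}}$ be a polynomial-time generator family for $\textit{COLLISION}_h$, with time bound $p(m)$ and implemented by a single machine $G$. Every element of $\textit{COLLISION}_h$ whose components have length $n$ has a string encoding of a fixed length $m(n)$. This length is computable in time polynomial in $n$, e.g.\ $m(n) = 2n + 1$ if we encode $(x,y)$ as $x\#y$ and adjust for the encoding of the pair. By the pigeonhole principle, $h_n : \{0,1\}^{n+1} \to \{0,1\}^n$ always has a collision. So the slice $L_{m(n)} = \Lang{G_{m(n)}}$ is nonempty and consists exactly of the encodings of collisions of $h$ on inputs of length $n+1$. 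We have to be careful to index $h$ by the correct input length: the language uses $h_{|x|}$. I will therefore read it as applying $h$ to length-$(n+1)$ strings, so that the pigeonhole argument applies to every length, and I will fix the indexing of the definition consistently.

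The algorithm $A(n)$ works as follows. Compute $m = m(n)$ and draw $p(m)$ uniformly random bits $b$. Run $G$ with index $1^m$ on $b$, and output the decoded pair $(x,y)$. By the remark following \Cref{def:size-poly-generator}, $G_m$ halts within $p(m)$ steps and therefore reads at most $p(m)$ bits. Because $\Lang{G_m} \neq \emptyset$, it never rejects on a length-$p(m)$ bitstring; otherwise the set of accepted streams would have measure strictly between $0$ and $1$, contradicting \Cref{def:generators}, since every cylinder $S_b$ with $|b| = p(m)$ would lie in $A_{G_m}^\omega$ or its complement. Hence $G_m(b)$ accepts for every such $b$, and its output lies in $L_m$, so it is a genuine collision. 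The success probability is $1 > 1/q(n)$ for every polynomial $q$ and every $n$. The running time is polynomial in $n$ because $m$ and $p$ are polynomial. Uniformity holds because $A$ uses the single machine $G$ that implements the family.

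The main obstacle is the bookkeeping that ties the generator's length index to the hash index, together with the claim that the generator always succeeds within polynomially many random bits. The second point needs the halting/rejection convention: a generator for a nonempty slice that terminates within $p(m)$ steps accepts every length-$p(m)$ bitstring. If this convention were in doubt, the fallback is to note that at least one prefix of length $p(m)$ is accepted with positive measure. A counting argument then gives that all of them are accepted, as sketched above. Everything else is a direct translation.
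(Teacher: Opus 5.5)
Your proposal is correct and follows the paper's argument: run the uniform polynomial-time generator at the length index corresponding to a pair of equal-length strings (the paper uses $2n+1$). The decoded output is then a collision with probability $1$, which contradicts collision resistance. Beyond the paper, you repair the indexing mismatch between the domain of $h_n$ and the use of $h_{|x|}$ in $\textit{COLLISION}_h$, use pigeonhole to show each slice is nonempty, and give the cylinder-measure argument that a nonempty time-bounded slice never rejects on $p(m)$ random bits; the paper leaves all of these implicit.
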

\begin{proof}
Suppose $\textit{COLLISION}_h \in \PG$, then there exists a polynomial-time generator $G = (G_i)_{i \in \mathbb{N}}$ with $\Lang{G} = \textit{COLLISION}_h$.
For a given $n \in \mathbb{N}$, we can find a collision of $h_n$ by running $G_{2n+1}$ (assuming 1 character is used as a separator) to obtain a pair $(x, y) \in \textit{COLLISION}_h$ of size $2n+1$. 
Note this is possible due to the uniformity of $(G_i)_{i \in \mathbb{N}}$.
Since $|x| = |y|$ we know that $|x| = |y| = n$, and hence the pair found is a collision of $h_n$: $h_n(x) = h_n(y)$.
As $G_i$ runs in polynomial time, we have therefore a randomised algorithm that finds hash collisions in polynomial time, for infinitely many $n$, and with success probability 1.
\end{proof}

\begin{theorem}[Cryptographic Separation]\label{prop:crhf-implies-pg-neq-np}
If collision-resistant hash functions exist, then $\PG{} \subsetneq \NP{}$. 
In fact, $\Poly{} \not \subseteq \PG{}$.
\end{theorem}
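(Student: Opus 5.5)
The plan is to argue by contraposition, using \Cref{prop:collisions-not-generable} together with the lemma that $\textit{COLLISION}_h \in \Poly{}$ for every hash function $h$, and \Cref{prop:pg-np}. Assume a collision-resistant hash function $h$ exists. The lemma gives $\textit{COLLISION}_h \in \Poly{}$. We then show $\textit{COLLISION}_h \notin \PG{}$.

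That step is really the contrapositive of the argument behind \Cref{prop:collisions-not-generable}, specialised to this fixed $h$. The proposition as stated quantifies over all $h$, but its proof only ever uses the single $h$ under consideration. So I would restate its conclusion pointwise: if $\textit{COLLISION}_h \in \PG{}$, then $h$ is not collision-resistant. The proof proceeds as follows.

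\begin{enumerate}
\item Take the uniform polynomial-time generator family $G=(G_i)$ for $\textit{COLLISION}_h$.
\item The attacker $A(n)$ runs $G_{2n+1}$ on fresh random bits, using the implementing machine $M$ with input $1^{2n+1}$.
\item By the remark following \Cref{def:size-poly-generator}, $G_{2n+1}$ never rejects on a bitstring of length $p(2n+1)$ whenever $\Lang{G_{2n+1}}\neq\emptyset$. So $A$ can feed exactly $p(2n+1)$ random bits and always obtain an output within polynomial time.
\end{enumerate}

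The one point needing care is nonemptiness for infinitely many $n$. By pigeonhole, $h_n:\{0,1\}^{n+1}\to\{0,1\}^n$ always has a collision. Under a pairing encoding in which inputs to $h_n$ have a fixed length determined by $n$, $\Lang{G_{m}}\neq\emptyset$ for the corresponding length $m$ for every $n$. Any output is then a genuine collision, so the success probability is $1>1/q(n)$ for every $n$. This contradicts \Cref{def:collision-resistance}.

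I expect the main obstacle to be this encoding bookkeeping. \Cref{def:hash-functions} has domain $\{0,1\}^{n+1}$, while $\textit{COLLISION}_h$ indexes by $h_{|x|}$. I would fix the convention that a pair $(x,y)$ with $|x|=|y|=n+1$ is encoded with one separator, giving total length $2n+3$, so that $G_{2n+3}$ is the generator queried. The exact constant is immaterial; what matters is that the map from $n$ to the output length is computable in polynomial time and injective, so that uniformity lets $A$ select the right member of the family.

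Finally, combining the two facts gives $\textit{COLLISION}_h\in\Poly{}\setminus\PG{}$, hence $\Poly{}\not\subseteq\PG{}$. Since $\Poly{}\subseteq\NP{}$, this yields $\NP{}\not\subseteq\PG{}$, and with \Cref{prop:pg-np} we conclude $\PG{}\subsetneq\NP{}$.
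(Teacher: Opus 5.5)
Your proposal is correct and follows the paper's route: you combine the lemma $\textit{COLLISION}_h \in \Poly{}$ with \Cref{prop:collisions-not-generable}, then use $\Poly{} \subseteq \NP{}$ and \Cref{prop:pg-np} to get $\PG{} \subsetneq \NP{}$. Your length bookkeeping (querying $G_{2n+3}$ so that $|x|=|y|=n+1$ matches the domain of $h_n$, with pigeonhole giving nonemptiness) repairs a domain mismatch between \Cref{def:hash-functions} and the paper's use of $G_{2n+1}$, though the pointwise restatement is not strictly needed, since the contrapositive of the proposition as stated already yields some $h$ with $\textit{COLLISION}_h \notin \PG{}$.
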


\takeaway{Simple software may be infeasible to test efficiently, in full generality. 
Easy-to-decide preconditions (here, checking a hash collision in \Cref{prop:crhf-implies-pg-neq-np}) can still be difficult to generate satisfying inputs for.
Easy decision complexity is neither necessary nor sufficient for generability.
(For a classification of what is efficiently generable, see our certificate scheme of \Cref{prop:generator-verifier-equivalence}.)}

So, the power of efficient generators has an interesting relationship with the classical decision classes: \PG{} contains some of the hardest problems in \NP{}, yet it also does not contain some problems in \Poly{}.
We thus see that the complexity of the language for generation is a related but different concept from the complexity of the language for decision algorithms.

% \subsubsection{\PG{} With Oracle Access}
Some generators in testing literature also make use of $\textit{SAT}$ solvers~\citep{steinhofel2022input, livinskii2023fuzzing, pschanely2022crosshair, bruni2011peer}, to increase expressivity.
Here we quote a result first shown in~\citep{sanchis1990complexity}: with a $\textit{SAT}$ oracle, one can generate all \NP{} problems in a polynomial number of steps.

Let $\Sigma^P_i$ denote the $i$th level of the polynomial hierarchy with an \NP{} base machine.
For a generator class $X$ with constraints $Y$ and language $L$, define $X^L$ as the class of languages generable by generators with constraints $Y$ and oracle access to a decider for $L$. 
For a generator complexity class $X$ and a decision complexity class $C$, define $X^C = \bigcup_{L \in C} X^L$.

\begin{proposition}\label{prop:pg-oracle-superset-np}
$\Sigma^\Poly_1 = \NP \subseteq \PG{}^\NP{} \subseteq \NP{}^\NP{} = \Sigma^\Poly_2$
\end{proposition}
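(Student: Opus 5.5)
The two equalities with $\Sigma^\Poly_1$ and $\Sigma^\Poly_2$ hold by definition of the polynomial hierarchy, so only the two inclusions need proof. I treat them separately: the upper bound by relativising the simulation of \Cref{prop:subset-nondet}, and the lower bound by building an oracle-assisted generator that uses the $\NP{}$ oracle to find a fallback element.

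For $\PG{}^\NP{} \subseteq \NP{}^\NP{}$, fix $L \in \PG{}^{A}$ for some $A \in \NP{}$, with generator family $G = (G_i)$ running in time $p(i)$. On input $x$ of length $n$, a nondeterministic oracle machine with oracle $A$ guesses a bitstring $b$ of length at most $p(n)$. It then simulates $G_n(b)$, answering oracle queries with its own $A$-oracle, and accepts iff the run accepts and outputs $x$. This is the proof of \Cref{prop:subset-nondet} verbatim, with oracle calls passed through. It runs in polynomial time, so $L \in \NP{}^{A} \subseteq \NP{}^\NP{}$.

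For $\NP{} \subseteq \PG{}^\NP{}$, let $L \in \NP{}$ have a polynomial-time verifier $V$ with certificates of length $q(n)$. The generator $G_n$ works as follows.
\begin{enumerate}
\item It asks the oracle whether $L_n = L \cap \Sigma^n$ is nonempty. The language $\{(1^n, u) : \exists x \in \Sigma^n, \exists w.\ x \text{ has prefix } u \wedge V(x,w)\}$ is in $\NP{}$, so the query is legal.
\item If $L_n$ is empty, it deliberately reads past the end of its input, so it rejects everything and generates $\emptyset$, as the definition of a generator allows.
\item Otherwise it self-reduces bit by bit, issuing $n + q(n)$ prefix queries, to construct a canonical fallback pair $(x_0, w_0)$ with $V(x_0, w_0)$ true.
\item It reads $n$ random bits as a candidate $x$ and $q(n)$ random bits as a candidate $w$, runs $V(x,w)$, and outputs $x$ if $V$ accepts and $x_0$ otherwise.
\end{enumerate}

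Every $x \in L_n$ is output on the seed that spells out $x$ followed by one of its certificates, and only members of $L_n$ are ever output, so $\Lang{G_n} = L_n$. Each $G_n$ reads exactly $n + q(n)$ bits, so it is halting. It accepts every sufficiently long bitstring, so $\mu(A^\omega_{G_n}) = 1$ and it is a generator. It runs in polynomial time, and uniformity in $n$ is clear.

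The step I expect to be the main obstacle is making sure the oracle queries fit the model: each must be a membership query to a single fixed language in $\NP{}$ (or be combined into one via tagging), and emptiness detection must be compatible with the rejection-only-on-exhaustion clause of \Cref{def:halting-transducer}. Both points are handled by reading past the input, as in the $\emptyset$ case above.
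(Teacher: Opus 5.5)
Your proof is correct. The upper bound $\PG{}^\NP{} \subseteq \NP{}^\NP{}$ is the paper's argument: guess the seed nondeterministically and pass oracle calls through. Your lower bound takes a different route.

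\textbf{The paper's construction.} It encodes the \NP{} machine as a Cook--Levin formula $\phi_n$ and consults the SAT oracle throughout generation. It builds the output symbol by symbol, querying both one-symbol extensions of the current prefix and choosing \emph{at random} among the viable ones. Every run on a nonempty $L_n$ therefore yields an element by oracle-guided descent, and no fallback is needed.

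\textbf{Your construction.} You use the oracle only to compute one canonical $x_0 \in L_n$ by search-to-decision. Those queries depend only on $n$, not on the random bits. You then run guess-and-check with $x_0$ as the fallback. In effect you show that an \NP{} oracle puts a default-element function in $\mathit{FP}^{\NP{}}$, and then relativise \Cref{prop:default-elem-implies-pg-membership}.

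\textbf{What each buys.} Your version is more modular and makes clear that the oracle's only job is to supply one member of $L_n$. It also makes the generator conditions immediate, since every run reads exactly $n + q(n)$ bits. The paper's version gives a much better output distribution. Yours sends every failed guess to $x_0$, so when accepting pairs $(x,w)$ are rare, almost all probability mass lands on that single element. This does not matter for a support-only statement, but it matters for how such a generator would be used in testing.

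\textbf{Two small slips.} First, your prefix language constrains only $x$, so it cannot recover $w_0$. Since $w_0$ is never used, just drop it. Second, over a non-binary $\Sigma$ the self-reduction proceeds symbol by symbol, with up to $|\Sigma|$ queries per position rather than one bit at a time.
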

\begin{proof}
A case from a more general result in~\cite{sanchis1990efficient}.
See \Cref{proof:pg-oracle-superset-np} for a proof of this case.
\end{proof}

\takeaway{Although using a SAT/SMT solver during generation allows \NP{} preconditions to be satisfied, \Cref{prop:pg-oracle-superset-np} shows it is not a panacea: for example, \PSPACE{}-hard or \EXPTIME{}-hard preconditions remain out of reach even with a perfect SAT or SMT solver, respectively.}

\subsection{Expected Polynomial-Time Generators}\label{subsec:epg}
Since sampling is inherently a random process, another feasible resource constraint on generators is via the \emph{expected time} to produce a sample, rather than a deterministic time bound.
The following defines \emph{expected} polynomial time generators:

\begin{definition}\label{def:expected-time-generator-family}
The generator family $G = (G_i)_{i \in \mathbb{N}}$ runs in expected polynomial time if there exists a polynomial $p$ such that for all $i \in \mathbb{N}$, $G_i$ terminates in expected time $p(i)$.
Moreover, for all $w \in \Lang{G_i}$, there exists at least one bitstream where $w$ is produced before time $p(i)$.
\end{definition}
The condition that there exists \emph{at least one} path of polynomial length for any $w$ ensures that there are no fundamentally difficult instances in the language.%
\footnote{Such an issue of hard instances hiding in long runs was also noted in~\cite{jerrum1986random}, where interestingly, the modelling choice made by \citet{jerrum1986random} was to \emph{ignore} the expected-time case and only focus on deterministic time bounds.}
Our definition does not collapse to \PG{}, since execution paths longer than polynomial are still permitted.
Intuitively, expected polynomial time describes repeated retries of a polynomial process with bounded chance of failure: a generator version of \BPP{}.
On lucky runs, the process takes polynomial time, which corresponds to the existence of a polynomial-length path. 

\begin{definition}
The class of languages generable by some expected polynomial-time generator is:
$$
\EPG{} = \{ \Lang{G} : G\text{ is an expected polynomial-time generator} \}
$$
\end{definition}

\begin{proposition}\label{prop:epg-subset-np}
$\PG{} \subseteq \EPG{} \subseteq \NP{}$.
\end{proposition}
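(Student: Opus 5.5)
The statement splits into two inclusions, and I will handle them separately.

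\emph{First inclusion, $\PG{} \subseteq \EPG{}$.} Take a polynomial-time generator family $G = (G_i)_{i \in \mathbb{N}}$ with polynomial bound $p$, as in \Cref{def:size-poly-generator}. I claim the same family witnesses membership in \EPG{}. Every run of $G_i$ halts within $p(i)$ steps, so the expected running time is at most $p(i)$. For the second clause of \Cref{def:expected-time-generator-family}, let $w \in \Lang{G_i}$. Then some $b$ satisfies $G_i(b) = w$. That run halts within $p(i)$ steps, and by \Cref{obs:monotone} every stream extending $b$ produces $w$ along the identical run. So $w$ is produced before time $p(i)$ on some bitstream. Both conditions hold with the same $p$, giving $\Lang{G} \in \EPG{}$.

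\emph{Second inclusion, $\EPG{} \subseteq \NP{}$.} Let $G = (G_i)_{i \in \mathbb{N}}$ run in expected time $p$. I will build a polynomial-time verifier, equivalently a nondeterministic machine, for $L = \Lang{G}$. On input $x$ with $|x| = n$, the machine guesses a bitstring $b$ of length at most $p(n)$. It then simulates $G_n(b)$, using the uniform implementing machine $M$ with $1^n$ on its index tape, for at most $p(n)$ steps. It accepts iff the simulation accepts within the budget and the output equals $x$.

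Soundness: any accepting path exhibits $b$ with $G_n(b) = x$, so $x \in \Lang{G_n} \subseteq L$.

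Completeness: this rests on the ``at least one polynomial-length path'' clause of \Cref{def:expected-time-generator-family}. For $x \in \Lang{G_n}$ there is a stream on which $x$ is produced before time $p(n)$. In $p(n)$ steps the input head reads at most $p(n)$ bits, so the prefix $b$ read so far has length at most $p(n)$. By \Cref{obs:monotone} the run on $b$ coincides with the run on the stream, so $G_n(b) = x$ within $p(n)$ steps, and the guess $b$ succeeds.

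Running time: simulating $M$ with its index tape costs polynomial overhead, and comparing the output with $x$ is linear. Alternatively, once the bounded witness is established I can appeal to the simulation argument behind \Cref{prop:subset-nondet}, which handles the tape bookkeeping.

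\emph{Main obstacle.} The expected-time bound by itself says nothing about individual runs, which may be arbitrarily long. The step to get right is therefore reducing to a \emph{deterministic} polynomial budget via the existence clause, and checking that truncating the simulation at $p(n)$ steps loses no element of $L$. This is exactly why that clause is in the definition. The only other care needed is that the truncated run on the finite prefix $b$ really accepts, rather than rejecting on reading a blank; monotonicity of halting transducers gives this.
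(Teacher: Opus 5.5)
Your proposal is correct and follows the paper's proof. The same family witnesses $\PG{} \subseteq \EPG{}$, and $\EPG{} \subseteq \NP{}$ is obtained by nondeterministically guessing the random bits and simulating $G_n$ for $p(n)$ steps, with completeness resting on the clause of \Cref{def:expected-time-generator-family} that guarantees a polynomial-length run for every $w \in \Lang{G_n}$. Your explicit appeals to \Cref{obs:monotone} add detail the paper leaves implicit: they check that $\PG{}$ generators satisfy that clause, and that the truncated prefix run really accepts.
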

\begin{proof}
A generator that terminates in deterministic time $p(i)$ will also terminate in expected time at most $p(i)$, thus $\PG{} \subseteq \EPG{}$.
$\EPG{} \subseteq \NP{}$ follows by simulation---for a full proof, see \Cref{proof:epg-subset-np}.
\end{proof}

Consider the language $\textit{PRIMES}$, defined as follows:
\begin{definition} 
Assume $x$ is interpreted in its binary representation, define:
$$\textit{PRIMES} = \{ x \in \{0, 1\}^* : x > 1 \land \forall y. 1 < y < x \implies gcd(x, y) = 1 \}$$
\end{definition}

We show that $\textit{PRIMES} \in \EPG{}$: an interesting case as it is an open problem in mathematics to determine whether there exists a \PG{} algorithm for $\textit{PRIMES}$~\cite{tao2012deterministic}.
That is, an algorithm which \emph{always} produces a prime number of a certain size within polynomial time.
However, it does have a natural algorithm running in \emph{expected} polynomial time, and also does not follow the structure established for \PG{} in \Cref{prop:generator-verifier-equivalence}.
It would thus be tempting to conjecture that $\PG \subsetneq \EPG$.

\begin{proposition}\label{prop:primes-epg}
$\textit{PRIMES} \in \EPG{}$.
\end{proposition}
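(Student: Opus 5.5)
The plan is to exhibit an expected polynomial-time generator family $G = (G_n)_{n \in \mathbb{N}}$ based on rejection sampling, with a deterministic polynomial-time primality test (AKS) as the filter. On input $1^n$, the generator $G_n$ proceeds as follows. If $n \le 1$, it rejects every bitstring, because $\textit{PRIMES}_n$ is empty ($x>1$ needs at least two bits); this is legitimate by the empty-set clause of \Cref{def:generators}. Otherwise it reads $n-1$ random bits $b$ and forms the candidate $x = 1b$, so that $x$ is an $n$-bit number with leading bit $1$. It runs AKS on $x$. If $x$ is prime, it writes $x$ to the output tape and accepts; if not, it repeats the sampling step. Uniformity of the family is immediate, since a single machine reads $1^n$ and performs exactly this procedure. (Primes with leading zeros are excluded from the length-$n$ slice, consistent with the binary-representation convention. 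If that convention were disputed, we would instead sample all $n$ bits, and the argument below would be unchanged.)

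Next we check the generator conditions. The machine is halting: on any finite bitstring it either outputs a prime or eventually tries to read past the end of its input. It rejects only in the latter case, so \Cref{def:halting-transducer} holds. Eventual productivity follows as in \Cref{prop:rejection-sampling-valid}. Each round succeeds with probability $\rho_n > 0$, because by Bertrand's postulate there is a prime in $[2^{n-1}, 2^n)$. Hence the divergent streams, on which every round fails, have measure $\lim_k (1-\rho_n)^k = 0$. Distributional support is also clear: every $n$-bit prime $x$ is output on the bitstream whose first $n-1$ bits are the low-order bits of $x$. The same run witnesses the second clause of \Cref{def:expected-time-generator-family}, since it finishes within one round, that is, in $O(n) + \mathrm{poly}(n)$ steps.

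The remaining step, and the main obstacle, is the expected running time. The number of rounds is geometric with mean $1/\rho_n$, and each round costs $q(n)$ for the polynomial bound $q$ of AKS. The expected time is therefore $q(n)/\rho_n$, and everything hinges on a polynomial lower bound for $\rho_n$. For this we invoke a quantitative form of the prime number theorem, or a Chebyshev-type bound, giving $\pi(2^n) - \pi(2^{n-1}) \ge c\, 2^{n}/n$ for all sufficiently large $n$. This yields $\rho_n \ge 2c/n$ for those $n$, and the finitely many small $n$ can be absorbed into the polynomial, since by Bertrand each of them has $\rho_n > 0$. Hence the expected time is $O(n\, q(n))$, which places $\textit{PRIMES} \in \EPG{}$. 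If one prefers to avoid AKS, Miller--Rabin cannot be used directly, because its one-sided error would admit composites; the deterministic test is what keeps the output language exactly $\textit{PRIMES}$.
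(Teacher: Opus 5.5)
Your proposal is correct and follows essentially the same route as the paper: rejection sampling over a dyadic interval with the AKS decider as the filter, Bertrand's postulate for nonemptiness, prime density for an expected $O(n)$ rounds, and a single successful round as the witness for the polynomial-length-path clause. Your version is somewhat more careful than the paper's. You index so that $G_n$ outputs strings of length exactly $n$, you handle the empty slices $n \le 1$, and you replace the paper's appeal to an ``almost-uniform distribution of the primes'' with an explicit Chebyshev/PNT lower bound on $\pi(2^n)-\pi(2^{n-1})$.
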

\begin{proof}
This is a straightforward application of the fact that $\textit{PRIMES} \in P$~\citep{agrawal2004primes}, and the Prime Number Theorem on the density of primes.
Together, these allow us to use rejection sampling for primes between $2^n$ and $2^{n+1}$.
See \Cref{proof:primes-epg} for a full proof.
\end{proof}

\takeaway{
Giving up worst-case polynomial time guarantees can at times result in simpler generator designs, and potentially allow more sets to be sampled (but is an open problem whether this is the case~\cite{tao2012deterministic}).
However, care must be taken to ensure the runtime is truly polynomial in expectation (as in \Cref{prop:primes-epg}), and that no element is intrinsically hard to sample: otherwise the efficiency and soundness (distributional support) guarantees no longer hold.
}

Akin to boosting the probability of deciding a language in probabilistic complexity classes, we can make a similar statement on boosting the probability of generating an instance for \EPG{}.
\begin{proposition}\label{prop:boosting-epg}
For $L \in \EPG{}$, and any $\epsilon > 0$, there exists an algorithm $A$ that produces a sample $x \in L$ of length $n$ with probability $> 1 - \epsilon$, in polynomial time of $n$ and $\log(1/\epsilon)$.
\end{proposition}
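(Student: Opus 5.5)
We obtain $A$ from an expected polynomial-time generator for $L$ by Markov's inequality followed by independent restarts, in the style of amplification for \BPP{}. Fix $L \in \EPG{}$, witnessed by a family $G = (G_i)_{i \in \mathbb{N}}$ with a polynomial $p$ as in \Cref{def:expected-time-generator-family}. Fix the target length $n$ and $\epsilon > 0$, and write $T$ for the running time of $G_n$ on a random bitstream $X \sim \mathcal{I}$. We assume $\Lang{G_n} \neq \emptyset$; otherwise there is nothing to produce. Under this assumption $G_n$ accepts almost surely and $\mathbb{E}[T] \le p(n)$.

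\textbf{Step 1 (one truncated trial).} Run $G_n$ on fresh random bits, but stop it after $2p(n)$ steps. By Markov's inequality, $\mathbb{P}(T > 2p(n)) \le \mathbb{E}[T]/(2p(n)) \le 1/2$. Hence one truncated trial halts with an output in $L$ of length $n$ with probability at least $1/2$. Because $T$ is finite almost surely, every halting run outputs an element of $\Lang{G_n} = L_n$. The truncation is implementable because $p$ is a polynomial, so the clock $2p(n)$ can be computed in time polynomial in $n$. Uniformity of the family lets $A$ simulate $G_n$ from the index $n$.

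\textbf{Step 2 (independent repetition).} Let $A$ perform $k = \lceil \log_2(1/\epsilon) \rceil + 1$ truncated trials, each on fresh, disjoint blocks of random bits. $A$ outputs the result of the first trial that halts, and reports failure if none does. The trials are independent because they read disjoint portions of the i.i.d.\ stream. So the failure probability is at most $2^{-k} < \epsilon$, and $A$ succeeds with probability $> 1-\epsilon$. The running time is $O(k \cdot p(n))$ plus polynomial simulation and clock overhead, which is polynomial in $n$ and $\log(1/\epsilon)$.

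The main obstacle is mild and lies in Step 1: justifying that the truncated simulation is sound. Specifically, one must check that stopping early never produces an output outside $L_n$. This holds because output is only declared when $G_n$ accepts, and by \Cref{obs:monotone} an accepted run is determined by its prefix regardless of further bits. One must also check that the expected-time hypothesis genuinely bounds $\mathbb{E}[T]$ over $\mathcal{I}$, so that Markov's inequality applies. Once these points are settled, the remaining argument is routine.
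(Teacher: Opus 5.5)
Your proposal is correct and is essentially the paper's proof. Both apply Markov's inequality to the running time of $G_n$ (the paper truncates at $3p(n)$ for failure probability at most $1/3$, you at $2p(n)$ for $1/2$) and then take $O(\log(1/\epsilon))$ independent truncated restarts; your extra $+1$ in $k$ secures the strict bound $>1-\epsilon$ more carefully than the paper's choice $k=\lceil \log(1/\epsilon)/\log 3\rceil$.

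The paper also checks that the boosted algorithm can still produce every $w \in L_n$. It uses the clause of \Cref{def:expected-time-generator-family} that each $w$ has a run of length at most $p(n)$, which survives truncation. The statement as written does not require this, but it carries over immediately to your $2p(n)$ cutoff if you want $A$ to keep full support.
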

\begin{proof}
By considering long runs and using Markov's inequality to bound the non-generation probability.
See \Cref{proof:boosting-epg} for a full proof.
\end{proof}

This boosting result means that for $\EPG{}$ languages, we can obtain samples with arbitrarily high probability in polynomial time.
Thus, this would also rule out the membership of $\textit{COLLISION}_h$ in $\EPG{}$, unless collision-resistant hash functions do not exist.
\begin{proposition}
If collision-resistant hash functions exist, then $\EPG{} \subsetneq \NP{}$.
Moreover, $P \not \subseteq \EPG{}$.
\end{proposition}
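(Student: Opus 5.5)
The plan mirrors the proof of \Cref{prop:crhf-implies-pg-neq-np}, with the deterministic time bound replaced by the boosting guarantee of \Cref{prop:boosting-epg}. Fix a collision-resistant hash function $h$. By the earlier lemma, $\textit{COLLISION}_h \in \Poly{} \subseteq \NP{}$. It therefore suffices to show $\textit{COLLISION}_h \notin \EPG{}$. This gives $\Poly{} \not\subseteq \EPG{}$ directly. Combined with $\EPG{} \subseteq \NP{}$ from \Cref{prop:epg-subset-np}, it also gives $\EPG{} \subsetneq \NP{}$.

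Suppose for contradiction that $\textit{COLLISION}_h \in \EPG{}$, witnessed by an expected polynomial-time family $G = (G_i)_{i \in \mathbb{N}}$. I will turn $G$ into a uniform collision-finding algorithm $A$ that violates \Cref{def:collision-resistance}. On input $n$, $A$ writes $1^{2n+1}$ on the index tape; this is where the uniformity of $G$ is used, as in \Cref{prop:collisions-not-generable}. It then applies the algorithm of \Cref{prop:boosting-epg} with $\epsilon = 1/2$. Any output of length $2n+1$ is a pair $(x,y)$ with $|x| = |y| = n$, $x \neq y$ and $h_n(x) = h_n(y)$, so every output is a genuine collision.

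It remains to check the quantifiers in \Cref{def:collision-resistance}. Concretely, the boosted procedure runs $G_{2n+1}$ with a cutoff of $2p(2n+1)$ steps, so by Markov's inequality it succeeds with probability at least $1/2$ per trial. It therefore runs in time polynomial in $n$ and succeeds with probability above $1/2 > 1/q(n)$ for, say, $q(n) = 3$. One detail must be handled: $\textit{COLLISION}_h \cap \Sigma^{2n+1}$ must be nonempty for the generator to produce anything. By pigeonhole on $h_n : \{0,1\}^{n+1} \to \{0,1\}^n$, collisions exist for every $n$; I will take the relevant input length to be $n+1$ and adjust the index to $2(n+1)+1$. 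Then $G_{2n+3}$ is a generator with $\mu(A^\omega) = 1$, and its expected running time is at most $p(2n+3)$. So $A$ succeeds for all $n$, in particular infinitely many, contradicting collision resistance.

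I expect the main obstacle to be this bookkeeping around \Cref{prop:boosting-epg} rather than any new idea. It must be confirmed that the boosted algorithm is a genuine polynomial-time randomised algorithm with a hard time cutoff, not merely expected time, since \Cref{def:collision-resistance} quantifies over polynomial-time bounded algorithms. It must also be checked that truncated runs yield no invalid outputs; truncated runs simply count as failures, since output is only read from accepting runs. The off-by-one between $h_n$'s domain length and the encoding length also needs care.
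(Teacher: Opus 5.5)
Your proposal is correct and follows the paper's own argument. Both assume $\textit{COLLISION}_h \in \EPG{}$, truncate runs of the generator at a constant multiple of its expected time via Markov's inequality (the boosting of \Cref{prop:boosting-epg}), and obtain a polynomial-time collision-finder contradicting \Cref{def:collision-resistance}. Your details are more careful than the paper's sketch: a single trial with cutoff $2p$, nonemptiness of each length slice via pigeonhole, and the off-by-one between the domain of $h_n$ and the string lengths in $\textit{COLLISION}_h$.
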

\begin{proof}
Suppose for a contradiction that for all hash functions $h$, $\textit{COLLISION}_h \in \EPG{}$, then there exists an expected polynomial time generator that generates $\textit{COLLISION}_h$.
By \Cref{prop:boosting-epg}, for any $k \in \mathbb{N}$, we can obtain a generator that produces a length-$n$ sample of $\textit{COLLISION}_h$ with probability at least $1/k$, within polynomial time in $n$ (with $\epsilon$ fixed to $1/k$, the $\log(1/\epsilon)$ term becomes a constant factor in the polynomial).
Since this is a procedure that violates the collision resistance property (\Cref{def:collision-resistance}) of $h$, no hash functions are collision resistant, which is a contradiction.
\end{proof}

\section{Space Bounded Generators}\label{sec:space-generators}
Space-bounded generators have a limited amount of space to work with, but can exhibit arbitrary running time.
Whilst this seems like a rather loose restriction on generators, we note that these constraints actually correspond to a large class of generators implemented in practice.

Space-bounded generation  can model test-case generation algorithms that produce not only independent outputs in a single run (like in the case of \PG{} or \EPG{}), but also past-dependent outputs in an iterative fashion.
That is, algorithms where previous outputs can \emph{influence} future produced outputs, for instance by caching some data collected from past outputs to introduce dependence in the generation algorithm.

For a given randomised algorithm $A$ that produces the correlated sequence $(x_1, \ldots, x_n, \ldots)$ one by one, we can define a $G$ that first samples for an $n \in \mathbb{N}$ (e.g., using \Cref{def:naturalencoding}), then simulates $A$ until the first $n-1$ outputs have been emitted, before writing the final $x_n$ on the output tape.
Clearly, $x$ is a possible output of the algorithm $A$ if and only if $x \in \Lang{G}$, and that $G$ uses the same amount of \emph{space} as $A$ to produce $x$.

Many prominent examples of practical generators naturally fit this framework of sequence-producing generators.
These include coverage-guided fuzzing~\citep{bohme2016coverage} and concolic execution \cite{cadar2021klee, godefroid2005dart}---which we will discuss in more detail later on in \Cref{subsec:feedback-generators}.
In these cases, the time taken by the generator is of orthogonal relevance, as many of these methods are designed to run \emph{indefinitely}---for instance, as a background process to continuously find bugs~\citep{bohme2016coverage, busse2020running}.
Thus, for these generation algorithms, space bounds can be a more natural theoretical constraint to study.

There is a caveat with using space-bounded generators verbatim to model these algorithms: many such algorithms call the SUT, which may be arbitrary computation.
Thus, it would make sense to \emph{separate} the generation algorithm from the SUT call: the generation process should be independent of the SUT it is applied to.
This issue will be addressed later in \Cref{subsec:feedback-generators}.

We start by defining the specific space-bounded generation classes considered in this section.
The first is on logspace generators, which have access to pointers and counters up to the size of the generated output.
Logspace computations still make sense, since generators have read-once-only input and write-once-only output, preventing them from using the input/output tapes for storing more information than their space bound permits.
Though seemingly restrictive, we will see later that logspace still suffices for generating instances of useful predicates, like the set of satisfiable 2-SAT formulae or directed graphs with guaranteed connectivity between two chosen nodes.

\iffalse 
\begin{definition}
The generator family $G = (G_i)_{i \in \mathbb{N}}$ runs in logspace if there exists a function $f \in O(\logr(n))$ such that for all $i \in \mathbb{N}$, $G_i$ uses at most $f(i)$ cells on its work tape.
\end{definition}
\fi

\begin{definition}
The class of log-space generable languages are denoted as $\LG{}$.
$$
\LG{} = \bigcup_{c \in \mathbb{N}} \GSPACE(c \logr(n))
$$
\end{definition}

We also consider languages generable within polynomial space.
This class is permissive and covers the majority of the state-dependent generation algorithms seen in practice: storing more than a polynomial amount of data in the size of the test input is impractical in most applications.
\begin{definition}
The class of languages generable within polynomial space is defined as:
$$\PSPACEG{} = \bigcup_{i \in \mathbb{N}} \GSPACE(n^i)$$
\end{definition}

We begin with some general facts on space-bounded generators, connecting them to existing models of time-bounded generators and decision procedures.

The first proposition relates space-bounded decision procedures and time-bounded generators, where a time-bounded generator with \emph{much more} time can generate samples from a space-bounded NDTM.
This is in analogy to a standard result on space-time tradeoffs between NDTMs and DTMs for decision procedures.
\begin{proposition}\label{prop:nspace-subseteq-exptime-generator}
For space-constructible $s : \mathbb{N} \rightarrow \mathbb{N}$,
$$\NSPACE(s(n)) \subseteq \GTIME(2^{O(s(n))})$$
\end{proposition}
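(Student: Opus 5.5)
Given $L \in \NSPACE(s(n))$, decided by a nondeterministic machine $M$ using $s(n)$ work space, we build a generator family $G = (G_n)_{n \in \mathbb{N}}$ in which $G_n$ first determines all of $L_n = \{x \in L : |x| = n\}$ deterministically, and only then spends random bits to pick an element. Each $G_n$ must produce exactly $L_n$, halt within $2^{O(s(n))}$ steps, and reject only on input exhaustion. The plan implicitly requires $s(n) \geq n$, which we adopt. Without it the time bound $2^{O(s(n))}$ could not even cover enumerating the $2^n$ candidate strings, or writing an output of length $n$ when $s(n) = O(\log n)$. If the paper intends $s$ to be sublinear, the input enumeration would have to be replaced by a streaming argument, which we do not attempt.

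The construction has three steps.
\begin{enumerate}
\item \textbf{Deciding membership.} For a fixed candidate $x \in \Sigma^n$, we decide $x \in L$ deterministically by reachability on the configuration graph of $M(x)$. A configuration is a state, a head position, and work-tape contents, so there are $2^{O(s(n))}$ of them when $s(n) \geq \log n$. Breadth-first search from the initial configuration to an accepting one runs in time polynomial in the graph size, hence $2^{O(s(n))}$. This is the standard $\NSPACE(s) \subseteq \DTIME(2^{O(s)})$ simulation, and it needs no randomness.
\item \textbf{Enumerating $L_n$.} Iterate over all $|\Sigma|^n \leq 2^{O(s(n))}$ strings $x$ of length $n$ in lexicographic order, running step 1 on each. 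This lets us compute $k = |L_n|$ and locate its elements, still within $2^{O(s(n))}$ time.
\item \textbf{Sampling and output.} If $k = 0$, read one bit and then attempt to read past the end of the input. On every finite bitstring this reads a blank and rejects, so $A_{G_n} = \emptyset$ and $\mu(A_{G_n}^\omega) = 0$, which is the permitted empty-set generator. If $k \geq 1$, read $m = \lceil \log_2 k \rceil + 1$ random bits (at least one, since \Cref{def:generated-language} uses $b \in \{0,1\}^+$). Interpret them as an integer $r$, take $j = r \bmod k$, rescan the candidates to locate the $j$-th member of $L_n$, and write it to the output tape.
\end{enumerate}

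In the $k \geq 1$ case, every bitstring of length at least $m$ is accepted, and every $x \in L_n$ arises as the output for some $b$. The minimal accepted bitstrings are exactly $\{0,1\}^m$, so $\mu(A_{G_n}^\omega) = 1$ and $G_n$ is a generator. The machine rejects only when it tries to read an input cell beyond the end, so $G_n$ is halting in the sense of \Cref{def:halting-transducer}. Uniformity is immediate: a single machine $G$ does all of this given $1^n$ on its extra index tape. The running time is dominated by step 2 and the final rescan, so each $G_n$ halts within $2^{O(s(n))}$ steps and $L \in \GTIME(2^{O(s(n))})$.

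We expect the main obstacle to be reconciling the time bound and its constructibility with the generator definition, rather than the simulation itself. Two points need care. First, the constant hidden in the $O$ must absorb $n\log|\Sigma|$, which is exactly where $s(n) \geq n$ is used. Second, \Cref{defn:time-bounded-generator} asks for a time-constructible bound; $2^{c\,s(n)}$ is time-constructible whenever $s$ is space-constructible.

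Two alternatives were considered and rejected. Guessing a path through the configuration graph with random bits would be simpler to state, but it risks non-halting behaviour and premature rejection. The deterministic pre-computation used here sidesteps both problems at the cost of the exponential time, which the statement allows.
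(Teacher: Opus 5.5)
Your argument is correct for $s(n)\ge n$ (indeed for $s(n)=\Omega(n)$), but it follows a different route from the paper and does not reach the range $\log n\le s(n)<n$. That range is the case the paper actually uses: it instantiates $s=\log n$ to conclude $\NL\subseteq\PG$.

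Instead of enumerating $\Sigma^n$, the paper introduces $\PARTIAL(L)=\{(x,1^k) : \exists r.\ |r|=k\wedge x\concat r\in L\}$ and argues that $\PARTIAL(L)\in\NSPACE(s(n))\subseteq\DTIME(2^{O(s(n))})$. It then builds the output one symbol at a time. At each position it tries the symbols of $\Sigma$ in a random order and commits to the first $c$ with $(o\concat c,1^{n-|o|-1})\in\PARTIAL(L)$, so every committed prefix still extends to a member of $L_n$.

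This uses only $O(n|\Sigma|)$ decisions, for total time $n\cdot 2^{O(s(n))}$ rather than your $|\Sigma|^n\cdot 2^{O(s(n))}$. That enumeration factor is the only thing tying you to linear space. Your other stated reason is mistaken: $2^{O(\log n)}=n^{O(1)}$ time is ample to write an output of length $n$. What your route buys is elementarity (no auxiliary language, no self-reduction), at the cost of exponential overhead in $n$.

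Your hesitation about sublinear $s$ is nonetheless well founded, because the paper covers that range only under a nonstandard convention. To put $\PARTIAL(L)$ in $\NSPACE(s(n))$ when $s(n)<n$, the paper guesses the suffix $r$ on the fly and assumes the input head never moves back over it. That is sound only for machines that read their input one-way.

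With the usual two-way read-only input tape, the sublinear claim fails under standard assumptions. Take a tally language $T\in\NP$ and its logspace-uniform Cook--Levin formulas $\phi_n$, padded so that the number of variables $N(n)$ is strictly increasing. Let $L=\{z : |z|=N(n)\text{ for some }n,\ z\models\phi_n\}$.
\begin{itemize}
\item $L$ is decidable in deterministic logspace: enumerate the clauses of $\phi_n$ and look up bits of $z$ with the two-way head.
\item A polynomial-time generator for $L$ would decide $T$ in polynomial time. Run $G_{N(n)}$ for its time bound on a longer bitstring; it accepts iff $L_{N(n)}\neq\emptyset$.
\end{itemize}
Hence $\NL\subseteq\PG$ would put every tally $\NP$ language in $\Poly$, i.e.\ force $\textit{E}=\textit{NE}$.

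So, measured against the statement as written, your proof is incomplete. However, the missing range cannot be secured in the standard model unless $\textit{E}=\textit{NE}$. Under the one-way convention the paper's prefix-extension argument does secure it, and your brute-force enumeration still would not.
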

\begin{proof}
Deferred to \Cref{proof:nspace-subseteq-exptime-generator}.
Intuitively, the extra time on the generator allows for the nondeterministic machine to be simulated by enumerating configurations.
Then, by using a complete decision language within $\NSPACE(s(n))$ (itself decidable on the generator as a procedure) that decides whether an accepting suffix exists for $M$, it is possible to generate, for any $L \in \NSPACE(s(n))$, any $x \in L$ via repeated queries to this complete language.
\end{proof}

\Cref{prop:nspace-subseteq-exptime-generator} together with \Cref{prop:subset-nondet} relates space and (much higher) time bounds for generators, analogous to the classic time-space tradeoff theorem for deciders.

\begin{corollary}[Time-Space Tradeoff]\label{prop:generator-time-space-tradeoff}
For a space constructible function $s : \mathbb{N} \rightarrow \mathbb{N}$ with $s(n) \geq \logr(n)$, 
$$\GSPACE(s(n)) \subseteq \GTIME(2^{O(s(n))})$$
\end{corollary}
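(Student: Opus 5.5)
The plan is to chain the two results that sandwich the statement. First, by \Cref{prop:subset-nondet}, since $s$ is space-constructible with $s(n) \geq \logr(n)$, every language in $\GSPACE(s(n))$ lies in $\NSPACE(s(n))$. Second, by \Cref{prop:nspace-subseteq-exptime-generator}, $\NSPACE(s(n)) \subseteq \GTIME(2^{O(s(n))})$. Composing the two inclusions gives
\[
\GSPACE(s(n)) \subseteq \NSPACE(s(n)) \subseteq \GTIME(2^{O(s(n))}),
\]
which is the claim.

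The only step needing care is matching hypotheses. \Cref{prop:subset-nondet} is stated with the strict bound $s(i) > \logr(i)$, while the corollary allows $s(n) \geq \logr(n)$. I would handle this by replacing $s$ with $s' = s + 1$ (or $2s$). This function is still space-constructible and strictly exceeds $\logr$. Clearly $\GSPACE(s(n)) \subseteq \GSPACE(s'(n))$, because a machine using at most $s$ cells also uses at most $s'$. Moreover $2^{O(s'(n))} = 2^{O(s(n))}$ whenever $s(n) \geq \logr(n) \geq 1$ for large $n$, and the finitely many small $n$ can be absorbed into constants.

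I would also check that the simulation behind \Cref{prop:subset-nondet} needs at least logarithmic space. Guessing the random bits nondeterministically and tracking the output head position against the input $x$ requires about $\logr(n)$ cells, and this is exactly where the lower bound on $s$ is used. Beyond these bookkeeping points, nothing should obstruct the argument: the real content sits in the two cited propositions, and the corollary follows by transitivity of inclusion.
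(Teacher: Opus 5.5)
Your proposal is correct and is exactly the paper's derivation: the corollary comes from composing $\GSPACE(s(n)) \subseteq \NSPACE(s(n))$ (\Cref{prop:subset-nondet}) with $\NSPACE(s(n)) \subseteq \GTIME(2^{O(s(n))})$ (\Cref{prop:nspace-subseteq-exptime-generator}). Your padding $s' = s+1$, which reconciles the strict hypothesis $s > \logr$ of those propositions with the corollary's $s \geq \logr$, is a detail the paper glosses over, and you handle it correctly; the paper also gives a direct configuration-graph proof in the appendix, but your argument does not need it.
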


For the decision version of \Cref{prop:generator-time-space-tradeoff}, a configuration-counting argument is the standard proof. 
This argument does not carry over immediately, since space-bounded generators can be driven into cycles for an unbounded amount of time by their \emph{input}.
In \Cref{appendix:config-counting-gspace-gtime}, we discuss the issue with the configuration counting argument in more detail and give a fix in an alternative, direct proof of \Cref{prop:generator-time-space-tradeoff}.

\subsection{Logspace Generators}\label{subsec:logspace-generators}
\Cref{prop:subset-nondet} and \Cref{prop:nspace-subseteq-exptime-generator} together mean that every \NL{} language is generable in polynomial time (i.e., has a \PG{} algorithm).

\begin{corollary}
$\LG \subseteq \NL \subseteq \PG$.
\end{corollary}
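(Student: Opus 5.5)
The corollary splits into two inclusions. I will derive each from an earlier result by instantiating the space bound $s(n) = c\logr(n)$ and taking a union over $c$.

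For $\LG{} \subseteq \NL{}$, I apply \Cref{prop:subset-nondet} with $s(n) = c\logr(n)$ for each $c \in \mathbb{N}$. This gives $\GSPACE(c\logr(n)) \subseteq \NSPACE(c\logr(n)) \subseteq \NL{}$. Taking the union over $c$ yields the inclusion.

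The hypothesis of \Cref{prop:subset-nondet} asks for $s(i) > \logr(i)$, which holds for $c \geq 2$ and large $i$. For $c = 1$ I note that $\GSPACE(\logr(n)) \subseteq \GSPACE(2\logr(n))$, since relaxing the space bound only enlarges the class. Small $i$ are a finite exception: on those lengths the languages are finite, so they can be hard-wired into the nondeterministic machine.

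For $\NL{} \subseteq \PG{}$, I take $L \in \NL{}$, so $L \in \NSPACE(c\logr(n))$ for some $c$. Applying \Cref{prop:nspace-subseteq-exptime-generator} with $s(n) = c\logr(n)$, which is space-constructible, gives $L \in \GTIME(2^{O(c\logr(n))})$. Since $2^{O(c\logr n)} = n^{O(c)}$, this class is contained in $\GTIME(n^k)$ for some $k$, and hence in $\PG{}$.

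The only delicate point is making sure the constant hidden in the $O(\cdot)$ of \Cref{prop:nspace-subseteq-exptime-generator} is fixed for the given machine. It must not depend on $n$, so that the bound really is a single polynomial; this holds because the constant comes from the configuration count of one fixed NL machine. The time bound must also be time-constructible, which is immediate for $n^k$.
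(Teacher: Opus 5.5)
Your derivation is the paper's own. The corollary is obtained by citing \Cref{prop:subset-nondet} for the first inclusion and \Cref{prop:nspace-subseteq-exptime-generator} at $s(n) = c\logr(n)$ for the second. Your treatment of $c \le 1$, of small lengths, and of the machine-dependent constant is fine. However, the delicate point is not where you put it, and the second inclusion inherits a genuine gap.

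For sublinear $s$, \Cref{prop:nspace-subseteq-exptime-generator} rests on \Cref{prop:partial-l}, which claims $\PARTIAL(L) \in \NSPACE(s(n))$. That proof guesses the suffix $r$ one symbol at a time as the input head advances, and it simply assumes the head never moves left when $s(n) < n$. Standard \NL{} machines have a two-way read-only input head. That is the model behind the facts the paper quotes about \NL{}: 2-SAT membership, closure under complement, and the linear Datalog correspondence. For such machines the lemma fails unless $\NL{} = \NP{}$. Take $L_{\mathit{sat}} = \{\langle\phi\rangle\#a : a \in \{0,1\}^m \text{ satisfies the CNF formula } \phi \text{ over } m \text{ variables}\}$. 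It is decidable in deterministic logarithmic space, yet $(\langle\phi\rangle\#, 1^m) \in \PARTIAL(L_{\mathit{sat}})$ exactly when $\phi$ is satisfiable.

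This is not merely an artefact of the route. Suppose $h$ is collision-resistant in the sense of \Cref{def:collision-resistance} and computable in logarithmic space. Then $\textit{COLLISION}_h$ is decidable in logarithmic space, by comparing the bits of $h(x)$ and $h(y)$ one at a time and recomputing each. Meanwhile the argument of \Cref{prop:collisions-not-generable} shows $\textit{COLLISION}_h \notin \PG{}$. So an unconditional proof of $\NL{} \subseteq \PG{}$ would rule out logspace-computable collision-resistant hashing.

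What your argument (and the paper's) actually proves is the inclusion for nondeterministic logspace machines with a \emph{one-way} input head, where guessing $r$ on the fly is sound. That class still contains $\LG{}$, since the simulating machine in \Cref{prop:subset-nondet} reads the candidate string left to right. So $\LG{} \subseteq \PG{}$ survives. For standard \NL{}, however, the inclusion $\NL{} \subseteq \PG{}$ is not established by this argument.
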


This implies properties with preconditions such as 2-SAT, or connected directed graphs, will also have efficient polynomial-time generators.
This also gives a practical sufficient condition for \PG{} membership: if the predicate can be decided by an \NL{} machine, then an efficient polynomial-time generation algorithm is guaranteed to exist. 

We can apply a known result~\citep{arenas2019efficient} to show that there exists an algorithm in \EPG{} uniformly generating any $L \in \LG{}$.
The main idea is to view logspace generators as logspace transducers (adapted to our setting), which \citet{arenas2019efficient} proved have uniform samplers with a bounded probability of failure.
It is straightforward to adapt such uniform samplers to a generator in \EPG{}.

\begin{proposition}\label{prop:lg-epg}
If $L \in \LG{}$, then there exists a \EPG{} generator of uniform distributions. % for each $L_i, i \in \mathbb{N}$.
\end{proposition}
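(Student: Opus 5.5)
Given $L \in \LG{}$, fix a logspace generator family $G = (G_i)_{i \in \mathbb{N}}$ with $\Lang{G_i} = L_i$ and work-space bounded by $c\logr(i)$. The plan is to reinterpret each $G_i$ as a nondeterministic logspace transducer in the sense of \citet{arenas2019efficient}: its outputs are exactly $L_i$, and there are only polynomially many configurations. Then we invoke their result that the set of outputs of length $n$ of such a transducer can be sampled uniformly by a polynomial-time Las Vegas algorithm, which either outputs a uniform element or fails with probability bounded away from $1$. Finally we wrap that sampler as a generator that retries on failure; this gives expected polynomial time and exactly uniform output.

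\textbf{Step 1 (translation).} A configuration of $G_i$ with the input-tape contents abstracted away is a tuple: state, work-tape contents, work-head position, and output length written so far. There are at most $p(i) = 2^{O(c\logr i)} \cdot i = \mathrm{poly}(i)$ of these. Reading a random bit becomes a nondeterministic branch. Writing an output symbol becomes emitting that symbol. A branch that would read past the input is simply never completed. This yields a layered NFA-like structure (an NL-transducer) whose set of outputs equals $L_i$. Acceptance is monotone by \Cref{obs:monotone}, so every $x \in L_i$ is witnessed by some finite accepting path. Conversely, every accepting path of the transducer corresponds to a finite bitstring on which $G_i$ accepts with that output. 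Uniformity of the family means the description of this transducer can be produced from $1^i$ in polynomial time.

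\textbf{Step 2 (the obstacle: unbounded runs).} The sampler of \citet{arenas2019efficient} requires computation paths of polynomially bounded length, or at least an acyclic, polynomial-size structure. A space-bounded generator, however, may cycle through configurations for unboundedly long on its random input (cf.\ \Cref{appendix:config-counting-gspace-gtime}). I would handle this by observing that only the output matters. I first build the configuration graph and compute, in polynomial time, reachability between configurations without emitting output. I then collapse non-emitting paths into single edges, so each edge between "emit symbol $a$" events records only whether a non-emitting path exists. Since every output has length exactly $i$, the collapsed structure is an unrolled DAG with $i$ layers of size $\mathrm{poly}(i)$. Its path labels are exactly $L_i$, possibly with several paths per string, which is the ambiguity the Arenas et al.\ technique is designed to handle. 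I expect the main care to go into checking that this collapse preserves the set of outputs exactly, including the requirement that the final configuration halts and accepts.

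\textbf{Step 3 (assembling the \EPG{} generator).} The new generator $G'_i$ uses its random bits to run the uniform Las Vegas sampler on the collapsed DAG, and restarts on failure. Let the failure probability be at most $1-1/q(i)$ for a polynomial $q$; if needed, amplify it by repetition. The number of trials is then geometric, and the expected running time is polynomial. Each successful trial outputs an element of $L_i$ uniformly.

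Two conditions of \Cref{def:expected-time-generator-family} remain to check:
\begin{itemize}
\item \emph{Productivity.} $\mu(A^\omega_{G'_i}) = 1$, since the probability of infinitely many failures is $0$. If $L_i=\emptyset$, emptiness of the DAG is detected in polynomial time and $G'_i$ rejects all inputs.
\item \emph{Short witnesses.} Every $w \in L_i$ is produced on some bitstream within polynomial time, because the first trial succeeds and outputs $w$ with positive probability.
\end{itemize}
The sampler's random choices (e.g.\ rationals with polynomially many bits) are implemented with fair bits by rejection, and the extra failures are folded into the same retry loop.
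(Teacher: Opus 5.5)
Your proposal is correct and follows the paper's route. The paper likewise treats each $G_i$ as a nondeterministic logspace transducer for the relation $R_G$, invokes the polynomial-time Las Vegas uniform sampler of \citet{arenas2019efficient}, and repeats until a trial succeeds. Your Step~2 (collapsing non-emitting paths into an $i$-layer unrolled NFA to handle the unbounded runs of space-bounded generators) and your explicit check of the conditions of \Cref{def:expected-time-generator-family} supply details the paper leaves implicit when it calls the adaptation ``straightforward''.
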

\begin{proof}
We defer the proof to \Cref{proof:lg-epg}.
\end{proof}

\takeaway{When a precondition is decidable in \NL{} (e.g.\ reachability, 2-SAT, and most syntactic well-formedness checks) an efficient generator is \emph{guaranteed} to exist, and (by \Cref{prop:lg-epg}) even a uniform one.}

We also see that even in the restricted setting of \LG{}, we are still able to generate \NL{}-complete languages, analogous to how \PG{} contains \NP{}-complete problems.

\begin{definition}
Reachability on directed graphs is defined as:
$$\REACH{} = \{ \langle G, s, t \rangle : \exists\ \text{path from}\ s\ \text{to}\ t\ \text{in}\ G\}$$
\end{definition}

\begin{proposition}\label{prop:reach-lg}
$\REACH{} \in \LG{}$.
\end{proposition}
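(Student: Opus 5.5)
We construct, for each length $n$, a logspace generator $G_n$ whose language is exactly the set of length-$n$ strings in $\REACH{}$; uniformity in $n$ is immediate from the construction. The approach is witness planting, as in \Cref{prop:cnf-sat-pg}: first plant a path from $s$ to $t$, then fill in the rest of the graph at random. The obstacle is that a logspace machine cannot store the path. We therefore fix an encoding that lets the generator decide on the fly whether each edge is present, while remembering only $O(\logr n)$ bits.

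\textbf{Encoding.} Encode $\langle G, s, t\rangle$ as the node count $m$ in binary, then $s$ and $t$ in binary, then the adjacency matrix row by row, with padding to hit length exactly $n$. The padding is handled as in the $L_{\mathit{pad}}$ discussion, for example via a trailing block $\#0^i$. A length-$n$ string in the language therefore determines $m$ up to the constraint that the header plus $m^2$ plus padding equals $n$. The generator $G_n$ first selects $m$ among the finitely many admissible values, using a bounded rejection loop on a counter of $O(\logr n)$ bits; this is almost surely terminating, as in \Cref{prop:rejection-sampling-valid}. It then selects $s,t<m$ and writes the header. All of this uses logspace.

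\textbf{Planting the path.} We cannot store a random path, but we do not need one. It suffices to plant a path whose edges can be recognised locally. We plant a witness of the form $s=v_0 \to v_1 \to \dots \to v_k = t$ and let it be generated \emph{in adjacency-matrix order}. The naive version would choose a random subset $S \ni s,t$ and a random ordering of $S$, but this needs too much space. A cleaner trick avoids this. Every graph in which $t$ is reachable from $s$ contains a \emph{shortest} path, and that path is determined by the whole graph. So instead of planting, the generator emits each matrix entry as a free random bit, except that it must avoid outputting a graph without a path.

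A logspace generator cannot check reachability afterwards. Instead we use nondeterminism through the random bits themselves. The generator guesses the path incrementally as a sequence of vertices, but it writes rows in order, so it would need to know for each row $u$ whether $u$ is on the path and what its successor is. To resolve this, we let the random bits specify, for each row $u$ in order, a \emph{successor pointer} $\mathit{next}(u)\in\{0,\dots,m-1\}\cup\{\bot\}$ chosen when row $u$ is written. While writing row $u$, the generator forces entry $(u,\mathit{next}(u))$ to $1$ and fills the other entries randomly. It needs only $O(\logr n)$ space: the current pointer, the column counter, and the row counter.

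The main difficulty is ensuring that the chosen pointers form a genuine path from $s$ to $t$ while only rows in order are visible. My first attempt is to allow the pointers to form an arbitrary functional graph and to accept only if following pointers from $s$ reaches $t$. That check needs re-reading, which the generator cannot do. The second idea, which I would pursue, is to generate an \emph{equivalent} permuted encoding in which the path vertices are laid out contiguously. This requires the string encoding to be relaxed, for example to an edge-list encoding in which vertices on the planted path are emitted first as edges $(s,v_1),(v_1,v_2),\dots,(v_{k-1},t)$. Here each step remembers only the previous vertex, and the remaining edges are appended randomly afterward. Under an edge-list encoding, every reachable instance arises from some seed, since its edges can be listed with a path first. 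I expect the paper fixes such an encoding, and I would adopt it. The final steps are then the following. Surjectivity follows because any instance with an $s$–$t$ path of length $k$ can be produced by choosing those edges first. Soundness is immediate. Almost-sure productivity follows from the bounded loops and the padding bookkeeping, with length kept exactly $n$ by a counter. The space bound is $O(\logr n)$ because only the current vertex and the counters are stored.
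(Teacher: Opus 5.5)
Your final construction, after you abandon the adjacency-matrix attempts, is the one in the paper's proof. It uses an edge-list encoding, writes a walk from $s$ to $t$ edge by edge while storing only the current vertex and $O(\logr n)$-bit counters, then appends arbitrary further edges.

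The gap is the surjectivity step, ``every reachable instance arises from some seed, since its edges can be listed with a path first''. This shows that every \emph{graph} with an $s$--$t$ path has \emph{some} encoding among the outputs. But $\Lang{G_n}$ must equal a set of \emph{strings}, and an edge list is ordered. If every ordering is a legal encoding, then a string with $s\neq t$ whose first edge does not leave $s$ lies in $\REACH{}$ but is never produced. If instead the encoding is canonical (say, sorted edges), your outputs are generally not legal encodings, and soundness fails. Similarly, the appended random edges can repeat path edges, which your generator cannot detect without storing the path, so the encoding must also allow multi-edges. The paper's proof has exactly the same gap: its step~3 always writes the walk first, and its completeness argument is phrased for graphs $H$ rather than strings.

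No cleverer logspace generator repairs this. A generator with $c\logr n$ work space has only $\mathrm{poly}(n)$ configurations (state, work tape, work head, bit under the input head). Suppose two accepting runs producing $xy$ and $x'y'$, with $|x|=|x'|$, are in the same configuration just after writing $x$ resp.\ $x'$. Splicing their random bits at that point gives an accepting run producing $xy'$.

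Now take vertices $c_0,\dots,c_k$ and $b_i^0,b_i^1$ for $1\le i\le k$, with $s=c_0$ and $t=c_k$. For $z\in\{0,1\}^k$ let $x_z$ list the edges $(c_{i-1},b_i^{z_i})$ and $y_z$ list the edges $(b_i^{z_i},c_i)$. Then $x_z y_{z'}$ has an $s$--$t$ path iff $z=z'$. So the $2^k$ strings $x_z y_z$, all of the same length $\Theta(k\logr k)$, need pairwise distinct configurations at the cut, which is impossible for large $k$. The same gadget written as adjacency-matrix rows shows that your first encoding was not merely awkward but hopeless.

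Hence $\REACH{}\in\LG{}$ is false for the natural encodings: adjacency matrices, and edge lists in sorted or arbitrary order. It does hold for an encoding in which a witness walk from $s$ to $t$ is listed first and repeated edges are allowed. That choice must be stated explicitly; once it is, your argument, like the paper's, goes through.
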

\begin{proof}
The construction is based on certificate sampling and instance reconstruction, which is generalised by \Cref{prop:generator-verifier-equivalence} later on.
We give a sketch here and defer the details to \Cref{proof:reach-lg}.
We represent graphs as an edge-list, and partition \REACH{} by the total number of vertices and edges.
First, select the source and target as $s$ and $t$, then generate a random walk from $s$ to $t$, outputting edges as they are used.
Then, add edges to the graph in Erdos-Renyi fashion, again outputting them directly after selection.
Every $x \in $ \REACH{} has a nonzero probability of being sampled.
\end{proof}

\subsection{Generators With Feedback}\label{subsec:feedback-generators}
Testing techniques like concolic execution and coverage-guided fuzzing produce a (potentially infinitely long) sequence of test inputs, where the results from executing previous inputs are used to guide the generation of the next using the results from running the SUT.
Since the SUT can perform arbitrary computation, we separate the SUT from the \emph{generation technique} in our model.

\begin{definition}[Function Oracle Turing Machine]
A TM $M_f$ with oracle access to $f : \Sigma^* \rightarrow \Sigma^*$ is a standard TM with an additional read-write query tape $t_q$, a read-only answer tape $t_a$, and a designated state $s_o$.
When $M_f$ enters state $s_o$ with $x$ written on the query tape $t_q$, the contents on the answer tape $t_a$ will be replaced by $f(x)$. 
The language recognised/generated by $M_f$ is denoted $\Lang{M_f}$.

For a Turing machine $M$, we denote $M_f$ as the same Turing machine with access to an oracle for $f$.
When there is no ambiguity, we may drop the subscript $f$ from $M_f$ for brevity.
If $M$ is a generator, we will denote it with $G$ instead, i.e. $G_f$ is a generator $G$ with oracle access to $f$.
\end{definition}
This oracle machine is different from the ones discussed in \Cref{prop:pg-oracle-superset-np}: here oracle access is to a function $f$ whereas the previous one gives oracle access to a decider $f : \Sigma^* \rightarrow \{T, F\}$.

The execution of the SUT, together with the computation of data guiding future generation, can be viewed as an oracle available to the generator.
We present simplified versions of two prevalent testing methodologies and show how they fit into our framework, noting that real implementations often employ variants and optimisations to reduce time and space usage.
Nonetheless, our eventual result still holds: any space-bounded generation process, no matter how it is guided, can only generate languages within some complexity class, thus ruling out all languages outside of the class.

\begin{example}[Concolic Execution]\label{example:concolic}
Concolic execution for an SUT $S$ is an iterative process that first generates an input $x_0$, then tracks the control flow path taken by the run $S(x_0)$, collecting path constraints $p_0$ along the way.
Then, new inputs $x_{i}$ are generated to ensure that they reach control-flow paths unexplored by previous inputs $x_0, \ldots, x_{i-1}$, usually through an SMT solver to produce $x_i$ that satisfies an alternative path constraint, which forces unexplored paths to be taken.

\ifarxiv
Here, the SUT oracle \(f : \Sigma^* \to \Sigma^*\) accepts a generated input \(x_i\), executes the SUT on \(x_i\), and returns the path information \(p_i = f(x_i)\) observed during that execution. 
The generator $G$ has oracle access to $f$.
After generating each output $x_i$, $G$ calls $f(x_i)$ and stores the paths taken within the codebase as a list $p_0 = f(x_0), \ldots, p_{i} = f(x_{i})$.
To produce the element $x_{i+1}$, $G$ invokes a solver to find an input $x_{i+1}$ that satisfies the path condition $p_{i+1}$, distinct from all previous paths $p_0, \ldots, p_{i}$.
\else
In our framework, the SUT oracle $f : \Sigma^* \to \Sigma^*$ maps a generated input $x_i$ to the path information $p_i = f(x_i)$ observed during its execution.
The generator stores $p_0, \ldots, p_i$ and invokes a solver to find an $x_{i+1}$ that satisfies a path condition distinct from all of them.
\fi
\end{example}

\begin{example}[coverage-guided fuzzing]\label{example:cgf}
coverage-guided fuzzing produces new inputs by mutating previous inputs that are likely to gain new coverage.
Historical input $x$ is selectively stored as $y_k$ together with the code coverage $p_k$ obtained by $y_k$.
To generate $x_{i+1}$ that may achieve new coverage, inputs $y$ that were close to achieving new coverage are selected for mutation.
If new coverage has been found by $x_{i+1}$, then it is stored along with the previous inputs that expanded coverage: $y_1, \ldots, y_k$.

\ifarxiv
Here, the SUT oracle $f : \Sigma^* \rightarrow \Sigma^*$ accepts inputs $x_i$ and outputs the coverage data $p_k$ for the codebase achieved by $x_i$.
The generator has oracle access to $f$, storing historical inputs $y_1, \ldots, y_k$.
Using these, it produces new inputs by mutating existing inputs for new coverage.
\else
In our framework, the oracle $f : \Sigma^* \rightarrow \Sigma^*$ maps an input to its coverage data.
Then, the generator stores the coverage-expanding inputs $y_1, \ldots, y_k$, and mutates them to produce new inputs.
\fi
\end{example}

Since the oracle contains computation of the SUT, they can realistically be of any complexity (e.g. compilers), and is independent of the generation complexity.
Therefore, we do not constrain resource usage of the oracle.
However, as the output still needs to be parsed and stored by the generator, the \emph{size} of the output from the oracle call can affect the complexity of generation.
For an oracle, we therefore define the following constraint:
\begin{definition}
A function $f : \Sigma^* \rightarrow \Sigma^*$ is length-bounded by $s : \mathbb{N} \rightarrow \mathbb{N}$ if for all $x \in \Sigma^*$, $|f(x)| \leq s(|x|)$.
\end{definition}

Complexity classes can be defined analogously for Turing machines with oracle access.
Here, the oracle input tape is under the same space bounds as work tapes.
\begin{definition}
For space constructible function $s : \mathbb{N} \rightarrow \mathbb{N}$, and a function $f : \Sigma^* \rightarrow \Sigma^*$, define the classes of languages decidable/generable within $s$ space as follows:
\begin{align*}
    \DSPACE{}(s(n))^f &= \{ \Lang{M_f} : M \text{ is space-bounded by $s$ } \} \\
    \GSPACE{}(s(n))^f &= \{ \Lang{G_f} : G \text{ is space-bounded by $s$ } \}
\end{align*}
Note that the space-bound $s$ applies both to the work tape and the query tape.
\end{definition}

\begin{definition}
For a function $f : \Sigma^* \rightarrow \Sigma^*$,
$$
    \PSPACE{}^f = \bigcup_{c \in \mathbb{N}} { \DSPACE(n^c) }^f \qquad
    \PSPACEG{}^f = \bigcup_{c \in \mathbb{N}} { \GSPACE(n^c) }^f
$$
\end{definition}

Both techniques described in \Cref{example:concolic} and \Cref{example:cgf} can use infinite space in the limit.
However, real-world applications are bounded by resource limits, and using polynomial space as a proxy for realism, we ask: what constraints placed on coverage-guided fuzzing and concolic execution ensure that the procedures remain in $\PSPACEG{}$? 

\paragraph{Coverage-Guided Fuzzing}
Since the codebase is fixed, the coverage information has constant size regardless of the input sizes, thus the output of the oracle is always within a polynomial bound (in fact, constant size).
Each new coverage-finding input is stored alongside its coverage information, which consumes space on the work tape.
Within a polynomial space bound $p$, generators of outputs of sizes up to $n$ can store $p(n) / n$ previous inputs (if stored naively). 

\paragraph{Concolic Execution}
The data returned by the oracle consist of the path taken by the generated input $x_i$ of size (up to) $n$.
For the oracle to stay within polynomial space bounds, the path taken must therefore have at most polynomially many control flow statements (which is achievable with instrumentation and timeout, for instance).
As the previous paths taken are stored by the generator, there can be at most a polynomial number of historical paths stored.

The above examples are illustrative for demonstrating the kind of constraints that polynomial space induces on naive implementations of popular generation algorithms.
In reality, optimisations would be used on implementations for either technique~\cite{rebert2014optimizing, avgerinos2014enhancing}: the calculus shifts accordingly for the number of paths or test cases one is able to store, but the bottleneck remains in the polynomial space bounds.
We show below the limits of generation capability for such feedback-driven approaches when a polynomial space bound is adhered to. 

The previous complexity results for space-bounded generators still hold with oracle access.
\begin{proposition}\label{prop:gspace-subset-nspace-subset-exptime-oracle}
Suppose $s : \mathbb{N} \rightarrow \mathbb{N}$ is both space and time constructible with $s(n) \geq \logr(n)$, and $f : \Sigma^* \rightarrow \Sigma^*$ is length-bounded by $2^{O(n)}$.
Then: 
$$\GSPACE(s(n))^f \subseteq \NSPACE(s(n))^f \subseteq \GTIME(2^{O(s(n))})^f$$
\end{proposition}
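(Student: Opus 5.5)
The plan is to prove the two inclusions separately, relativising the earlier proofs of \Cref{prop:subset-nondet} and \Cref{prop:nspace-subseteq-exptime-generator}. The one genuinely new ingredient is to check that oracle calls interact correctly with the space and time accounting. Throughout, the oracle $f$ is the \emph{same} function on both sides. Each inclusion will be shown by a machine-to-machine simulation that forwards oracle queries verbatim.

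\emph{First inclusion, $\GSPACE(s(n))^f \subseteq \NSPACE(s(n))^f$.} Given an $s$-space-bounded generator family $G_f$ and an input $x$ with $|x| = n$, the nondeterministic machine $N_f$ runs $G_{n}$. Whenever $G_n$ reads a random bit, $N_f$ guesses that bit. Whenever $G_n$ would write an output symbol, $N_f$ instead compares it with the next symbol of $x$ using a $\logr(n)$-bit pointer, and rejects on a mismatch or on overflow. Oracle queries of $G_n$ are copied to $N_f$'s own query tape. This is legitimate because the query tape is already counted in the $s$-bound for $G$. $N_f$ accepts iff $G_n$ accepts and the whole of $x$ has been matched. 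Since $s(n) \ge \logr(n)$, the pointer is absorbed into $O(s(n))$. Correctness is exactly as in \Cref{prop:subset-nondet}: accepting branches are in bijection with accepted prefixes in $A_{G_n}$ that produce $x$. No halting issue arises, because $G_n$ halts on every finite bitstring.

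\emph{Second inclusion, $\NSPACE(s(n))^f \subseteq \GTIME(2^{O(s(n))})^f$.} Here I follow \Cref{prop:nspace-subseteq-exptime-generator}. Fix an $s$-space oracle NDTM $M_f$. Configurations of $M_f$ on inputs of length $n$ include the work and query tapes, and the answer tape, whose content is $f(q)$ with $|q| \le s(n)$. By the length bound on $f$, the answer tape therefore has length $2^{O(s(n))}$, so a configuration has size $2^{O(s(n))}$. However, only its head position matters beyond $q$, since the answer contents are recomputable from $q$ by one oracle call. Hence a configuration can be represented in $O(s(n))$ bits plus an $O(s(n))$-bit answer-head index, and there are $2^{O(s(n))}$ of them. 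The generator $G_n$ first builds, in time $2^{O(s(n))}$, the configuration graph for \emph{all} inputs $x$ of length $n$ simultaneously, i.e.\ over states that include partially fixed input symbols. I would instead do the simpler thing from the earlier proof: sample $x$ symbol by symbol, and before fixing each symbol decide, by reachability over the configuration graph of $M_f$ restricted to prefix-consistent inputs, whether some extension is accepted. It then continues only along extendable choices, falling back deterministically to the lexicographically first extendable symbol if the random choice is dead. Each reachability computation is a BFS over $2^{O(s(n))}$ nodes. Edge generation requires oracle calls at query states, each costing time proportional to the answer length $2^{O(n)}$, which is $2^{O(s(n))}$ when $s(n) \ge n$.

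The main obstacle is this last point: the length bound $2^{O(n)}$ on $f$ is in terms of the query length, and queries have length at most $s(n)$. So an answer costs $2^{O(s(n))}$ time to read, and that step is fine. The trouble lies in emptiness and the partition: if no $x$ of length $n$ is in the language, $G_n$ must reject every input, which the initial reachability check handles. Also, time-constructibility of $s$ is used to make $G_n$ compute the bound $2^{c\,s(n)}$ and lay out the graph. I expect the delicate part to be making the configuration encoding independent of the full answer-tape contents while keeping oracle semantics faithful.
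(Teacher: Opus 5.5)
Your proposal is essentially the paper's proof. The first inclusion is the same guess-the-random-bits simulation. The second combines the same configuration count (the answer tape is determined by the last query, so only an $O(s(n))$-bit head index is needed) with the same symbol-by-symbol construction, which commits only to prefixes that have an accepting extension, i.e.\ it decides $\PARTIAL(L)$ in time $2^{O(s(n))}$. Your deterministic fallback is an equally valid substitute for the paper's fresh random permutation.

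One step should be made explicit; the paper's $\PARTIAL(L)$ lemma re-guesses the suffix on demand and glosses over it too. With a two-way input head, the extendability check must keep the guessed suffix fixed across re-reads, as your ``partially fixed input symbols'' suggests. The graph therefore has $2^{O(n+s(n))}$ nodes rather than $2^{O(s(n))}$. This is absorbed only when $s(n) = \Omega(n)$, which time-constructibility guarantees in its usual sense (requiring $s(n) \ge n$). For genuinely sublinear $s$ the step fails: for $s = \logr n$, extendability of the logspace language $\{\varphi\#a : a \text{ satisfies the CNF } \varphi\}$ is \NP{}-hard.
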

\begin{proof}
Deferred to \Cref{proof:gspace-subset-nspace-subset-exptime-oracle}.
Similar to the non-oracle version of \Cref{prop:nspace-subseteq-exptime-generator}.
Extra care is needed when counting the configurations of oracle machines: the read-only answer tape and $f$ being a function are the two factors that keep the number of configurations at $2^{O(s(n))}$.
\end{proof}

\begin{proposition}\label{prop:pspaceg-eq-pspace-oracle}
For $f : \Sigma^* \rightarrow \Sigma^*$ length-bounded by $2^{O(n)}$,
$\PSPACEG{}^f = \PSPACE{}^f$
\end{proposition}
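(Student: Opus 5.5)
We prove the two inclusions separately, in both cases by instantiating \Cref{prop:gspace-subset-nspace-subset-exptime-oracle} with a polynomial space bound and then adding the one relativised classical fact that is needed.

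For $\PSPACEG{}^f \subseteq \PSPACE{}^f$, take $L \in \PSPACEG{}^f$ generated within space $n^c$. By the first inclusion of \Cref{prop:gspace-subset-nspace-subset-exptime-oracle}, applied with $s(n) = n^c$ (which is both space and time constructible and at least $\logr(n)$), $L \in \NSPACE(n^c)^f$. It remains to invoke Savitch's theorem relative to $f$, giving $\NSPACE(n^c)^f \subseteq \DSPACE(n^{2c})^f$. Savitch's recursion only asks whether one configuration can reach another in $2^k$ steps. A configuration of an oracle machine consists of the state, the work and query tape contents and head positions, and the contents of the answer tape. The answer tape is a problem for the space bound, since it can hold up to $|f(q)| \le 2^{O(|q|)}$ symbols. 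The fix is the observation already used in \Cref{prop:gspace-subset-nspace-subset-exptime-oracle}: the answer tape is read-only and $f$ is a function, so its contents are determined by the last query $q$. A configuration can therefore be stored as the query-tape contents together with the answer-head position, which takes $O(s(n))$ bits because the position is bounded by $2^{O(s(n))}$. Whenever the deterministic simulator needs the symbol under the answer head, it re-issues the oracle call on $q$ (its own query tape has room for $q$, since $|q| \le s(n)$) and reads the required cell. With this encoding the standard midpoint recursion uses $O(s(n)^2)$ space, so $L \in \PSPACE{}^f$. I expect this step to be the main obstacle: one has to check that re-querying and the $O(s)$-bit configuration encoding are compatible with the machine model's tape conventions.

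For $\PSPACE{}^f \subseteq \PSPACEG{}^f$, take $L \in \DSPACE(n^c)^f$. The plan is to adapt the construction behind \Cref{prop:nspace-subseteq-exptime-generator} while keeping the space polynomial. Given a length $n$ on its index tape, the generator first needs a fallback element $w_0 \in L_n$, or must reject every input if $L_n = \emptyset$. It finds this deterministically by enumerating the strings of length $n$ in lexicographic order and running the $\PSPACE{}^f$ decider on each, reusing the same $n^c$ space every time and holding the current candidate in $O(n)$ space. Next it reads $n$ random bits to form a candidate $x$ and decides $x \in L$ with the oracle decider. If the answer is yes it outputs $x$, and otherwise it outputs $w_0$. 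Output is written only after the decision, so the space stays polynomial; the partial candidate is kept on the work tape, which costs $O(n)$ cells.

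Finally we verify the generator conditions. Every run reads exactly $n$ bits (or none, when $L_n = \emptyset$), so the machine is halting with $\mu = 1$ (respectively rejects everything, giving $\mu = 0$), and it rejects only when it runs out of input. Every $x \in L_n$ is produced with probability at least $2^{-n}$, and nothing outside $L_n$ is ever produced. Hence $\Lang{G_f} = L$ and $L \in \PSPACEG{}^f$.
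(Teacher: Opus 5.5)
Your proposal is correct. For $\PSPACEG{}^f \subseteq \PSPACE{}^f$ you follow the paper's route: pass to $\NSPACE(n^c)^f$ via \Cref{prop:gspace-subset-nspace-subset-exptime-oracle}, then run Savitch's recursion relative to $f$ using the $2^{O(s(n))}$ configuration count. You flagged the configuration encoding as the delicate step, and it needs one correction. In this oracle model a call does not erase the query tape, and the machine may overwrite that tape afterwards. So the current query-tape contents need not be the query that produced the answer tape. Your encoding should therefore store the last issued query $q$ separately (at most $s(n)$ further symbols) alongside the answer-head position, and re-query on that $q$. It remains $O(s(n))$ bits. The paper's lemma glosses over the same point.

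For $\PSPACE{}^f \subseteq \PSPACEG{}^f$ you take a different route. The paper's generator cycles through $\Sigma^n$ repeatedly, outputting each string the decider accepts with probability $1/2$, and loops back if a pass ends with no output. It needs no fallback element. However, its productivity is only almost sure, via a geometric-tail argument, and it runs unboundedly on a null set of streams. You instead reuse the fallback pattern of \Cref{prop:re-subseteq-gen} and \Cref{prop:default-elem-implies-pg-membership}: deterministically precompute $w_0 \in L_n$, then test a single random candidate. When $L_n \neq \emptyset$ your generator reads a fixed number of bits on every run. So it is productive on every stream rather than almost every stream, and the generator conditions are immediate to check. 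The costs are an up-front full enumeration and a distribution that puts all leftover mass on $w_0$, which is harmless for the support-only notion used here. Two cosmetic fixes remain. First, when $L_n = \emptyset$ the generator must consume its input until it reads a blank, not reject after reading nothing (\Cref{def:halting-transducer}); the paper's version has the same wrinkle. Second, for $|\Sigma| > 2$ you need $n\lceil \log_2 |\Sigma| \rceil$ bits with a surjective map from bit-blocks to symbols.
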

\begin{proof}
Deferred to \Cref{proof:pspaceg-eq-pspace-oracle}.
$\PSPACEG^f \subseteq \PSPACE^f$ follows from a variant of Savitch's theorem adapted for our oracle machines.
$\PSPACE^f \subseteq \PSPACEG^f$ follows since rejection sampling can be performed whilst reusing space.
\end{proof}

If the oracle $f$ itself is computable in polynomial space, then both $\PSPACE^f$ and $\PSPACEG^f$ collapse to $\PSPACE$.
We thus have the following as a corollary:
\begin{theorem}\label{prop:pspaceg-eq-pspace}
$\PSPACEG{} = \PSPACE{}$.
\end{theorem}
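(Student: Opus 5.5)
The plan is to obtain \Cref{prop:pspaceg-eq-pspace} as the oracle-free instance of \Cref{prop:pspaceg-eq-pspace-oracle}. I would pick a trivial oracle $f$, for instance the constant function returning the empty string, or the identity truncated to length $0$. Such an $f$ is length-bounded by $2^{O(n)}$ and computable in constant space. Two facts then need checking. First, $\PSPACEG{}^f = \PSPACEG{}$: any oracle generator can simulate its query by clearing the answer tape, which keeps it within its space bound, and any plain generator is already an oracle generator that never queries. Second, $\PSPACE{}^f = \PSPACE{}$, by the same reasoning for deciders. Chaining these with \Cref{prop:pspaceg-eq-pspace-oracle} gives $\PSPACEG{} = \PSPACE{}$.

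For the more general remark in the paper, where $f$ is merely computable in polynomial space, the plan is to eliminate the oracle by inline simulation. A query $x$ written on the query tape has length at most $p(n)$, because the space bound $s$ applies to the query tape. The answer $f(x)$ can then be computed in space $\mathrm{poly}(p(n))$, which is still polynomial. One subtlety is that the answer may have length up to $2^{O(|x|)}$ under the length bound. I would avoid storing it by recomputing any requested bit of $f(x)$ on demand, keeping only a pointer into the answer, which costs $O(|x|)$ bits. To make this work I would strengthen the hypothesis so that $f$ is polynomial-space computable bitwise, in the sense that its $i$-th output bit is computable in space polynomial in $|x|$. For the constant oracle this issue does not arise.

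I expect no real obstacle, since the substantive work sits in \Cref{prop:pspaceg-eq-pspace-oracle}, namely Savitch plus space-reusing rejection sampling. The one point that needs care is confirming that removing the oracle preserves the generator conditions: the halting behaviour, the rule that rejection happens only on input exhaustion, and almost-sure productivity. The simulation is deterministic and does not touch the random input tape, so the read pattern on the bitstring is unchanged. Hence $A_G$, and with it $\mu(A_G^\omega)$, is preserved exactly.
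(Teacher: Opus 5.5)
Your proposal is correct and matches the paper, which obtains \Cref{prop:pspaceg-eq-pspace} as an immediate corollary of \Cref{prop:pspaceg-eq-pspace-oracle}: for an oracle $f$ computable in polynomial space, both $\PSPACE{}^f$ and $\PSPACEG{}^f$ collapse to their unrelativised versions, and your trivial oracle is exactly such an instance. Your caveat about the paper's general remark is a sound refinement the paper leaves implicit: since answers may have length up to $2^{O(|x|)}$, that remark needs $f$ to be computable bitwise in polynomial space. The theorem itself does not need it.
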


If the SUT itself runs in polynomial space, then the polynomial-space generable properties for the SUT are exactly the $\PSPACE$ decidable ones. 
If the SUT does not run in polynomial space, then the polynomial-space generable properties are exactly those with input sets decidable by a $\PSPACE$ machine with oracle access to the SUT.

\takeaway{\Cref{prop:pspaceg-eq-pspace} implies feedback-driven techniques---coverage-guided fuzzing, concolic execution---are bounded by \PSPACE{}. 
For example, unless the SUT itself is powerful enough to solve \EXPTIME{}-complete problems, \EXPTIME{}-complete sets remain out of reach regardless of how the feedback loop is engineered.
\Cref{prop:generator-time-space-tradeoff} says that any input set a feedback-driven technique can sample in polynomial space is also reachable by a (slower) generator of independent samples.}

\begin{example}
If $\PSPACE{} \neq \textit{EXP}$, then the following problems are not generable in polynomial space.
\begin{enumerate}
    \item Board positions of generalised size-$n$ Chess/Shogi/Draughts games with a winning strategy for the player that moves first.
    \item For any programming language, the set of programs that terminate in $k$ steps, with the step count specified in binary.
\end{enumerate}
\end{example}

\section{Designing Generators and Generator Libraries}\label{sec:practitioners}
In this section, we introduce some practical insights that come from our theoretical framework.
We first provide canonical constructions for building efficient generators (that are members of \PG{}), and then apply the results we established in generation complexity to gain understanding of the limitations of compositional library designs for randomised testing.

\subsection{Conditions for \PG{} Membership}\label{subsec:pg-conditions}
We propose two characterisations of \PG{} languages, which formalise the main known techniques for obtaining \PG{} generators.

\subsubsection{$\PG{}$ Via Basis Extension}
The first technique is based on the ability to generate \emph{default} elements of the language.
A difficulty with generating inputs accepted by \NP{} deciders is that the majority of the branches may fail.
Thus, a generator would need to find a branch of the \NP{} computation that succeeds in order to confirm a string's membership---which by naive methods leaves an exponential number of branches needing to be checked.
However, if one has the ability to efficiently find a default element as a fallback, a generator is free to try potentially-failing computations, allowing it to generate the remainder of the \NP{} language.

\begin{definition}
For the language $L$, define the functions that compute a default element of $L$ as:
$$
\DEFAULT(L) = \{ f : \{1\}^* \rightarrow L : L_i \neq \emptyset \implies f(1^i) \in L_i, L_i = \emptyset \implies f(1^i) = \epsilon \}
$$
\end{definition}

\begin{proposition}\label{prop:default-elem-implies-pg-membership}

For $L \in \NP$, $\DEFAULT(L) \cap \mathit{FP} \neq \emptyset$ if and only if $L \in \PG{}$, where $\mathit{FP}$ denotes polynomial-time computable functions. 
\end{proposition}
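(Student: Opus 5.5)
For the forward direction ($\DEFAULT(L) \cap \mathit{FP} \neq \emptyset \Rightarrow L \in \PG{}$), I will use the basis-extension idea sketched just before the statement. Fix a polynomial-time $f \in \DEFAULT(L)$. Since $L \in \NP{}$, there is a deterministic verifier $V$ running in time $q(|x|+|c|)$, with certificates of length at most $r(|x|)$ for polynomials $q, r$. The generator $G_i$ first computes $d = f(1^i)$, which takes polynomial time. If $d = \epsilon$ and $i > 0$, or more generally $|d| \neq i$, then $L_i = \emptyset$, and $G_i$ deliberately reads past its input and rejects, so it generates the empty set. Otherwise it reads $i$ random symbols to form a candidate $x$ (encoding each alphabet symbol by a fixed block of bits) and $r(i)$ further random bits to form a candidate certificate $c$. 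It then runs $V(x,c)$. If $V$ accepts, it outputs $x$; otherwise it outputs $d$.

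Every branch consumes a fixed number $p(i)$ of input bits and then halts. Hence $G_i$ rejects only on inputs shorter than $p(i)$. This makes it halting in the sense of \Cref{def:halting-transducer}, and since $A_{G_i}$ is exactly $\{0,1\}^{p(i)}$, we get $\mu = 1$, so it is a generator. Soundness holds because every output is either $V$-certified or equal to $d \in L_i$. Completeness holds because any $x \in L_i$ with certificate $c$ is produced by the seed encoding $(x,c)$. Uniformity is immediate, because $f$ and $V$ are fixed machines. The running time is polynomial in $i$.

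The one detail to handle is the alphabet. If $|\Sigma|$ is not a power of two, I encode symbols by blocks of bits and map unused codes to a fixed symbol, or simply fall back to $d$. This keeps the number of consumed bits fixed, so it does not disturb the halting property.

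For the converse, take a polynomial-time generator family $G = (G_i)$ with bound $p$, and define $f(1^i)$ to be the result of running $G_i$ on the all-zeros string $0^{p(i)}$. Because $G_i$ halts within $p(i)$ steps, it reads at most $p(i)$ bits. By the Remark following \Cref{def:size-poly-generator}, if $L_i \neq \emptyset$ then $G_i$ cannot reject on a length-$p(i)$ input, since $\mu(A^\omega_{G_i}) = 1$ forces every length-$p(i)$ string to extend some element of $A_{G_i}$. In that case $G_i$ accepts and outputs an element of $L_i$. If instead $G_i$ rejects, then $L_i = \emptyset$ and we output $\epsilon$. The map $i \mapsto G_i(0^{p(i)})$ is computable in polynomial time by the uniform machine, so $f \in \mathit{FP} \cap \DEFAULT(L)$.

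The main obstacle, I expect, is the converse's claim that a nonempty $G_i$ never rejects on a length-$p(i)$ string. I would argue it as follows. Suppose a length-$p(i)$ string $b$ causes rejection. Since every run lasts at most $p(i)$ steps, the run on $b$ must read a blank, so it never accepts on any extension of $b$ either. Then the whole cylinder $S_b$, which has positive measure, is unaccepted, contradicting $\mu = 1$. A secondary point is that the converse never needs $L \in \NP{}$; that hypothesis is used only in the forward direction, to supply the verifier.
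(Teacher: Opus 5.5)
Your proposal is correct and is essentially the paper's proof. For the forward direction the paper likewise samples a length-$i$ candidate, simulates one random branch of the \NP{} machine (playing the role of your certificate $c$), and falls back to the polynomial-time default on failure; the converse is obtained by running the generator on a fixed seed. Your version is more careful than the paper's about the halting condition (fixed bit consumption, emptiness checked up front). The one slip is at $i=0$: there $f(1^0)=\epsilon$ does not tell you whether $\epsilon\in L$, so your soundness claim needs that single bit hard-wired.
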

\begin{proof}
The main idea is that being able to compute a default element $x$ means we can simulate a branch of the NDTM, falling back to $x$ if the branch rejects. 
For a full proof, see \Cref{proof:default-elem-implies-pg-membership}.
\end{proof}

We can extend the structure of such techniques to a sufficient condition for generating samples from an \NP{} language, by asserting extra structural conditions on the language for efficient extension from a base element. 

\iffalse
\begin{proposition} %???
If $L \in RELP$ is self-reducible, then for any $x$ the lexicographically minimal solution in $y_{\mathit{min}} \in Sol(x)$ is computable in polynomial time.
\end{proposition}

\begin{corollary} %???
If $L \in P$ is self-reducible, then $L \in \PG$.
\end{corollary}
\fi

\begin{proposition}\label{prop:sufficient-pg-basis}
    Let $L$ be a language and $\preccurlyeq$ be a polynomial-time computable partial order on $\Sigma^*$. 
    Let $K = \{s \in L : \nexists s' \in L \setminus \{s\}, s' \preccurlyeq s\}$ be the set of minimal elements in $L$ according to $\preccurlyeq$.
    $L \in \PG{}$ if the following conditions are met:
    \begin{enumerate}
        \item \textbf{Upward Closure}: For any $s \in L$ and string $s'$, if $s \preccurlyeq s'$, then $s' \in L$.
        \item \textbf{Samplable Minimal Basis}: $K$ contains every minimal element of $L$, and is in \PG{}.
        \item \textbf{Efficient Extension}:
         For every element $s \in K$, there exists a polynomial-time generator $G_s$ such that 
         $\Lang{G_s} = \{s' : s \preccurlyeq s' \land |s'| = |s|\}$.
    \end{enumerate}
In other words, if there exists a core set of instances in $L$ that is efficiently generable, and one can efficiently extend from the core instances to any instances in $L$, then the set is efficiently generable.
\end{proposition}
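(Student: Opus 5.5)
The plan is to build the generator for $L$ as a two-stage composition: first sample a minimal element $s \in K$ of the right length, then run the extension generator $G_s$ from it. Concretely, let $G^K = (G^K_i)_{i\in\mathbb{N}}$ be the polynomial-time generator for $K$ given by condition (2), running in time $p_K(i)$. For the extension step, condition (3) alone gives a separate generator $G_s$ for each $s$. For the composition to be a single uniform family, I will read (3) as asserting a single machine $E$ that takes $s$ on an auxiliary read-only tape and behaves as $G_s$, in the style of the ``implementation of a family of TMs'' definition. I will also assume there is one polynomial $p_E$ with $E(s)$ running in time $p_E(|s|)$. I expect this uniformity reading to be the main point needing care, and I will state it explicitly.

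Given these, define $G_i$ as follows. Run $G^K_i$ on the random input, storing its output $s$ on the work tape (the output tape is write-once, so $s$ is redirected to work space). If $G^K_i$ rejects, that is, it runs out of bits, then $G_i$ rejects as well. Otherwise run $E(s)$ on the remaining random bits and copy its output to the output tape. The running time is at most $p_K(i) + p_E(i) + O(i)$, since $|s| = i$, so this is polynomial.

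There are three remaining checks. \emph{Halting transducer and generator conditions:} rejection arises only from input exhaustion in one of the two stages, and the accepting minimal prefixes are exactly the concatenations $b_1 b_2$ with $b_1 \in A_{G^K_i}$ and $b_2 \in A_{E(G^K_i(b_1))}$. The measure is therefore $\sum_{b_1} 2^{-|b_1|}\mu(A^\omega_{E(s)}) = 1$ whenever $K_i \neq \emptyset$, because every $G_s$ with $s \in L$ has nonempty language (it contains $s$ itself, by reflexivity of $\preccurlyeq$). If $K_i = \emptyset$ the measure is $0$.

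\emph{Soundness:} every output $s'$ satisfies $s \preccurlyeq s'$ for some $s \in K \subseteq L$, so $s' \in L$ by upward closure, and $|s'| = i$.

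\emph{Completeness:} given $x \in L_i$, I need some $s \in K_i$ with $s \preccurlyeq x$. Since $\preccurlyeq$ is a partial order and the set $\{y \in L_i : y \preccurlyeq x\}$ is finite and contains $x$, it has a $\preccurlyeq$-minimal element $m$. This step uses the length-preservation implicit in condition (3); if needed I will restrict attention to the length-$i$ slice, which is where the family is indexed. I then argue $m$ is minimal in $L$, so $m \in K$ by condition (2). Then $x \in \Lang{G_m}$, and choosing seed bits that produce $m$ in stage one followed by bits producing $x$ in stage two yields $x$.

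The subtle point in completeness is minimality across lengths. If some $s' \in L$ of a different length lay below $m$, then $m$ would not be in $K$. I will handle this by taking $m$ minimal in the whole down-set $\{y \in L : y \preccurlyeq x\}$. Well-foundedness of this down-set is the obstacle, and I will argue it from condition (3), which forces comparable elements above members of $K$ to share their length. If that is insufficient, I will add as a harmless convention that $\preccurlyeq$ only relates strings of equal length, which is consistent with how condition (3) is phrased.
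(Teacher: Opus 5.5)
Your construction is the paper's: sample $s$ with the $\PG{}$ generator for $K$, then run the extension generator on $s$. Your soundness argument via upward closure and your completeness strategy also match the paper. You are right to make uniformity explicit. Read literally for each fixed $s$, condition (3) is vacuous, because $\{s' : s \preccurlyeq s', |s'| = |s|\}$ is a finite set. The paper silently assumes a single extension machine with one polynomial bound $p_E$.

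Where you must change course is the length issue in completeness. Your first plan, deriving length equality or well-foundedness from condition (3), cannot work. Condition (3) only constrains what $G_s$ outputs; it says nothing about which strings $\preccurlyeq$ relates. So it cannot force an element of $K$ below $x$ to have length $|x|$, and a minimal element of the slice $\{y \in L_{|x|} : y \preccurlyeq x\}$ need not be minimal in $L$. The paper's proof takes exactly this unjustified step when it concludes $|s| = |w|$ ``by the definition of $G_s$''.

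In fact the statement is false as written. Take $H \in \Poly{} \setminus \PG{}$ with $\epsilon \notin H$ (e.g.\ $\textit{COLLISION}_h$ for a collision-resistant $h$). Set $x \preccurlyeq y$ iff $x = y$, or $x = \epsilon$ and $y \in H$. Then:
\begin{itemize}
\item $\preccurlyeq$ is a polynomial-time partial order;
\item $L = \{\epsilon\} \cup H$ is upward closed;
\item $K = \{\epsilon\} \in \PG{}$;
\item condition (3) holds uniformly, with $G_\epsilon$ outputting $\epsilon$.
\end{itemize}
Yet $L \notin \PG{}$, since $L_n = H_n$ for $n \geq 1$.

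So your fallback, that $\preccurlyeq$ only relates strings of equal length, is not a harmless convention. It is a genuinely needed extra hypothesis, and you should adopt it outright. With it your argument is complete. The down-set of $x$ in $L$ lies in the finite set $L_{|x|}$, so it has a minimal element $m$. By transitivity $m$ is minimal in all of $L$, so $m \in K$ and $x \in \Lang{G_m}$.

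Incidentally, under that hypothesis condition (3) is not even needed. $L$ is in \NP{}, and running the $K$-generator on a fixed seed gives a polynomial-time default element. Hence \Cref{prop:default-elem-implies-pg-membership} already yields $L \in \PG{}$.
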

\begin{proof}
Deferred to \Cref{proof:sufficient-pg-basis}.
\end{proof}

This provides a technique for obtaining \PG{} sampling algorithms, and indeed many \NP{}-hard languages satisfy this property. 
We give examples applications of \Cref{prop:sufficient-pg-basis} in \Cref{example:basis-extension-examples}.

\subsubsection{\PG{} Via Certificate Sampling}
Many \NP{} decision problems are hard because of the need to find a potentially small set of certificates for the instances.
However, we have seen that many hard \NP{} problems nonetheless have efficient generators.
The problem of generating satisfying instances of an \NP{} language becomes easier than decision when there is a means of constructing instances from certificates: ensuring that the generated instance is in the language by construction.

This is the core insight used by many existing generators of \NP{}-complete languages, which is based on their verifiers: first sample a random certificate, and then sample a random instance that is verified by the certificate.
We generalise this and show that the reverse also holds.
That is, every generator for $L$ is associated with some canonical verifier $V$ for $L$, a method to sample certificates of $V$, and a method to recover random instances from certificates. 
Thus, every generator can be viewed as an efficient algorithm to recover instances for each canonical certificate (i.e., the certificate consisting of bitstrings).

\begin{definition}[Polynomial-Time Verifier]
$L$ has a polynomial-time verifier $V$ if:
\begin{itemize}
    \item $V$ takes inputs $w\#c$, where $w \in \Sigma^*$ and $c \in \Sigma^+$ is a certificate.
    \item $V$ terminates in polynomial time of its input size. 
    \item There exists a polynomial $p$ where:
    \begin{itemize}
        \item If $w \in L$, then there exists a certificate $c$ with $|c| \leq p(|w|)$ such that $V$ accepts $w\#c$. 
        \item If $w \notin L$, then for all certificates $c$, $V$ rejects $w \# c$.
    \end{itemize}
\end{itemize}
\end{definition}

\begin{definition}[Certified Inputs]
For a verifier $V$, and a certificate $c$, define the certified inputs of $c$, $V_f(c)$ as:
$$V_f(c) = \{ w \in \Sigma^+ : V(w\#c) \text{\ accepts} \}$$
\end{definition}

\begin{definition}[Certificate Generator] 
Verifier $V$ for $L$ has a certificate generator $S_V$ if there exists a polynomial $q$, and $S_V = (S_n)_{n \in \mathbb{N}}$ is a uniform family of generators where:
\begin{enumerate}
    \item Each $S_i$ is such that $L_i = \Sigma^{i} \cap \bigcup_{c \in \Lang{S_i}} V_f(c)$.
    \item Each $S_i$ runs in time $q(i)$.
    \item If $L_i$ is empty, then $S_i$ rejects.
\end{enumerate}
\end{definition}

Note that in the first condition, we constrain $\bigcup_{c \in \Lang{S_i}} V_f(c)$ to the subset of length-$i$ strings.
This is necessary as a certificate may verify instances of many lengths.
As a concrete example, consider a verifier of SAT, taking certificates to be the variable assignment.
A specific assignment can be the certificate for formulas of arbitrary length.

\begin{definition}[Certificate Recovery]
Verifier $V$ for $L$ has a certificate recoverer $R_V$ if there exists a polynomial $p$ in two variables, and $R_V = (R_i)_{i \in \mathbb{N}}$ is a uniform family of TMs where:
\begin{enumerate}
    \item Each $R_i$ has a read-only input tape for certificates $c \in \Sigma^+$, a read-only input tape of random bits, and an output tape.
    \item For all $c$, $R_i(c)$ is a generator for $L_i \cap V_f(c)$.
    \item For all $c$, $R_i(c)$ runs in time $p(i, |c|)$.
\end{enumerate}
\end{definition}

\begin{definition}[Certificate Scheme]
$V$ has a certificate scheme $(S_V, R_V)$ if there exist a certificate generator $S_V$ and also a certificate recoverer $R_V$.
\end{definition}

\begin{theorem}[Generators are Certificate Schemes]\label{prop:generator-verifier-equivalence}
The following are equivalent:
\begin{enumerate}
    \item \label{cert-gen} $L$ is generable in polynomial time ($L \in \PG{}$).
    \item \label{cert-verif} There exists a polynomial-time verifier $V$ for $L$, and $V$ has a certificate scheme. 
\end{enumerate}
\end{theorem}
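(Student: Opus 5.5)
We prove the two directions separately. The direction (\ref{cert-verif}) $\Rightarrow$ (\ref{cert-gen}) is by composition. The direction (\ref{cert-gen}) $\Rightarrow$ (\ref{cert-verif}) is by taking the generator's own random bitstring as the canonical certificate.

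\emph{From a certificate scheme to a generator.} Suppose $V$ has a certificate scheme $(S_V, R_V)$ with polynomials $q$ and $p$. Define $G_i$ as follows. First run $S_i$ on the random stream to obtain a certificate $c$. If $L_i=\emptyset$, then $S_i$ rejects every input and so does $G_i$, which matches the empty-set convention. Otherwise, feed the remaining random bits to $R_i(c)$ and output its result.

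Since $S_i$ runs in time $q(i)$, we have $|c|\le q(i)$. The recovery step therefore takes time $p(i,q(i))$, and the total time is polynomial in $i$.

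The language generated is $\bigcup_{c\in\Lang{S_i}} \Lang{R_i(c)} = \bigcup_c (L_i\cap V_f(c))$. Because every $c\in\Lang{S_i}$ produces a nonempty $R_i(c)$-language or is otherwise handled, this union equals $L_i$ by condition (1) of the certificate generator. I will need to check that $R_i(c)$ generating $L_i\cap V_f(c)$ is well defined even when this set is empty. In that case $R_i(c)$ rejects everything, and concatenation could then break almost-sure productivity. Two fixes are possible. Restrict $S_i$ to certificates with $L_i\cap V_f(c)\neq\emptyset$, or note that a rejecting $R_i(c)$ can be detected within $p(i,|c|)$ steps on a length-$p$ prefix (as in the Remark after \Cref{def:size-poly-generator}) and $G_i$ can then restart.

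For halting and productivity, $G_i$ consumes a minimal accepted prefix of $S_i$ followed by one of $R_i(c)$. Acceptance is almost sure because the time bounds cap every accepting prefix at a polynomial length. Uniformity follows from uniformity of $S_V$ and $R_V$.

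\emph{From a generator to a certificate scheme.} Given a polynomial-time generator $G=(G_i)$ with bound $p$, define the verifier $V(w\#c)$: interpret $c$ as a bitstring $b$ (with $|b|\le p(|w|)$), run $G_{|w|}(b)$ for at most $p(|w|)$ steps, and accept iff it accepts with output $w$.

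This $V$ runs in polynomial time. Every $w\in L$ has a witness of length at most $p(|w|)$, since the run producing $w$ reads at most $p(|w|)$ bits. No $w\notin L$ is ever accepted.

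The certificate generator $S_i$ emits $b$ as follows. It runs $G_i$, copying each consumed random bit to the output tape, and rejects iff $G_i$ rejects. Then $\Lang{S_i}$ is exactly the set of minimal accepted bitstrings $A_{G_i}$ of \Cref{def:accepting-minimal-bitstings}, and $\Sigma^i\cap\bigcup_{b} V_f(b)=\Lang{G_i}=L_i$.

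The recoverer $R_i(b)$ ignores its random tape and deterministically outputs $G_i(b)$. Its language is $\{G_i(b)\}=L_i\cap V_f(b)$, and it consumes no random bits. The model may demand that a generator read at least one bit ($b\in\{0,1\}^+$ in \Cref{def:generated-language}); if so, read one dummy bit first.

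\emph{Main obstacle.} I expect the difficulty to lie in the formal bookkeeping rather than in any idea. Three points need care: matching the length conventions ($V_f(c)$ ranges over all lengths, hence the intersection with $\Sigma^i$), handling empty $L_i$ and empty $L_i\cap V_f(c)$ without violating the generator definition, and encoding bitstrings as certificates over $\Sigma^+$. I would settle these first and then present the two constructions above.
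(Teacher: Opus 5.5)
Your direction (1) $\Rightarrow$ (2) follows the paper's construction: the verifier and the recoverer both replay $G_i$ on the certificate. The only change is your $S_i$, which copies the consumed bits so that $\Lang{S_i}=A_{G_i}$; the paper instead checks emptiness and then emits a uniform bitstring of length $p(i)$. Both work, and in both every emitted certificate certifies some length-$i$ word. One small fix: $R_i(c)$ must be the empty generator, not ``output $G_i(c)$'', whenever $\Sigma^i\cap V_f(c)=\emptyset$, for example when $G_i(c)$ rejects or your cap $|c|\le p(i)$ is violated.

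The gap is in (2) $\Rightarrow$ (1). The paper simply composes $R_i\circ S_i$. You correctly observe that this requires every $c\in\Lang{S_i}$ to be live, i.e.\ $L_i\cap V_f(c)\neq\emptyset$. A dead $c$ makes the composite diverge on a set of streams of positive but non-full measure, so it is not a generator. Neither of your fixes closes this.

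\emph{Restart.} Detecting a dead $c$ is easy, but ``restart'' is an unbounded loop. \PG{} forbids this outright, since every run must halt within one fixed polynomial. It need not even be expected-polynomial, because live certificates can be drawn with negligible probability.

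\emph{Restrict $S_i$.} This is not an operation you can perform on a given $S_i$.

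In fact, with the definitions read literally, the implication fails under the paper's own assumption. Take $L=\textit{COLLISION}_h$ and let $V(w\#c)$ accept iff $c=w\in L$. Let $S_i$ reject on the lengths where $L_i=\emptyset$ and otherwise output a uniformly random candidate pair of length $i$. Let $R_i(c)$ output $c$ if $c\in L_i$ and be the empty generator otherwise. Every condition of a certificate scheme holds, yet $L\notin\PG{}$ when $h$ is collision resistant (\Cref{prop:collisions-not-generable}). Your restart fix applied to this scheme is exactly rejection sampling for collisions.

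The repair is to build your first fix into the definition: require $L_i\cap V_f(c)\neq\emptyset$ for all $c\in\Lang{S_i}$. This costs nothing in (1) $\Rightarrow$ (2), since both your $S_i$ and the paper's satisfy it. With it, the composite accepts every stream after at most $q(i)+p(i,q(i))$ bits (taking $p$ monotone), so your time and productivity arguments go through.
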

\begin{proof}
We defer the full proof to \Cref{proof:generator-verifier-equivalence}.
Intuitively, a certificate scheme yields a polynomial time generator by composing the components together.
Conversely, given a $\PG{}$ generator $G$, one can derive a verifier $V_G$ by treating the certificate as the input bitstring of $G$.
Then, one can validate that there is a certificate scheme for $G$ based on the verifier $V_G$, by sampling for a random bitstring certificate of length $p(n)$, where $p$ is the polynomial time-bound of $G$.
\end{proof}

\takeaway{A strategy for building an efficient generator for a language is to study its \emph{verifier} rather than its decider, by sampling a certificate first, then building a random instance around it.
\Cref{prop:generator-verifier-equivalence} shows that \emph{every} efficient generator actually uses some instance of this strategy.}

\subsection{Compositional Libraries For Generators}\label{subsec:compositional-libraries}
As established earlier, while rejection sampling gives an easy way to automatically derive generators that satisfy a specific predicate, it can quickly become intractable as the probabilities of data satisfying the predicate drop. 
Moreover, if the predicate is difficult to verify, then the validation step during test-case generation can itself become the dominant factor for a testing campaign.

Due to this, property-based testing libraries often also provide combinators for users to specify custom generators, rather than relying on rejection sampling to produce inputs.
A desirable characteristic for such libraries is \emph{compositionality}.
In the context of testing, one clear place for compositionality involves the duality of generators and predicates: predicates and generators both correspond to formal languages on the inputs.
Where predicates interpret a language $L$ as a function of type $\Sigma^* \rightarrow \mathbb{B}$, generators interpret $L$ as a surjective computable function $\{0, 1\}^* \rightarrow L$ that constructs instances of the language from randomness.

Logical predicates are easily specified using existing languages, such as propositional logic, described by the following grammar, where $\mathit{Atomic}$ refer to base predicates taken as fact (e.g. $\leq, \equiv$):
$$
    T := \mathit{Atomic}\ |\ T \land T\ |\ T \lor T\ |\ T \rightarrow T\ |\ \neg T 
$$

Compositionality in this context is the requirement for a function $f$ to map the logical language into generators homomorphically, such that $f$ follows the same structure as the logical language:
\begin{gather*}
    f(a : \mathit{Atomic}) = f_a(a) \qquad f(t_1 \rightarrow t_2) = \mathit{gImplies}(f(t_1), f(t_2)) \qquad f(\neg t) = \mathit{gNot}(f(t)) \\
    f(t_1 \land t_2) = \mathit{gAnd}(f(t_1),  f(t_2)) \qquad f(t_1 \lor t_2) = \mathit{gOr}(f(t_1), f(t_2))
\end{gather*}
where $f_a$ maps the base predicate $a$ to a generator of values that satisfies the predicate $a$, and $\mathit{gAnd}, \mathit{gOr}, \mathit{gImplies}, \mathit{gNot}$ are the functions that implement the generator homomorphisms for logical operators $\land, \lor, \rightarrow, \neg$ respectively.

We now show that, based on intuitions imported from complexity theory via our formalisation, such a compositional design for efficient generators is not possible in general---if the logical language for predicates can express predicates equivalent in power to the operations $\neg$ or $\land$.

To put it another way: in a compositional library of generators/predicates, either the generators cannot have polynomial runtime guarantees, or the logical operations must be limited in expressivity---it cannot contain anything equivalent to $\neg$ or $\land$ (that is, $\mathit{gAnd}$ and $\mathit{gNot}$ cannot have implementations for generators).

Firstly, we show that some operators can be implemented composably for efficient generators.
\begin{proposition}\label{prop:pg-closed-ops}
\PG{} is closed under concatenation, Kleene star, and union. 
\end{proposition}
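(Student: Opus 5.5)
All three closure properties can be proved the same way, working length by length. Fix polynomial-time generator families $G^1=(G^1_i)_{i\in\mathbb{N}}$ and $G^2=(G^2_i)_{i\in\mathbb{N}}$ for $L^1$ and $L^2$, with time bounds $p_1$ and $p_2$. For a target length $n$ I build a uniform generator $H_n$ that runs in time polynomial in $n$ and produces exactly the length-$n$ elements of the combined language.

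Two auxiliary tools come first.
\begin{enumerate}
\item \emph{Emptiness test.} By the Remark following \Cref{def:size-poly-generator}, $L^j_k=\emptyset$ iff $G^j_k$ rejects on some length-$p_j(k)$ bitstring, for instance $0^{p_j(k)}$. So $H_n$ can simulate $G^j_k$ on $0^{p_j(k)}$ on its work tape, without touching its own random input, and decide emptiness of $L^j_k$ in time $p_j(k)$. Over all $k\le n$ this costs $\sum_{k\le n}p_j(k)$, which is polynomial in $n$.
\item \emph{Bounded selection.} To pick one of $m\ge 1$ options, $H_n$ reads $\lceil\log_2 m\rceil$ bits, interprets them as an integer $v$, and selects option $v \bmod m$. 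This uses a fixed number of bits and fixed time, and every option has nonzero probability.
\end{enumerate}
Only surjectivity matters here, since the paper requires distributional support but no particular distribution.

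When the target length-$n$ slice is empty, $H_n$ simply keeps reading input bits until it hits a blank and then rejects. This is exactly the empty generator required by \Cref{def:halting-transducer} and \Cref{def:generators}. When the slice is nonempty, every branch of $H_n$ reads a bounded number of bits and then accepts. Hence $\mu(A^\omega_{H_n})=1$ and rejection happens only on input exhaustion.

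\emph{Union.} Test whether $L^1_n$ and $L^2_n$ are empty. If both are nonempty, select $j\in\{1,2\}$ with one bit and run $G^j_n$ on the remaining input, passing its output through. If exactly one is nonempty, run that generator. If both are empty, reject as above. Every $x\in(L^1\cup L^2)_n$ lies in some $L^j_n$ and is reachable through branch $j$. The running time is at most $p_1(n)+p_2(n)$ plus the emptiness tests.

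\emph{Concatenation.} Compute $K=\{k\le n : L^1_k\ne\emptyset\wedge L^2_{n-k}\ne\emptyset\}$ using $O(n)$ emptiness tests. If $K=\emptyset$, reject. Otherwise select $k\in K$, then run $G^1_k$ followed by $G^2_{n-k}$, each writing directly onto the write-once output tape, so the output is their concatenation. The case $k=0$ is handled naturally, because $L_0\subseteq\{\epsilon\}$. Surjectivity holds because any $xy$ with $x\in L^1_k$ and $y\in L^2_{n-k}$ has $k\in K$, so it is produced by choosing $k$ and then the appropriate bits for each sub-generator.

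\emph{Kleene star.} The main obstacle here is that a naive generator could commit to a piece length that cannot be completed to total length $n$. That would force either a rejection that is not caused by input exhaustion or an unbounded retry loop, and either breaks the deterministic polynomial bound. I avoid this with a dynamic-programming precomputation. First compute $E=\{k\in[1,n] : L_k\ne\emptyset\}$. Then compute the set $R\subseteq\{0,\dots,n\}$ of completable lengths, where $0\in R$ and $r\in R$ iff there is some $k\in E$ with $k\le r$ and $r-k\in R$; this takes $O(n^2)$ steps. If $n=0$, output $\epsilon$. If $n\notin R$, reject. Otherwise loop with remaining length $r:=n$: while $r>0$, select $k$ among $\{k\in E : r-k\in R\}$, which is nonempty by the invariant $r\in R$; run $G_k$ to emit the next piece; and set $r:=r-k$. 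There are at most $n$ iterations, each costing at most $p(n)$, so the total time is polynomial. Every decomposition $x=x_1\cdots x_m$ with each $x_i\in L$ nonempty has all its suffix lengths in $R$, so it is realised by the corresponding sequence of choices. Pieces equal to $\epsilon$ can be discarded without changing the concatenation.

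Finally, uniformity of each $H=(H_n)$ follows because all of the above is a single machine that reads $n$ and calls the uniform implementations of $G^1$ and $G^2$.
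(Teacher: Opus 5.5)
Your proposal is correct and follows essentially the same route as the paper: for union and concatenation you choose a branch or split length at random among those whose sub-generator slices are nonempty, and for Kleene star you run emptiness tests on $G_1,\dots,G_n$ and then a knapsack-style dynamic program that picks a random completable decomposition of $n$ before running the sub-generators. Your up-front emptiness tests via the Remark are a cleaner version of the paper's ``run, and retry on rejection'' steps, because they never write output to the write-once tape for a split that cannot be completed; your concatenation also covers the splits $k=0$ and $k=n$, which the paper's range $1,\dots,i-1$ omits.
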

\begin{proof}
We defer the proof to \Cref{proof:pg-closed-ops}.
\end{proof}

On the other hand, we see that some logical operators on predicates cannot have composable implementations on generators (under standard complexity constraints).
\begin{theorem}[Efficient Generators are not Composable]\label{prop:pg-not-closed-intersection}
If $\NP \neq \mathit{coNP}$, then \PG{} is not closed under complementation.
If collision-resistant hash functions exist, then \PG{} is not closed under intersection.
\end{theorem}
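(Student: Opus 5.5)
The plan is to take the two claims separately. In each case I exhibit languages in $\PG{}$ whose complement, or intersection, would contradict the stated assumption if it were in $\PG{}$.

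\emph{Complementation.} By \Cref{prop:cnf-sat-pg}, $\text{CNF-SAT} \in \PG{}$. Let $\overline{\text{CNF-SAT}} = \Sigma^* \setminus \text{CNF-SAT}$; it consists of the strings that are not well-formed CNF formulas together with the unsatisfiable CNF formulas.
\begin{enumerate}
\item This language is \textit{coNP}-complete, since it is the complement of an \NP{}-complete language under the same many-one reductions.
\item Suppose $\PG{}$ were closed under complementation. Then $\overline{\text{CNF-SAT}} \in \PG{} \subseteq \NP{}$ by \Cref{prop:pg-np}.
\item So a \textit{coNP}-complete language would lie in \NP{}, giving $\textit{coNP} \subseteq \NP{}$ and hence $\NP{} = \textit{coNP}$, a contradiction.
\end{enumerate}
This part is routine.

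\emph{Intersection.} Fix a collision-resistant hash function $h$. I cannot simply intersect $\{(x,y) : |x|=|y|, x \neq y\}$ with $\{(x,y) : h(x)=h(y)\}$, because the second set does not obviously lie in $\PG{}$: generating a pair with equal hashes other than $x=y$ is exactly the hard task. Instead I will plant the hash value as a third component, so that each half is generable by direct computation. Define
\begin{align*}
L_1 &= \{ x\#y\#u : |x|=|y|,\ x \neq y,\ u = h_{|x|}(x) \},\\
L_2 &= \{ x\#y\#u : |x|=|y|,\ u = h_{|x|}(y) \}.
\end{align*}

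\emph{Step 1: $L_1, L_2 \in \PG{}$.} Since $|u| = |x|-1$ is determined by $|x|$, a string of total length $i$ belongs to some block with $i = 3n+1$, where $n = |x|$. On input $i$, the generator checks whether $i$ has this form and rejects (empty partition block) otherwise.
\begin{itemize}
\item For $L_1$ it samples $x$ and $y$ of length $n$ uniformly, forces $x \neq y$ by flipping a bit of $y$ if they coincide, computes $u = h_n(x)$ in time polynomial in $n$, and outputs $x\#y\#u$.
\item For $L_2$ it does the same without the distinctness constraint and with $u = h_n(y)$.
\end{itemize}
Each element is reachable, the generator never rejects on long enough input, and it runs in polynomial time. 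The only care needed is the small edge cases (e.g.\ $n=1$, where $h_1$ outputs a single bit, and how the bit-flip is used to enforce $x \neq y$ while keeping every distinct pair reachable). These I expect to handle routinely.

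\emph{Step 2: the intersection yields collisions.} We have
\[ L_1 \cap L_2 = \{ x\#y\#h_n(x) : |x|=|y|=n,\ x\neq y,\ h_n(x)=h_n(y)\}. \]
If $\PG{}$ were closed under intersection, there would be a uniform polynomial-time generator family for $L_1 \cap L_2$. Running its member at length $3n+1$ would produce, with probability 1 (whenever a collision exists), a collision of $h_n$ in time polynomial in $n$. Collisions exist for every $n$ by pigeonhole, since $h_n$ maps $\{0,1\}^{n+1}$ to $\{0,1\}^{n}$.

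\emph{Step 3: length alignment.} This is the main subtlety, because \Cref{def:hash-functions} gives $h_n$ domain $\{0,1\}^{n+1}$, while the definition of $\textit{COLLISION}_h$ uses $h_{|x|}$. I will follow exactly the convention of \Cref{prop:collisions-not-generable}, reindexing $n$ if necessary. Uniformity of the generator family then gives a randomised polynomial-time collision finder that succeeds for infinitely many (indeed all) $n$ with probability $1 > 1/q(n)$. This contradicts \Cref{def:collision-resistance}, exactly as in the proof of \Cref{prop:collisions-not-generable}.
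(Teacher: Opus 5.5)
Your complementation argument is the same as the paper's. For intersection you take a different route, and it is correct.

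The paper never mentions hashing in this proof. For an arbitrary $L \in \NP{}$ it sets $L^* = \{00x : x \in L\}$, $A = \{01y\}$ and $B = \{10y\}$. It observes that $L^* \cup A$ and $L^* \cup B$ have polynomial-time default elements, so both lie in \PG{} by \Cref{prop:default-elem-implies-pg-membership}. Since $L^* = (L^* \cup A) \cap (L^* \cup B)$, closure under intersection would give $\NP{} \subseteq \PG{}$, contradicting $\Poly{} \not\subseteq \PG{}$ from \Cref{prop:crhf-implies-pg-neq-np}.

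What that route buys is generality. It shows every \NP{} language is, up to a two-symbol prefix, an intersection of two \PG{} languages. Non-closure therefore follows from \emph{any} hypothesis separating \PG{} from \NP{}, for instance Sanchis's tally-language assumption $E \neq \mathit{NE}$, not only from collision resistance.

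Your construction instead plants the hash value so that each conjunct is generable by direct computation. The intersection is then $\textit{COLLISION}_h$ with its common hash appended, and you re-run the collision-finding argument of \Cref{prop:collisions-not-generable}. You could equally drop the $\#u$ suffix to obtain a generator for $\textit{COLLISION}_h$ and cite that result directly. Your route is more self-contained and avoids the default-element machinery. It is also arguably more illuminating for testing: it exhibits two individually plantable constraints whose plantings cannot be reconciled. The cost is that it is tied to the specific cryptographic assumption.

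The index mismatch you flag in Step 3 comes from the paper's own definition of $\textit{COLLISION}_h$. It is harmless once you consistently read $u = h_{|x|-1}(x)$, which your length count $i = 3|x|+1$ already does.
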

\begin{proof}
Complementation follows from \Cref{prop:pg-np} and \Cref{prop:cnf-sat-pg}.
If $\text{CNF-UNSAT} = \text{CNF-SAT}^c \in \PG{}$, then $\text{CNF-UNSAT} \in \NP{}$ by \Cref{prop:pg-np}. 
Thus $\NP = \mathit{coNP}$ and we have a contradiction.
See \Cref{proof:pg-not-closed-intersection} for a proof of the intersection case.
\end{proof}

\takeaway{No PBT library can offer a general \textit{gAnd} or \textit{gNot} combinator with polynomial-time efficiency guarantees. 
\Cref{prop:pg-not-closed-intersection} means library designers must choose: restrict the predicate language (e.g.\ to \NL{}/linear Datalog, where compositional compilation is possible), or accept rejection sampling---which our results show is then close to optimal.}

\begin{remark}[What is still possible?]
\Cref{prop:pg-not-closed-intersection} does not rule out compositionality on restricted predicates, or on alternative logical languages that do not entail intersection and negation operators.
For instance, we know that $\NL \subseteq \PG$, and \NL{} is closed under the logical operations discussed in this section (complementation, intersection, union).
Thus, it would be feasible in theory to design a composable generator library, restricted to \NL{} predicates $\mathit{Pred} : L \rightarrow \mathbb{B}$ that produces polynomial-time composable generators for \NL{} languages.
Descriptive complexity gives a path to achieving this: \NL{} Turing machines have an exact correspondence with linear Datalog programs.
Thus, it would be possible in theory to write a compiler that composes linear Datalog predicates together, and compiles them into polynomial-time generators of their accepted languages.
\end{remark}

\section{Related Work}\label{sec:related-work}

The closest work to ours is \citep{sanchis1990efficient}, which introduced complexity classes for polynomial-time generators and observed several similar results to ours: polynomial-generable languages being in \NP{}, some \NP{}-complete problems are polynomial-time generable, and placing oracle-based generation within the polynomial hierarchy. 
They also show a conditional separation for $\PG{} \subsetneq \NP{}$, assuming a tally language (one with at most one string per length $n$) exists in $\NP{} \setminus \Poly$---or equivalently, $E \neq \mathit{NE}$.
Tally language generators, however, have no realistic interpretation in testing: they are decision algorithms in disguise that cannot produce different random instances. % make no meaningful use of the generator's randomness. 
In \Cref{subsec:pg-np} we show that non-sparse languages can also be non-generable, under a cryptographic assumption not known to be related to the sparse languages assumption.
This both elucidates generability for the testing context and offer independent complexity-theoretic intuition linking generability and cryptography. 
Moreover, \citep{sanchis1990complexity, sanchis1990efficient} do not consider other resource constraints (unconstrained, expected polynomial time, log-space and polynomial-space bounds), generability conditions via core languages and verifiers, or connections to cryptography---unsurprisingly, perhaps, as their work predates PBT, a common application of randomised testing. 
Our work can thus be viewed as both an extension and a modern reinterpretation of their earlier work, with practical applications to randomised testing.

A classic line of work~\citep{jerrum1986random,jerrum2003counting} studies \emph{uniform} generation and repetition-free enumeration in a different setting: given a relation $R \subseteq X \times Y$ and an $x \in X$, one seeks a sampler for $\mathit{Sol}(x) = \{y \in Y : (x, y) \in R\}$. 
\citet{jerrum1986random} showed that for self-reducible $R$, almost uniform generation of $\mathit{Sol}(x)$ is as hard as approximately counting $|\mathit{Sol}(x)|$; more recently, \citet{arenas2019efficient} showed that if $R$ is checkable in logspace, then $\mathit{Sol}(x)$ is almost-uniformly sampleable, approximately countable, and enumerable with polynomial delay. 
Their setting differs from ours in two ways: they sample certificates for instances $x \in L$, whereas we sample instances of $L$ itself; and they require uniformity while permitting failure, whereas we require only full support on $L$ but forbid failure. 
The two lines of work nonetheless intersect in \Cref{prop:lg-epg}, where interpreting logspace generators in the framework of \citet{arenas2019efficient} yields the expected polynomial-time uniform generability for \LG{} languages. 
Our framework is currently distribution agnostic, and extending it to prescribed distributions on $L$, perhaps along the lines of \citet{jerrum1986random}, is an avenue for future work.

Levin's theory of average-case complexity~\citep{levin1986average} studies the hardness of problems whose instances are drawn from \emph{P-samplable} distributions~\citep{impagliazzo1990no}.
A polynomial-time generator $G = (G_n)_{n \in \mathbb{N}}$ for $L$ implements a full-support P-samplable distribution for $L$, by composing the $G_n$ with a distribution supported on $\mathbb{N}$.
The focus differs, however: \citet{levin1986average} studies the hardness of deciding $L$ given \emph{some} distribution $\mu$ (not necessarily full-support), whereas we ask which languages \emph{admit} a full-support sampler at all, and what structure it must have (\Cref{subsec:pg-conditions}).
Average-case hardness of language-distribution pairs also underpins cryptographic security, e.g.\ in constructing one-way functions.
Generating instances around planted witnesses in \Cref{prop:cnf-sat-pg} for SAT (first observed in~\citep{sanchis1990complexity}) is an instance of the classic planted-assignment construction~\cite{feldman2015complexity, juels2000hiding}.
In \Cref{prop:generator-verifier-equivalence}, we show that every polynomial-time generator is a certificate scheme: witness planting is not merely a technique for efficient generation, but in some sense the only one.

The duality between predicates and generators has previously been investigated in a programming languages and verification context.
In~\citep{claessen2014generating, lampropoulos2017generating}, generators are derived from implementations of deciders by techniques based on \emph{narrowing} from logic programming folklore.
In our notation, for predicate $P$, the derived generator $G_P$ generates the same language that $P$ recognises: $\Lang{G_P} = \Lang{P}$.
However, the unconstrained use of rejection sampling in these derivations means the correspondence carries no complexity guarantees, and \Cref{subsec:compositional-libraries} limits the complexity of generators from such constructions: no compositional mapping from a logical language containing conjunction or negation to generators can preserve polynomial-time generation guarantees, under standard assumptions.
Libraries have also been developed to facilitate proofs on generators within theorem provers~\citep{paraskevopoulou2015foundational}.
However, these proofs are on a per-generator basis, primarily on soundness (corresponding to our notion of full support on $L$), and do not make statements on the complexity of generation---which is our focus.

\ifarxiv
% Generators beyond testing
Although software testing was the motivation for our work, random generators of the kind we study are used throughout computational sciences.
Many fields rely on non-trivial generators of random data, often as the core part of a larger algorithm or methodology.
\emph{Markov Chain Monte Carlo} (MCMC) is a widely used technique to produce random samples from sets whose membership has a high decision complexity~\cite{roy2020convergence}.
Designing a Markov chain that converges to the desired distribution on some set can be viewed as a generation problem in our formalisation.
Our theory of space-bounded generation for correlated outputs (\Cref{sec:space-generators}) yields statements on the feasibility of designing chains for a given set.
Chemical space generation is another field where combinatorial databases for molecules are built, with applications to drug discovery and reaction pathways.
For the database to be useful, its elements must satisfy graph-theoretic constraints of molecules~\cite{gugisch2015molgen, ruddigkeit2012enumeration}, feasibility constraints from physical simulations~\cite{yu2021novel, hollingsworth2018molecular}, and synthesisability constraints imposed by the domain of application~\cite{levin2023computer}.
The combinatorial explosion makes throughput critical for these databases, which are often accessed through sampling rather than enumeration. 
Our theory provides results connecting the complexity of the constraints (defining the set of molecules) with both the feasibility and efficiency of building such a chemical space (\Cref{sec:general-statements}, \Cref{sec:polytime-generators} and \Cref{sec:space-generators}).
Many more examples exist.
In each case, the generator is the core component of a larger pipeline producing randomised outputs, drawn from a set with nontrivial membership constraints: exactly the central object of study in this work.
\fi

\section{Conclusion}\label{sec:conclusion}
We formalised generability as a complexity-theoretic notion by modelling generators as resource-bounded transducers, and studied how constraints shape what data can be generated. 
Without resource bounds, we see that generable languages coincide exactly with the recursively enumerable languages, while under polynomial-time bounds we obtain a structured hierarchy of generability classes contained within \NP{}, with logspace being a more restrictive condition than polynomial-time for generators. 
Moreover, we find that space-based constraints can be applied naturally to generators of \emph{correlated} inputs, and derive the limits of their expressivity.
We also give characterisations of the polynomial-time generable languages, showing they correspond exactly to a subclass of verifiers.
Applying our theory to testing libraries, our framework can rule out whole classes of designs from the realm of possibility.
Altogether, our results provide a foundational framework for reasoning about generators in property-based testing using tools from complexity theory. 
More broadly, our approach also opens avenues for transferring complexity-theoretic techniques to the study of testing, while exposing potential new complexity questions centered on generation.

\bibliographystyle{ACM-Reference-Format}
\bibliography{references}

\newpage
\appendix
\renewcommand{\thesection}{A.\arabic{section}}
\setcounter{section}{0}
\crefalias{section}{appendix}
\section{Alternative notion of eventual productivity}\label{appendix:alternative-eventually-productive}
Another candidate for the notion of ``eventually productive'' may be the following---which in any case would be a desirable necessary property for generators:
\begin{definition}[Extensible Transducer]\label{def:extensible-transducer}
    A bitstring transducer $T$ is extensible if for any $b \in \{0, 1\}^*$ where $T(b)$ rejects, there exists an $e \in \{0, 1\}^+$ such that $T(b \concat e)$ accepts.
\end{definition}
On the surface, this seems to describe the \emph{eventually productive} property, since no computations are ever ``dead'', due to the extension $e$ being always available for any rejected $b$.

However, it is possible that the extension $e$ becomes increasingly rare as the computation advances.
Consider the following case:
\begin{example}\label{example:nonterminating-transducer}
    Define the bitstring transducer $H$ to accept and output the binary representation of the string ``\texttt{nice}'', if and only if the input has the form $y \concat y \concat z$ for some finite bitstring $y \in \{0, 1\}^+$ and suffix $z \in \{0, 1\}^*$.
\end{example}
$H$ clearly satisfies \Cref{def:extensible-transducer}.
However, when inputs are produced by a uniform-like source of randomness (such as a PRNG), the probability of the input having the form $y \concat y \concat z$ becomes increasingly rare as the length of the input increases.

That is, when the input bitstring is sampled from $\mathcal{I}$, \Cref{example:nonterminating-transducer} has a strictly non-zero chance of never terminating.
\begin{proposition}\label{prop:h-is-invalid}
The transudcer $H$ defined in \Cref{example:nonterminating-transducer} has $\mu(A_H^\omega) < 1$.
\end{proposition}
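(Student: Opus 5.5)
The plan is to bound $\mu(A_H^\omega)$ from above by a union bound over the possible lengths of $y$. An infinite stream $\textit{bs}$ is accepted by $H$ exactly when some finite prefix of it has the form $y \concat y \concat z$ with $|y| = k \geq 1$. Since $z$ is arbitrary, this is equivalent to the first $2k$ bits of $\textit{bs}$ being $y \concat y$ for some $y$. So $A_H^\omega = \bigcup_{k \geq 1} E_k$, where
$$E_k = \{\textit{bs} : b_i = b_{k+i} \text{ for all } 0 \leq i < k\}.$$
This identification uses \Cref{obs:monotone}: acceptance of a finite prefix is preserved under extension. Conversely, any accepting run must have read some prefix of the stated form, since $H$ accepts exactly on such inputs. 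I will take as given that $H$ is realised as a halting transducer. It reads bits one at a time, checks after each read whether the current prefix has length $2k$ with matching halves, and accepts as soon as that holds; this makes $A_H$ and \Cref{def:stream-semantics} meaningful.

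Next I compute each $\mu(E_k)$. Under $\mathcal{I}$ the first $k$ bits are arbitrary, and each of the next $k$ bits must independently match its partner with probability $1/2$. Hence $\mu(E_k) = 2^{-k}$. Countable subadditivity then gives
$$\mu(A_H^\omega) \leq \sum_{k \geq 1} 2^{-k} = 1.$$
This only yields $\leq 1$, which is not enough. The main obstacle is to obtain strict inequality, and I would get it by exhibiting overlap between the $E_k$.

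Take $E_1 \cap E_2$: streams with $b_0 = b_1$, $b_0 = b_2$ and $b_1 = b_3$, that is, $b_0 = b_1 = b_2 = b_3$. This event has measure $2 \cdot 2^{-4} = 1/8 > 0$. By inclusion–exclusion on the first two sets,
$$\mu(E_1 \cup E_2) = \tfrac12 + \tfrac14 - \tfrac18 = \tfrac58.$$
Combining this with subadditivity for the remaining sets gives
$$\mu(A_H^\omega) \leq \tfrac58 + \sum_{k \geq 3} 2^{-k} = \tfrac58 + \tfrac14 = \tfrac78 < 1.$$
It follows that $H$ diverges with probability at least $1/8$. In particular $\mu(A_H^\omega) \neq 1$, and $\mu(A_H^\omega) \neq 0$ because $\mu(E_1) > 0$. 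So $H$ fails \Cref{def:generators} even though it is extensible.
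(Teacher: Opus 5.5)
Your argument is correct. The identification $A_H^\omega=\bigcup_{k\ge1}E_k$ holds, $\mu(E_k)=2^{-k}$, and the single correction $\mu(E_1\cap E_2)=1/8$ brings the union bound down to $7/8<1$.

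The paper organises the estimate differently. It writes $\mu(A_H^\omega)$ exactly as a sum over the \emph{disjoint} cylinders $S_b$ with $b\in A_H$, and then counts the minimal square prefixes of each length. For $k\ge2$, a minimal square $b=xx$ of length $2k$ cannot begin with $00$ or $11$, since those are themselves squares. So there are at most $2^{k-1}$ such $b$.

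In your notation, this bounds the length-$2k$ contribution by $\mu(E_k\setminus E_1)=2^{-(k+1)}$ for every $k\ge2$. In other words, the paper applies your overlap correction with $E_1$ at every length rather than only at $k=2$. That gives the sharper bound $\tfrac12+\tfrac14=\tfrac34$; the true value is about $0.73$.

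Your route avoids any reasoning about which strings are minimal in $A_H$ and needs only one inclusion--exclusion term. The cost is a weaker constant, which does not matter for the strict inequality.
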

\begin{proof}
Observe first that since $A_G^\omega$ is a union of disjoint sets $S_b$, we have:
$$\mu(A_G^\omega) = \sum_{b \in A_G} \mu(S_b)$$

We will compute an upper bound for $\mu(A_H^\omega)$.
$A_H$ contains bitstrings of length $2k, k \in \mathbb{N}^{>0}$ of the form $\mathit{xx}$,  for some $x \in \{0, 1\}^*$, and such that no prefix of $\mathit{xx}$ is of the form $\mathit{yy}, y \ in \{0, 1\}^*$.

Below we enumerate all members $b \in A_H$ with length $|b| = 2k$ for $k = 1, 2, 3, 4$, as well as the total probability of their bitstrings being sampled, $\sum_{|b| = 2k} \mu(S_b)$:
\begin{center}
\begin{tabular}{c|c|c}
    k & Bitstring               & Total Probability \\
    1 & \begin{tabular}{@{}c@{}}
         0 0 \\
         1 1
        \end{tabular}           & $\frac{2}{2^2}$ \\ \hline 
    2 & \begin{tabular}{@{}c@{}}
         01 01 \\
         10 10 
        \end{tabular}           & $\frac{2}{2^4}$ \\ \hline
    3 & \begin{tabular}{@{}c@{}}
         010 010 \\
         011 011 \\
         100 100 \\
         101 101
        \end{tabular}           & $\frac{4}{2^6}$ \\ \hline
    4 & \begin{tabular}{@{}c@{}}
         0110 0110 \\
         0111 0111 \\
         0100 0100 \\
         1011 1011 \\
         1000 1000 \\
         1001 1001
        \end{tabular}           & $\frac{6}{2^8}$ \\
\end{tabular}
\end{center}

Observe that for any $k \geq 2$, if $b \in A_H$ and $|b| = 2k$ then $b$ cannot have $00$ or $11$ as a prefix (and thus must have prefix either $01$ or $10$).
For $k \geq 2$, the number of bitstrings of length $2k$ within $A_H$ is at most $2 * 2^{k-2} = 2^{k-1}$.
Since there are $2^{2k}$ length $2k$ strings, we know $\sum_{|b|=2k} \mu(S_b) \leq \frac{2^{k-1}}{2^{2k}}$.

It follows that:
\begin{equation*}
\begin{aligned}
     & \mu(A_H^\omega) = \sum_{b \in A_H}{\mu(S_b)} = \sum_{k=1}^{\infty} \sum_{|b| = 2k}\mu(S_b) \leq \frac{2}{2^2} + \sum_{k=2}^{\infty} \frac{2^{k-1}}{2^{2k}} = \frac{1}{2} + \sum_{k=2}^{\infty} 2^{-(k+1)} = \frac{1}{2} + \frac{1}{4} = \frac{3}{4}
\end{aligned}
\end{equation*}

Hence, the probability of $B$ terminating is at most $\frac{3}{4} < 1$.

In fact, a more exact result exists in literature.
Let the number of ``square'' bitstrings without a square prefix of length $2k$ be the sequence $a_k$.
In our case, we have $a_k = 2^{2k} \sum_{|b| = 2k} \mu(S_{b})$.
In \citep{gabric2021borders}, the sum is computed as $\sum_{k=1}^{\infty} c_k = \sum_{k=1}^{\infty} a_k 2^{-2k} \approx 0.72996$, compared to our bound of 0.75.
\end{proof}

It is still possible to have \emph{practically} non-productive computations even if the transducer satisfies \Cref{def:extensible-transducer}: it is too weak, and thus the stronger condition of measure-theoretic \emph{eventual productivity} is needed.
We will see that the measure-theoretic condition used is indeed stronger than \Cref{def:extensible-transducer}:
\begin{proposition}
If $G$ is a generator, then $G$ is extensible.
\end{proposition}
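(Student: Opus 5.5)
The plan is to prove the statement by contraposition. We suppose the generator $G$ is not extensible and show that $\mu(A_G^\omega)$ then takes a value strictly between $0$ and $1$, which \Cref{def:generators} forbids.

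Failure of extensibility gives a bitstring $b$ such that $G(b)$ rejects and, for every $e \in \{0,1\}^+$, $G(b \concat e)$ also rejects. By \Cref{obs:monotone} no prefix of $b$ lies in $A_G$, since otherwise $G(b)$ would accept. No extension of $b$ lies in $A_G$ either, by the choice of $b$. Every $b' \in A_G$ is therefore incomparable with $b$ in the prefix order, so its cylinder $S_{b'}$ is disjoint from $S_b$. It follows that $A_G^\omega \cap S_b = \emptyset$, and hence
\[ \mu(A_G^\omega) \leq 1 - \mu(S_b) = 1 - 2^{-|b|} < 1. \]
Since $G$ is a generator, this leaves only the case $\mu(A_G^\omega) = 0$.

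The remaining step, and the only delicate one, is the degenerate case $\mu(A_G^\omega) = 0$. By the earlier observation this case means $A_G = \emptyset$, so $G$ rejects every finite bitstring. Such a generator is not extensible in the literal sense of \Cref{def:extensible-transducer}, because no extension of any $b$ is accepted. I would handle this by restricting the proposition to generators with $\mu(A_G^\omega) = 1$, equivalently nonempty $\Lang{G}$, and stating explicitly that the empty generator is excluded or counted as vacuously extensible. With that caveat, the contrapositive yields $\mu(A_G^\omega) < 1$ and $\mu(A_G^\omega) \neq 0$, which contradicts the generator condition and completes the proof.
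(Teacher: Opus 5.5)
This is the paper's argument. From a non-extensible $b$ you get a cylinder $S_b$ of measure $2^{-|b|}$ disjoint from $A_G^\omega$, hence $\mu(A_G^\omega) < 1$. Your check that no prefix of $b$ (by \Cref{obs:monotone}) and no extension of $b$ lies in $A_G$ is exactly what the paper's terse ``$b \notin A_G$'' needs in order to justify that bound.

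Your caveat is correct and exposes a real gap in the paper's version. The generator that rejects every bitstring has $\mu(A_G^\omega) = 0$, is permitted by \Cref{def:generators}, and is not extensible. So the proposition needs the extra hypothesis $\Lang{G} \neq \emptyset$, equivalently $\mu(A_G^\omega) = 1$. Restricting to that case is the right fix; treating the empty generator as ``vacuously extensible'' would require changing \Cref{def:extensible-transducer}.
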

\begin{proof}
Suppose the generator $G$ is not extensible.
Then there exists a bitstring $b$ where $G(b)$ rejects and for all strings $e \in \{0, 1\}^+$, $G(b \concat e)$ rejects, and thus $b \notin A_G$.
We know $\mu(S_b) > 0$.
Therefore $\mu(A^\omega_G) \leq 1 - \mu(S_b) < 1$, which is a contradiction.
\end{proof}

\section{Proof of \Cref{prop:rejection-sampling-valid}}\label{prop:termination-proof}
We will use the following definitions and lemmas in our proof.
\begin{definition}
For a generator $G$ and a bitstring $b \in \{0, 1\}^*$, define $s : \{0, 1\}^* \rightarrow \mathcal{P}(\{0, 1\}^*)$, the subset of bitstrings that concatenate with $b$ to reach $A_G$, as:
$$s(b) = \{ e \in \{0, 1\}^* : (b \concat e) \in A_G \}$$
\end{definition}

\begin{definition}
For generator $G$ and a bitstring $b$, define $e : \{0, 1\}^* \rightarrow \{X\} \cup \{0, 1\}^*$, the minimum extension, if one exists, such that $G(b \concat e(b))$ accepts as follows.
\begin{enumerate}
    \item If $G(b)$ accepts, then $e(b) = \epsilon$. (Empty string)
    \item If $G(b)$ rejects, then
        \begin{enumerate}
            \item If $s(b) \neq \emptyset$: $e(b) = \min(s(b))$, where the $\min$ is taken lexicographically.
            \item If $s(b) = \emptyset$: $e(b) = X$
        \end{enumerate}
\end{enumerate}
\end{definition}
Such an $f$ exists, since the set $s(b)$ is either empty, or bounded below (in lengths of elements) by 0, thus a lexicographically minimal element of $s(b)$ exists for all non-empty $s(b)$.
We only take the minimum when $s(b)$ is non-empty, and $e(b)$ is defined for all other cases.

\begin{lemma}\label{prop:termination-cond}
Let $G$ be a generator. 
If for all bitstrings $b \in \{0, 1\}^*$, $e(b) \neq X$ and there exists $c \in \mathbb{N}$ such that $|e(b)| \leq c$, then $G$ will terminate with probability 1.
\end{lemma}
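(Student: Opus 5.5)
The plan is a block-by-block argument: show that the transducer accepts with probability at least $2^{-c}$ on every block of $c$ fresh bits, whatever has been read so far, and then let the number of blocks go to infinity. Throughout, ``terminate'' means that the random stream lies in $A_G^\omega$, so the goal is $\mu(A_G^\omega)=1$.

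First I would fix any finite bitstring $b$ on which $G$ rejects. By hypothesis $e(b)\neq X$ and $|e(b)|\le c$, so $b\concat e(b)$ is accepted. By \Cref{obs:monotone}, every extension of $b\concat e(b)$ is also accepted. Consider the $2^{c}$ possible blocks $d\in\{0,1\}^c$ of the next $c$ bits after $b$. The block $d$ whose first $|e(b)|$ bits equal $e(b)$ yields an accepted string $b\concat d$. Hence, conditioned on the stream starting with $b$, the probability that $b\concat d$ is accepted is at least $2^{-c}$. If $c=0$, every string is accepted and the claim is trivial.

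Next I would iterate this over blocks. Let $E_k$ be the event that the length-$kc$ prefix of the stream $X\sim\mathcal{I}$ is rejected by $G$. Partition $E_k$ according to the prefix $b$ of length $kc$. Conditioned on any particular rejected $b$, the next $c$ bits are independent of $b$ and uniform, so the step above gives
\[
\mu(E_{k+1}\mid \text{prefix } b)\le 1-2^{-c}.
\]
Summing over the prefixes gives $\mu(E_{k+1})\le(1-2^{-c})\,\mu(E_k)$, and by induction $\mu(E_k)\le(1-2^{-c})^k$.

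Finally, a stream lies outside $A_G^\omega$ exactly when no finite prefix is accepted. That set is contained in $\bigcap_k E_k$, which has measure $\lim_k(1-2^{-c})^k=0$. Therefore $\mu(A_G^\omega)=1$.

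The only delicate step is the conditioning. Acceptance must depend on the prefix only through monotonicity, and this is exactly what \Cref{obs:monotone} provides. The independence of successive blocks under $\mathcal{I}$ then makes the geometric bound rigorous. I expect no other obstacle. Halting on finite inputs is already part of the definition of a halting transducer, so ``terminate'' only needs to be read as almost-sure acceptance of the infinite stream.
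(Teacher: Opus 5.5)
Your proof is correct and uses the same geometric-decay idea as the paper: from any rejected prefix, acceptance within the next $c$ bits has conditional probability at least $2^{-c}$, and iterating this drives the probability of never accepting to $0$. The paper instead cuts the stream into adaptive blocks whose lengths are $|e(\cdot)|$ of the prefix read so far, and writes the failure probability as an infinite product of conditionals. Your fixed blocks of length $c$, with $e(b)$ padded via monotonicity (\Cref{obs:monotone}), make the conditioning and the bound $\mu(E_{k+1})\le(1-2^{-c})\,\mu(E_k)$ fully rigorous.
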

\begin{proof}
For event $E$, let $\mathbb{P}(E | b)$ denote the probability of $E$ occurring, given that $b$ has been consumed on the input tape of $G$. 
Clear if $G(b)$ accepts, then $\mathbb{P}(G\ \text{accepts in}\ x\ \text{steps} | b) = 1$ for any $x$.
Thus assume $G(b)$ rejects.
Note that
$$\mathbb{P}(G\ \text{accepts in $|e(b)|$ steps} | b) \geq 2^{-|e(b)|}$$ 
where the inequality is due to the potential for multiple extensions of the same length as $e(b)$.
Thus 
$$\mathbb{P}(G\ \text{does not accept in $|e(b)|$ steps} | b) \leq 1 - 2^{-|e(b)|}$$

Suppose $G$ does not terminate on bitstream $\mathit{bs}$, we can partition $\mathit{bs}$ into the concatenation of a countable set of finite bitstrings, $\mathit{bs} = b_1...$, using $f$:
\begin{align*}
b_0 &= \epsilon    & \\
|b_i| &= |e(b_0b_1..b_{i-1})| & \qquad i > 1
\end{align*}

Define $x_i = b_1...b_i$, we can decompose the events that $G$ does not accept $\mathit{bs}$ as follows, and apply the earlier inequality to get:
\begin{align*}
      & \mathbb{P}(G\ \text{does not accept $\mathit{bs}$}) \\
 =    & \prod_{i=1}^{\infty} \mathbb{P}(|b_i| = |e(x_{i-1})| \cap G\ \text{does not accept on } b_i | x_{i-1}) \\
 \leq & \prod_{i=1}^{\infty} (1 - 2^{-|e(b_i)|})
\end{align*}

We see that if each $|e(b_i)|$ is bounded by some constant $c$, then the infinite product is guaranteed to collapse to zero.
\end{proof}

\begin{corollary}
The generator of natural numbers described in \Cref{def:naturalencoding} will terminate with probability 1.
\end{corollary}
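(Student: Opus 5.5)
The corollary asks that the natural-number transducer $N$ of \Cref{def:naturalencoding} terminates with probability 1. I would derive it from \Cref{prop:termination-cond}, by checking its two hypotheses for $N$: for every finite bitstring $b$, the minimal accepting extension exists ($e(b) \neq X$), and its length is bounded by a constant $c$ independent of $b$. Then \Cref{prop:termination-cond} gives acceptance with probability 1. As a sanity check, this also agrees with the direct computation $\mu(A_N^\omega) = 1 - \mu(X) = 1$ made earlier.

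To verify the hypotheses, I split on the parity of $|b|$. Recall that $N$ reads the input in rounds: a continuation bit, then, if that bit is $1$, a data bit.

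If $N(b)$ already accepts, then $e(b) = \epsilon$ by definition.

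Otherwise $N$ rejected because it ran out of input, and $b$ ends in one of two positions.
\begin{itemize}
\item If $b$ was exhausted at a round boundary, so that $N$ next wants a continuation bit, then appending the single bit $0$ makes $N$ halt and accept. So $|e(b)| \le 1$.
\item If the last bit read was a continuation bit $1$, so that $N$ next wants a data bit, then appending $00$ (a data bit followed by the stop bit) makes $N$ accept. So $|e(b)| \le 2$.
\end{itemize}
In every case $e(b) \neq X$ and $|e(b)| \le 2$, so $c = 2$ works uniformly.

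The main obstacle is the case analysis above: it must cover every possible position at which the input can run out. For $N$ this is immediate, because the round structure pins down the next expected bit from the parity of the position read so far. With the uniform bound $c = 2$ in hand, \Cref{prop:termination-cond} applies and yields the corollary.
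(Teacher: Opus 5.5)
Your proposal is correct and follows the paper's proof: both check the hypotheses of \Cref{prop:termination-cond} by exhibiting, for every rejected prefix, an accepting extension whose length is bounded by a constant. The paper uses the extensions $000$ or $00$ to get $|e(b)| \le 3$, while your parity case analysis gives the slightly sharper bound $|e(b)| \le 2$, which serves equally well.
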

\begin{proof}
Observe that for any sequence $b \in \{0, 1\}^+$ that does not terminate for the natural numbers generator, one of the two extensions $000$ or $00$ will result in the procedure terminating. 
Thus $|e(b)| \leq 3$ and \Cref{prop:termination-cond} applies.
\end{proof}

\begin{corollary}[Rejection Sampling is Sound]
Any procedure $F$ that performs rejection sampling based on predicate $f : \{0, 1\}^* \rightarrow \{T, F\}$ where there exists $w \in \{0, 1\}^*$ with $f(w) = T$ will terminate with probability 1.
\end{corollary}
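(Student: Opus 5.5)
The plan is to reduce to \Cref{prop:termination-cond}. I need to check that the rejection-sampling procedure $F$ (the generator $G_f$) is halting, and that every finite bitstring $b$ has an accepting extension $e(b)$ of length bounded by a constant $c$ independent of $b$.

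First, fix a witness $w$ with $f(w) = T$ and let $m = |w|$. Halting is immediate for any finite input. Each round reads finitely many bits: the continuation and data bits of \Cref{def:naturalencoding}, then $n$ bits for the candidate string. The evaluation of $f$ is a total computation and consumes no random bits. So on a finite input the machine either outputs, or eventually tries to read a blank cell and rejects. In particular it rejects only on input exhaustion, as \Cref{def:halting-transducer} requires.

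Next, bound the extension. Take an arbitrary rejected $b$. The run on $b$ stops at some position inside the current round: either in the middle of sampling $n$ (in the continuation/data-bit phase), or in the middle of sampling the $n$ bits of the candidate.
\begin{itemize}
  \item \emph{Mid-way through sampling the candidate.} Complete the round with arbitrary bits; at most $n$ more are needed. This is the awkward case, because $n$ was already fixed by $b$ and can be arbitrarily large, so it does not immediately give a uniform constant.
  \item \emph{Mid-way through sampling $n$.} Finish with a $0$ continuation bit; at most $2$ bits are needed.
\end{itemize}
After the current round ends, a new round starts unless the procedure has accepted. In that new round, supplying the encoding of $m$ followed by $w$ forces acceptance, and this costs $2\lceil \log m\rceil + 1 + m$ bits, a constant.

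The main obstacle is that the first case breaks the uniform bound $c$, since completing an in-progress candidate of length $n$ costs up to $n$ bits. I would handle it by decomposing the argument at round boundaries rather than applying \Cref{prop:termination-cond} verbatim. With probability $1$ each round terminates: the sampling of $n$ stops almost surely by the preceding corollary, and the candidate is then a finite read of $n$ bits. At the start of every round, the probability of producing $w$ within that round is at least $p = 2^{-(2\lceil\log m\rceil + 1 + m)} > 0$, independently of the past. The probability of never accepting is therefore at most $\lim_k (1-p)^k = 0$. This is exactly the product bound from the proof of \Cref{prop:termination-cond}, applied with blocks chosen as whole rounds. It gives $\mu(A_F^\omega) = 1$, so $F$ terminates with probability $1$ and, together with halting, is a generator.
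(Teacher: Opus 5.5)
Your proof is correct, and it departs from the paper's exactly where you flag the obstacle. The paper's proof is one line. It takes the shortest $w$ with $f(w)=T$, asserts that from any rejected prefix the minimal accepting extension has length at most $|w|$, and applies \Cref{prop:termination-cond} verbatim.

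As you point out, no such uniform bound holds in general, for two reasons:
\begin{itemize}
\item The bound $|w|$ ignores the bits needed to encode the length $|w|$ via \Cref{def:naturalencoding}.
\item More seriously, a prefix that stops partway through a candidate of length $n$ forces any accepting extension to finish that candidate first (whenever no completion satisfies $f$) before a new round can start. So $|e(b)|$ grows with $n$, which is unbounded, and the hypothesis of \Cref{prop:termination-cond} fails for $G_f$.
\end{itemize}

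Your repair is a renewal argument. You cut the stream at round boundaries instead of at minimal-extension boundaries, and you use that each round starts on fresh bits and succeeds with probability at least a fixed $p>0$ depending only on $w$. This needs no bound on $e(\cdot)$ at all, and it gives $\mathbb{P}(\text{no acceptance}) \le \lim_k (1-p)^k = 0$ directly. The paper's route is shorter but rests on a hypothesis that does not hold here; yours costs a few lines and actually goes through.

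Two small points, neither affecting correctness:
\begin{itemize}
\item Your encoding cost $2\lceil\log m\rceil+1$ is off by one when $m$ is a power of two, and undefined for $m=0$. This is harmless, since only $p>0$ matters.
\item The step ``independently of the past'' is justified because the start of each round is a stopping time for the i.i.d.\ bit sequence. The bits after it are therefore again fair and independent of the consumed prefix.
\end{itemize}
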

\begin{proof}
The set of strings that $f$ returns $T$ for is fixed. 
Let $w$ be the shortest such string such that $f(w) = T$.
Thus the shortest extension that causes $F$ to terminate has size bound $|w|$, and \Cref{prop:termination-cond} applies.
\end{proof}

When designing generators, to ensure termination, it is enough to argue if there always exists a \emph{short} sequence of choices that leads to termination.
If the shortest sequence grows with execution time, then the generator has a chance of non-termination.

\ifarxiv
\section{Generator Acceptance and its Link To The Real Numbers}
\begin{observation}
For a generator $G$, consider the set of all finite bitstrings $x$ such that $G(x)$ accepts.
Let this set be $A$, then $A$ has the following properties:
\begin{enumerate}
    \item If $x \in A$, then for all $w \in \{0, 1\}^*, w \concat x \in A$.
    \item For all $y \in \{0, 1\}^*$, there exists a $w \in \{0, 1\}^*$ such that $y \concat w \in A$.
\end{enumerate}

Consider the map $f : \{0, 1\}^* \rightarrow [0, 1]$, defined as:
\begin{gather*}
    f(0 :: w) = 2^{-1}(0 + f(w)) \qquad f(1 :: w) = 2^{-1}(1 + f(w)) 
\end{gather*}
that is, $f(b) = 0.b$, where $b$ here is interpreted as digits in base-2.

Clearly $f$ is an injection. 
The set $A$ can thus be viewed as regions on the interval $[0, 1]$, through the mapping $f$.

$A$ is defined on finite sequences, however we can extend $A$ to be a set $A^+$ of infinite sequences by defining:
$$A^+ = \{ bs \in \{0, 1\}^\omega : \exists b. b \text{ is a finite prefix of } bs \cap b \in A \}$$

We can similarly define $f^+$ on $A^+$ by taking the fixed point of the computation $f$ (i.e.\ the result $f^+(bs)$ would be an infinite sum, and would equal the number $0.b \in [0, 1]$).
$f^+(A^+)$ has the nice property that the probability measure $\mu(f^+(A^+))$ is precisely the probability that $G$ eventually terminates.

We observe that $f(A)$ (and $f^+(A^+)$) has the following ``dense'' property: for any subinterval $(a, b) \subseteq [0, 1]$, $f(A) \cap (a, b) \neq \emptyset$, and furthermore contains infinitely many elements of $f(A)$.
\end{observation}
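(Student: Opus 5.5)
We prove the density claim by fitting a whole dyadic interval inside $(a,b)$ and then extending, within that interval, a prefix that names it. Throughout we assume $\mu(A_G^\omega)=1$. If $\mu(A_G^\omega)=0$ then $A=\emptyset$ and the claim is false, so this case must be excluded.

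We use two facts about $A$, the set of finite bitstrings on which $G$ accepts. First, $A$ is closed under \emph{right} extension: if $x\in A$ then $x\concat w\in A$ for every $w$. This is \Cref{obs:monotone}; note that the first listed property should be read as $x \concat w$ rather than $w \concat x$. Second, $A$ is extensible: for every $y$ there is $w$ with $y\concat w\in A$. This follows from the proposition in \Cref{appendix:alternative-eventually-productive} that every generator is extensible, or directly from the fact that $\mu(S_y)>0$ forces $S_y\cap A_G^\omega\neq\emptyset$.

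\textbf{Step 1 (dyadic interval).} Fix $0\le a<b\le 1$. Choose $k$ with $2^{-k}<(b-a)/2$. Then some closed dyadic interval $I_y=[0.y,\;0.y+2^{-k}]$, with $y\in\{0,1\}^k$ and $0.y$ read as a binary fraction, lies strictly inside $(a,b)$. For every finite $z$ we have $f(y\concat z)=0.y+2^{-k}f(z)$ with $f(z)\in[0,1)$, so $f(y\concat z)\in I_y\subseteq(a,b)$.

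\textbf{Step 2 (nonemptiness).} By extensibility pick $w$ with $y\concat w\in A$. Then $f(y\concat w)\in f(A)\cap(a,b)$.

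\textbf{Step 3 (infinitely many points).} Here one must be careful, because $f$ is \emph{not} actually injective: trailing zeros do not change the value. So the strings $y\concat w\concat 0^j$, which all lie in $A$, may give only one real number. Instead take $x_j=y\concat w\concat 0^j1$ for $j\ge 0$. Each $x_j$ is in $A$ by right-closure. Their images are
\[
f(x_j)=f(y\concat w)+2^{-(|y|+|w|+j+1)},
\]
which are pairwise distinct. By Step 1 they all lie in $I_y\subseteq(a,b)$. Hence $f(A)\cap(a,b)$ is infinite.

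The only delicate points are two corrections to the statement as written: the direction of the closure property, and the failure of injectivity that forces the choice $0^j1$ rather than $0^j$. The remaining steps are routine.
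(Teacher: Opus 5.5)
Your argument for the density of $f(A)$ is correct. The paper gives no proof of this observation: it lists the two properties and asserts density. Your route is the argument those properties are evidently meant to support: take a dyadic interval $I_y\subseteq(a,b)$ named by a prefix $y$, extend $y$ into $A$, then use closure under extension.

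Your corrections are all genuine:
\begin{itemize}
\item Closure holds only for right extensions. For the transducer $N$ of \Cref{def:naturalencoding}, $0\in A$ but $1\concat 0\notin A$.
\item $f$ is not injective, since $f(0)=f(00)=0$. So infinitely many strings in $A$ need not give infinitely many points, and your $0^j1$ padding is what makes Step 3 go through.
\item The empty-set generator must be excluded.
\end{itemize}
The appendix proposition you cite (every generator is extensible) shares that last blind spot. Its ``contradiction'' $\mu(A_G^\omega)<1$ is no contradiction when $\mu(A_G^\omega)=0$, so the proposition is valid only under your standing assumption $\mu(A_G^\omega)=1$. Your direct measure argument is the cleaner justification. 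It needs one more line: if the accepting prefix of a stream in $S_y$ is shorter than $y$, then $y\in A$ by \Cref{obs:monotone}.

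You leave two parts of the statement unaddressed, both short.
\begin{itemize}
\item Density of $f^+(A^+)$ is immediate. Since $x\concat 0^\omega\in A^+$ and $f^+(x\concat 0^\omega)=f(x)$, we get $f(A)\subseteq f^+(A^+)$.
\item For the measure claim, read $\mu$ on $[0,1]$ as Lebesgue measure. Note that $A^+=A_G^\omega$ and $f^+(S_b)=[f(b),f(b)+2^{-|b|}]$. Since $A_G$ is prefix-free, these closed dyadic intervals have pairwise disjoint interiors. Hence $f^+(A^+)$ has measure $\sum_{b\in A_G}2^{-|b|}=\mu(A_G^\omega)$, as claimed.
\end{itemize}
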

\fi

\section{Proof of \Cref{prop:re-equals-gen}}\label{proof:re-equals-gen}
The proof will be given in two parts, for the two sides of the inclusion.

\begin{proposition}\label{prop:re-subseteq-gen}
$\RE{} \subseteq \GEN{}$
\end{proposition}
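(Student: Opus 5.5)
Given $L \in \RE{}$ with recogniser $M$, we build a generator $G$ with $\Lang{G} = L$; the case $L = \emptyset$ is handled separately. If $L = \emptyset$, let $G$ reject every finite bitstring, for instance by reading input bits forever until the input is exhausted. Then $G$ is halting, $A_G = \emptyset$, and $\mu(A_G^\omega) = 0$, so $G$ is a generator with $\Lang{G} = \emptyset$.

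Now suppose $L \neq \emptyset$. The first step fixes a fallback element without consuming randomness. Before reading any input bit, $G$ dovetails $M$ over all strings $x_1, x_2, \ldots \in \Sigma^*$: in stage $k$ it runs $M$ for $k$ steps on each of $x_1, \ldots, x_k$. It stops at the first accepting run, which occurs at some fixed $x_{j}$ and stage $k_0$. Because $L \neq \emptyset$, this search terminates, and it does so deterministically with the same result $d := x_j \in L$ on every input. This is the key point that keeps $G$ halting: the potentially unbounded search happens before any random selection, so it cannot interact badly with the halting condition of \Cref{def:halting-transducer}.

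The second step makes the random guess. Using the procedure $N$ of \Cref{def:naturalencoding}, $G$ samples a natural number $t$ and then a string $x$, say by sampling a length $m$ with $N$ and reading $m$ symbols encoded in bits. It then simulates $M(x)$ for exactly $t$ steps. If $M$ accepts within $t$ steps, $G$ outputs $x$; otherwise it outputs $d$. Every run of $G$ performs a bounded amount of work after finitely many random selections. The only way $G$ can fail is by running out of input during the sampling phase, so clause (2) of \Cref{def:halting-transducer} holds and $G$ halts on every finite bitstring. Productivity almost surely reduces to the almost-sure termination of the sampling procedures: $N$ terminates with probability $1$, as shown earlier, and a finite sequence of such procedures does too. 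Each of these procedures always has a constant-length accepting extension, so \Cref{prop:termination-cond} applies. Hence $\mu(A_G^\omega) = 1$.

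Finally, we check $\Lang{G} = L$. For soundness, every output is either $d \in L$ or an $x$ that $M$ accepted, and such an $x$ lies in $L$. For completeness, if $x \in L$ then $M$ accepts $x$ within some $t$ steps, and a finite bitstring encoding $t$ and $x$ exists, so $G$ outputs $x$ on it. The main obstacle I expect is making the halting/rejection condition precise for the composed transducer. Specifically, rejections must only come from reading a blank on the input tape, so the deterministic prelude and the bounded simulation must never reject on their own, and the output tape must be written only once, at the very end. Both requirements are met by the design above.
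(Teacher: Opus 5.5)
Your proposal is correct and follows the paper's proof essentially step for step. In both, a deterministic dovetailing prelude fixes a fallback element of $L$; then $N$ samples a step bound and a string length, a random string $x$ is drawn, and $M(x)$ is simulated for that many steps, outputting $x$ on acceptance and the fallback otherwise. The differences are minor: you split on $L=\emptyset$ rather than on finiteness of $L$, which is slightly cleaner, and reading $x$ takes $\Theta(m)$ bits, so \Cref{prop:termination-cond} with a uniform constant does not apply to $G$ as a whole---your composition argument (both calls to $N$ terminate almost surely, after which a fixed number of further reads always succeeds on an infinite stream) is the right justification and matches the paper's.
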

\begin{proof}
Suppose $L \in \RE{}$.
We show that there exists a generator $G$ such that $\Lang{G} = L$.
Let $M$ be a Turing machine that accepts $x$ in a finite number of steps iff $x \in L$.
  
If $|L|$ is finite, then there exists a generator that produces one of the finite elements of $x \in L$ at random.
Thus we assume now $|L|$ is infinite.
We describe the generator $G$ for $L$ as follows:
\begin{enumerate}
    \item Compute some $x_0$ via dovetailing.
    \item Sample natural numbers $n$ and $m$ (e.g. via the procedure in \Cref{def:naturalencoding}).
    \item Sample a random string $x$ of length $m$, and simulate $M(x)$ for up to $n$ steps.
    \item If $M$ accepts during this simulation, then output $x$, otherwise output $x_0$.
\end{enumerate}

We now show that $G$ satisfies the conditions of a generator.
Because we are working in the case where $L \neq \emptyset$ (in fact $|L|$ is infinite), dovetailing is guaranteed to find some element $x_0 \in L$ in finite time.
As the simulation of $M(x)$ is for a bounded number of steps, the sampling procedure will always halt on any finite bitstrings.
For almost sure termination on bitstreams, note $G$ is deterministic subject to fixing the results of the two calls to \Cref{def:naturalencoding}.
Thus since \Cref{def:naturalencoding} terminates almost surely, so does $G$.

It clear that $G$ halts in an accepting state if and only if it has generated some $x \in L$, from which it follows that $\Lang{G} \subseteq L$.
For any $x \in L$ there exists $n$ such that $M$ accepts $x$ in $n_0$ steps.
Both $x$ and the corresponding $n > n_0$ has a set of finite bitsrings that results in them being selected in steps 2 and 3.
Thus $x$ and $n$ being selected occurs with nonzero probability.
We therefore have $\Lang{G} = L$ as required.
\end{proof}

\begin{proposition}
$\GEN{} \subseteq \RE{}$. 
\end{proposition}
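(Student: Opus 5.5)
To show $\GEN{} \subseteq \RE{}$, take $L \in \GEN{}$ with a generator $G$ such that $\Lang{G} = L$. I will build a Turing recogniser $M$ for $L$ that, on input $x$, searches over all finite bitstrings as candidate random seeds. The key fact is that $G$ is a \emph{halting} transducer (\Cref{def:halting-transducer}): on every finite bitstring $b$ it halts, either accepting with some output $G(b)$ or rejecting. So each individual simulation $G(b)$ is a finite computation, and no dovetailing across seeds is needed. Running the simulations one after another is safe.

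Concretely, $M$ on input $x$ enumerates $b_1, b_2, \ldots$ through $\{0,1\}^+$ in length-lexicographic order. For each $b_i$ it simulates $G(b_i)$ to completion, treating the blank past the end of $b_i$ as the input end. If the simulation accepts and the output tape content equals $x$, $M$ accepts; otherwise it moves to $b_{i+1}$. This simulation is a standard multi-tape simulation: the read-once input tape is filled with $b_i$, and the write-once output tape is recorded and then compared with $x$.

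Correctness then follows directly from \Cref{def:generated-language}. If $x \in L$, there is some $b \in \{0,1\}^+$ with $G(b) = x$. That $b$ appears at a finite position in the enumeration, and every earlier simulation halts, so $M$ reaches $b$ after finitely many steps and accepts. If $x \notin L$, no seed produces $x$, so $M$ never accepts; it runs forever, which \Cref{def:re} permits. Hence $L \in \RE{}$. The degenerate case $\mu(A_G^\omega) = 0$ is covered automatically: then $L = \emptyset$ and $M$ accepts nothing.

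The only delicate point is that the simulation of each $G(b_i)$ terminates. This is exactly clause (1) of \Cref{def:halting-transducer}, so it is not a real obstacle. If one wished to avoid relying on it, dovetailing over pairs (seed, step bound) would work just as well. The almost-sure productivity condition of \Cref{def:generators} plays no role in this direction.
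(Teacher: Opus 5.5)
Your proposal is correct and matches the paper's proof: enumerate candidate seeds in lexicographic order, run $G$ to completion on each one (each run terminates because $G$ is a halting transducer), and accept exactly when some run outputs $x$. You also correctly cite clause (1) of \Cref{def:halting-transducer} for termination, and correctly note that almost-sure productivity plays no role in this direction.
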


\begin{proof}  
  Suppose $L \in \GEN{}$, and let $G$ be a generator such that $\Lang{G} = L$.

  We construct a Turing machine $M$ that terminates in an accepting state given input $x \in L$, and loops infinitely given input $x \notin L$, as follows.

  Given an input $x$, $M$ enumerates finite-length bitstrings in lexicographic order (0, 1, 00, 01, 10, etc.). For a bitstring $b$:

  \begin{itemize}

  \item $M$ simulates the execution of $G$ in an initial configuration where the input tape of $G$ contains $b$. By the definition of a generator (Definition~\ref{def:transducer}), this simulation of $G$ is guaranteed to lead to $G$ halting.

  \item If $G$ halts in an accepting state having generated $x$ (the input to $M$), $M$ halts in an accepting state.

  \item If instead $G$ halts in a non-accepting state or in an accepting state having generated some $x' \neq x$, $M$ proceeds to consider the next bitstring in lexicographical order.

    \end{itemize}

  By Definition~\ref{def:generated-language}, for $x \in \Lang{G}$ there exists a bitstring $b$ such that $G(b) = x$. Therefore, when executed on input $x$, $M$ will eventually encounter such a $b$, in which case $M$ will halt, accepting $x$.

  Conversely, for $x \notin \Lang{G}$ there is no bitstring $b$ such that $G(b) = x$. Therefore, when executed on input $x$, $M$ will loop forever, simulating the execution of $G$ on increasingly-long bitstrings.
\end{proof}

\section{Proof of \Cref{prop:subset-nondet}}\label{proof:subset-nondet}
\begin{proof}
We construct a non-deterministic TM $M$ for a generator $G$.
Give string $x$ with $|x| = n$, we decide if it belongs to $\Lang{G}$ on $M$ by simulating $G$ on $M$, branching nondeterministically whenever a random bit is revealed, with a branch for each of the possible outcomes.
When the simulation of $G$ terminates, accept if it is equal to $x$, and otherwise reject.
Thus, $x$ is accepted by $M$ if and only if the simulation of $G$ produced $x$ on any branch, that is, if and only if $x \in \Lang{G}$.
Since $G$ is time bounded by $t(n)$ / space bounded by $s(n)$, each branch of $M$ uses at most $log(n)t(n) + n \in O(t(n))$ time, and/or at most $s(n) \in O(s(n))$ space (assuming space is reused after writing each symbol of $x$).
Thus both claims hold.
\end{proof}

\section{Proof of \Cref{prop:cnf-sat-pg}}\label{proof:cnf-sat-pg}
\begin{proof}
We will define a generator $G_{m, k, p}$ for each $(m, k, p)$, where $m$ is the number of clauses, $k$ is the number of literals, $p$ is the length of the representation for each variable, with $m < k$.
Note all CNF-SAT formulas have $m < k$, as if the number of clauses is more than the number of literals ($m > k$), then some clauses must be empty leaving the formula unsatisfiable and therefore not a member of CNF-SAT.
Since we fix the number of clauses and literals, we know the exact length of the formula as long as the encoding of the clauses and literals are known.
Denote the encoding-dependent length of the formulas by $l(m, k, p)$, which is polynomial-time computable for any reasonable encodings.
For instance, $l(m, k, p) = m + (k + 2)p$ is the length of the encoding of the formula as follows: a list of $m$ clauses will contain $m$ length-1 separator characters; $(k+1)p$ characters will be consumed by $p$ literals of length $k$ each, with an extra character to encode negation; $k$ characters will be consumed by length-1 separators between literals.
Thus each string in $\Lang{G_{m, k, p}}$ have the same length, and the language of the generators $L_{m, k, p} = \Lang{G_{m, k, p}}$ forms a partition of the langauge CNF-SAT.

With these, we can define generators for CNF-SAT formulas with length exactly $n$ by first computing the set $I_n = \{ (m, k, p) : m + (k+2)p = n \}$, selecting a random $(m, k, p) \in I_n$ and running the corresponding $G_{m, k, p}$ to obtain a random formula of length $n$.

It now suffices to show that generators $G_{m, k, p}$ of satisfiable CNF formulas exist for any choice of $m, k, p$ with $m < k$, in time polynomial in $l(m, k, p)$. 
We define $G_{m, k, p}$ as follows:
\begin{enumerate}
    \item Choose an $n \leq k$ to be the number of variables. 
    \item For each variable $x_i, 1 \leq i \leq n$, choose a random boolean assignment. 
    \item Keep counters $m', k'$ initially set to $m, k$ respectively.
    \item Choose an integer $q \leq k'-m'$ to be the number of literals included in the clause, with the exception of $m'=1$, in which case $q = k'$. 
    Decrement counters $k', m' := k' - q, m' - 1$. 
    \item We form the literals $b_1, ..., b_q$ for the clause by selecting each literal to be either $x_j$ or $\neg x_j$ for some randomly $x_j$.
    Choose a number $1 \leq i \leq q$, and fix $b_i$ to be the $\textit{True}$ based on the preselected assignment.
    \item Add the clause $b_1 \cup ... \cup b_q$ to the formula $\phi$.
\end{enumerate}
Throughout the procedure, we only make random selections from finite sets.
Each choice of $q$ ensures that the remaining clauses will contain at least one literal.
The runtime is clearly polynomial in $l(m, k, p)$.
$\phi$ is satisfiable by construction as under the randomly chosen assignment during generation, each clause contains at least one $\textit{True}$ literal.

If $\phi$ is a satisfiable formula in CNF form with $m$ clauses, $k$ total literals and $n$ distinct variables, then \emph{some} selection of $G_{m, k, p}$ will generate $\phi$.
\end{proof}

\section{Proof of \Cref{prop:epg-subset-np}}\label{proof:epg-subset-np}
\begin{proof}
For $G = (G_i)_{i \in \mathbb{N}}$ generating $L$ in expected polynomial time, we need to construct a NDTM $M$ that decides $L$.
We define $M$ as follows. 
For input $x$ with $|x| = i$:
\begin{enumerate}
    \item Simulate $G_i$ on each branch for at most $p(i)$ steps.
    \item On each of the branches, branch again nondeterministically whenever a bit on the random input tape is read, for its possible contents, 0 or 1.
    \item When $G_i$ enters an accepting state, accept if the output tape contents is equal to $x$, otherwise reject.
\end{enumerate}

Without loss of generality, assume that $x \in L_i \subset L$.
Then we know there exists a bitstream $\textit{bs}$ where $x$ is accepted within $p(i)$ steps. 
Thus there would also be a branch on $M$ where $x$ is produced by the simulation of $G_i$ and thus $x$ would be accepted.

On the other hand, if $x \notin L$, then $x \notin L_i$ for any $i \in \mathbb{N}$, and no simulations of any $G_i$ would produce $x$. 
Thus all branches, and hence also $M$, would reject $x$.
\end{proof}

\section{Proof of \Cref{prop:pg-oracle-superset-np}}\label{proof:pg-oracle-superset-np}
\begin{proof}
$\PG{}^\NP{} \subseteq \NP{}^\NP{}$: follows since the $\NP{}^\NP{}$ machine can simulate the $\PG{}^\NP{}$ oracle machine nondeterministically.

$\NP{} \subseteq \PG{}^\NP{}$:
We use $\textit{SAT}$ as the \NP{} oracle.
For fixed $n \in \mathbb{N}$ and any $L \in \NP{}$ with NDTM $M$ time-bounded by $p$, we encode $M$ using the parametric $\textit{SAT}$ formula $\phi_n$, provided by the proof of \NP{}-completeness of $\textit{SAT}$~\citep{sipser1996introduction}.
$\phi_n$ has $n$ variables reserved for the initial contents of the tape, and for $x \in \{0, 1\}^{\leq n}, \phi_n(x)$ denotes the formula with those variables substituted for by the bits of $x$.

Note that if $|x| = n$, then by construction, $\phi_n(x)$ is satisfiable if and only if $M$ accepts $x$ within $p(n)$ steps.
And if $|x| < n$, then $\phi_n(x)$ is satisfiable if and only if there exists an extension $e$ with $|e| = n - |x|$, and $M(x \concat e)$ accepts.

Using the $\textit{SAT}$-solver, we describe a generator $G_n$ that samples a length-$n$ $x \in L$ as follows (if one exists):
\begin{enumerate}
    \item Suppose $s$ is the string already generated, and is initially the empty string.
    \item Run the $\textit{SAT}$ solver for both $\phi_n(s \concat 0)$ and $\phi_n(s \concat 1)$.
    \item If neither accepts, reject; if only $\phi_n(s \concat e)$ accepts, set $e' := e$; if both accepts, then choose at random one of $0$ or $1$ to be $e'$.
    \item Return to step 2 with string $s := s \concat e'$.
    \item Write $s$ to the output tape when $|s| = n$.
\end{enumerate}

Clearly this will terminate within $n$ iterations, and each iteration involves at most 2 calls to the $\textit{SAT}$ solver, for a total of $O(n)$ calls. 
By construction, for any $n$, any length-$n$ $x \in L$ can be generated in this fashion, thus $\Lang{G} = L$.
\end{proof}

\section{Proof of \Cref{prop:primes-epg}}\label{proof:primes-epg}
\begin{proof}
By Bertrand's Postulate (theorem), there is always a prime between $n$ and $2n$. 
Thus, we know that for any $n$, the set $\textit{PRIMES}_n = \{ x \in \textit{PRIMES} : 2^{|n|} \leq x < 2^{|n|+1} \}$ is nonempty.

We know from \citep{agrawal2004primes} that $\textit{PRIMES} \in P$.
Let its decider be a function $f : \{0, 1\}^* \rightarrow \mathbb{B}$ that runs in polynomial time $p(n)$.
For each $n \in \mathbb{N}$, define a generator $G_n$ as follows.
\begin{enumerate}
    \item Choose a random number $x$ between $[2^{n}, 2^{n+1} - 1]$ by sampling for $n$ bits.
    \item If $f(x)$ returns $\textit{Accept}$, output $x$.
    \item Otherwise, repeat from step 1.
\end{enumerate}

The density of primes in the range $[1, k]$ is $O(1 / log(k))$.
Assuming almost-uniform distribution of the primes, their density in the range $[2^{n}, 2^{n+1} - 1]$ is at least $O(1/log(2^{n+1})) = O(\frac{1}{n+1})$.
Therefore, an expected $O(n+1) = O(n)$ iterations is needed for the algorithm to find a prime number.
Since $f$ runs in polynomial time $p$, and we make $n+1$ expected calls to $f$, the expected runtime of the algorithm is $O(np(n))$, which is still polynomial.

Additionally, for every prime $q \in [2^n, 2^{n+1} - 1]$, there exists a bitstream $\textit{bs}$ which leads us to selecting $x = q$ in step 1 of the algorithm, and generate $x$ in time $O(p(n))$, lower than $O(np(n))$.
Thus $\textit{PRIMES} \in \EPG{}$.
\end{proof}

\section{Proof of \Cref{prop:boosting-epg}}\label{proof:boosting-epg}
\begin{proof}
Suppose $L$ is generated by the expected polynomial time generator $G = (G_i)_{i \in \mathbb{N}}$, with expected time of $p(i)$ for $G_i$ to produce a sample.

Let the runtime of $G_i$ be a random variable $X_i$ with $\mathbb{E}(X_i) = p(i)$.
By Markov's inequality, 
$\mathbb{P}(X_i > 3\mathbb{E}(X_i)) < \frac{\mathbb{E}(X_i)}{3\mathbb{E}(X_i)} = \frac{1}{3}$.
That is, the probability of that $X_i$ runs for $3\mathbb{E}(X_i) = 3p(i)$ steps without producing an output is at most $1/3$.
Therefore, $\mathbb{P}(X_i \leq 3p(i) \geq \frac{2}{3}$.

For each $G_i$, we define algorithms $H_i$ (these are operationally generators, but are not strictly generators since they have a nonzero probability of producing nothing---and generators require almost sure production of a value) to repeat for $k$ iterations the execution of $G_i$, running each execution for $3p(i)$ steps. 
$H_i$ therefore produces a sample of $L_i$ with probability at least $1 - \frac{1}{3}^k$, and runs in time $3kp(i)$.

To exceed probability $1 - \epsilon$, we have:
$
              1 - \frac{1}{3}^k > 1 - \epsilon 
    \implies \epsilon          > \frac{1}{3}^k 
    \implies log(\frac{1}{\epsilon}) < klog(3)
$.
Thus choosing $k = \lceil \frac{log(\frac{1}{\epsilon})}{log(3)} \rceil$ has the desired success probability.

It remains to show that $H_i$ can produce all elements $w \in L_i$ with nonzero probability. 
This holds because by definition, for all $w \in L_i$, there exists a bitstream $\textit{bs}$ that leads to $G_i$ producing $w$ in at most $p(i)$ steps.
This path can be executed by $H_i$, since $H_i$ simulates $G_i$ for $3p(i) > p(i)$ steps on each run, and will produce $w$ if $\textit{bs}$ is on the random tape when $G_i$ is being simulated.
\end{proof}

\section{Configuration Counting Argument for \Cref{prop:generator-time-space-tradeoff}}\label{appendix:config-counting-gspace-gtime}
Note that for \Cref{prop:generator-time-space-tradeoff}, an initial direct proof might be to proceed with a configurations-counting argument similar to the decision version, where the space-bounded machine is also automatically bounded since it must terminate.
However, due to the fact that generators take input, the situation is more complicated: the input, although read-only, allows the space-bounded machine to stall without exploring its state configuration space fully.
Whereas with the same inputs, the time-bounded machine must terminate in bounded time.
As a concrete example, consider a logspace generator $W$ which implements \emph{waiting} for the right bitstring input, as follows:
\begin{example}\label{example:logspace-waiting} 
Suppose the logspace generator $W = (W_i)_{i \in \mathbb{N}}$ implements the following for each $i$:
\begin{enumerate}
    \item \label{example:logspace-waiting-1} If the input tape header points to a $1$, move the input header forward, and the rest of the machine remains in the starting configuration.
    \item Otherwise, proceed with a logspace generation algorithm for $\Lang{W_i}$.
\end{enumerate}

The waiting process of step \ref{example:logspace-waiting-1} does not consume any space: nothing is written onto the work tape.
Hence $W$ consumes logspace overall.
Clearly, $W$ can take more than polynomial number of steps for generation, if given a sufficiently long prefix of $1$s on the input bitstring.
\end{example}

We give an alternative proof (similar to the proof of decision procedures) to highlight the intuition behind the the time-space tradeoff for generators, and its similarities/differences to the decision case.

\begin{proof}[Alternative Proof to \Cref{prop:generator-time-space-tradeoff}]
Suppose we are given a logspace generator $G = (G_i)_{i \in \mathbb{N}}$.
We describe a polynomial-time sampling procedure $F_i$ that works for each $\Lang{G_i}$.

\textbf{Definition of the DFA}: for each $G_i$, we produce (based on the transition function of $G_i$) a DFA $D_i$.
We first define configurations of a generator $H$:
\begin{definition}[Configuration of $H$]
A configuration $c$ of a generator $H$, is an element $c = (q, b_i, m, w_{work}, c_o) \in Q \times \{0, 1\} \times \{0, 1, \ldots, k\} \times \Sigma^* \times \Gamma$, where $q$ is the internal state of $H$, $b_i$ is the current symbol on the input tape, $m$ is the worktape head position, $w_{work}$ is the worktape content, $c_o$ is the current symbol on the output tape. 
The length of the worktape content is $k$.
\end{definition}
\noindent
Define $n = i$.
Note that as the work tape of $G_i$ is space bounded by $f(n)$ for a logarithmic function $f \in O(log(n))$, we only need to consider up to $f(n)$ elements on the worktape within any configuration of $G_i$. 
We define the nodes of $D_i$ to be all configurations of $G_i$, with up to $f(n)$ elements on the work tape.
That is, states of $D_i$ are from the set $Q \times \{0, 1\} \times \{0, 1, \ldots, k\} \times \Sigma^{f(n)} \times \Gamma$.
Define the starting state of $D_i$ to be $c_0$.
$c_0$ has two outward transitions, consuming symbols $0$ and $1$ respectively, heading towards the two possible starting configurations of $G_i$: the starting configurations with either a $0$ or a $1$ as the first symbol on the input tape. 
Define also the accepting state of $D_i$ to be $c_A$, a newly added configuration of $G_i$, with $\epsilon$ transitions $c^k_A \rightarrow c_A$ from all accepting configurations $c^k_A$ of $G_i$ (these are configurations with $q \in Q$ in the accepting state).

Define a transition on $D_i$ of $c \xrightarrow{1} c'$ (that consumes a $1$), if $c$ to $c'$ is a valid step based on $G_i$'s transition function, and $c'$ is identical to $c$, except the contents of the input cell is replaced by a 1. 
Define $c \xrightarrow{0} c'$ and $c \xrightarrow{\blanksymbol{}} c'$ in a similar way.
Define a transition $c \xrightarrow{\epsilon} c'$ for configuration changes that do not move the input tape forward.
The use of $\epsilon$ transitions are for conceptual convenience and do not make our machine an NFA: since $G_i$ is deterministic, at most one such $\epsilon$-transition is possible from each configuration, and that $\epsilon$-transitions cannot coexist with other types of transitions.
They can equivalently be replaced by some other reserved symbol, such as $\Box$, however such a formulation would be more verbose.

Note that since $|Q|, |\Sigma|, |\Gamma|$ are constants and $f$ is a logarithmic function, the total number of nodes on $D_i$ are bounded by $p(n)$, for some polynomial $p$.
There can be at most a polynomial number of edges, where a loose bound is the number of edges on a complete graph.
Thus, the total number of nodes and edges of $D_i$ is bounded by a polynomial $q(n)$.

\begin{proposition}
$D_i$ is a DFA.
\end{proposition}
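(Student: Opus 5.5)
The claim amounts to determinism: from every state of $D_i$ and for every symbol of the input alphabet $\{0,1,\blanksymbol\}$ (together with the auxiliary $\epsilon$ label), at most one outgoing transition exists. Also, $\epsilon$-transitions never coexist with symbol-consuming transitions out of the same state. I would show this by a case analysis on the kind of state, using only the fact that $G_i$ is a deterministic Turing machine. The finiteness of the state set has already been observed above, since there are polynomially many configurations with at most $f(n)$ work-tape cells.

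\textbf{The start state and $c_A$.} By construction $c_0$ has exactly two outgoing edges, labelled $0$ and $1$. They lead to the two initial configurations of $G_i$, which differ only in the current input symbol, so each label is used once. $c_A$ has no outgoing edges, so it is trivially deterministic.

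\textbf{An ordinary configuration $c=(q,b,m,w,c_o)$.} Every component that the transition function of $G_i$ reads is recorded in $c$: the state, the scanned input symbol, the scanned work symbol $w_m$, and the scanned output symbol. So the transition function determines a unique next action: the new state, what is written on the work and output tapes, the work-head move, and whether the input head advances. I split on the last of these.

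\begin{itemize}
\item If the input head does not advance, the successor configuration is uniquely determined, and it is entered via the single $\epsilon$-edge. No symbol-labelled edge leaves $c$.
\item If the input head advances, every component of the successor is fixed except the newly scanned input symbol. That symbol is exactly the edge label $a\in\{0,1,\blanksymbol\}$, so there is one successor per label and no $\epsilon$-edge.
\end{itemize}

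Accepting configurations of $G_i$ are halting. Their only outgoing edge is the added $\epsilon$-edge to $c_A$, so they also satisfy the condition.

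\textbf{Expected obstacle.} The delicate step is ensuring that the configuration tuple really captures all information the transition function depends on. The recorded output-cell symbol is harmless, since the output tape is write-once and the head only moves right. More importantly, the work-head position $m$ must range over $\{0,\dots,f(n)\}$ rather than $\{0,\dots,k\}$; otherwise the scanned work symbol would not be recoverable from the tuple. I would fix this indexing explicitly.

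The $\epsilon$ label then plays the role of a reserved symbol, as the paper remarks. With that convention, the case split above yields a partial transition function $\delta:\text{States}\times\{0,1,\blanksymbol,\epsilon\}\to\text{States}$. Hence $D_i$ is a DFA, possibly partial, which can be completed with a sink state if totality is required.
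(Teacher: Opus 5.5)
Your proof is correct and follows essentially the same route as the paper. Both arguments derive the determinism of $D_i$ from the determinism of $G_i$ by splitting on whether a step leaves the input head in place, which gives at most one $\epsilon$-edge, or advances it, which gives at most one edge per revealed symbol in $\{0,1,\blanksymbol\}$. Your extra treatment of $c_0$, $c_A$, accepting configurations and the work-head index range only tightens details that the paper leaves implicit.
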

\begin{proof}
The edges of $D_i$ are defined exactly according to the transition function of $G_i$.
Since the nodes of $D_i$ covers all accessible configurations of the TM (only a single cell for the input and output tapes are needed as the header cannot move back), each transition of $G_i$ that does not move the input header corresponds to at most one edge $c \xrightarrow{\epsilon} c'$ on $D_i$.
As $G_i$ is a deterministic TM (when fixing the input bitstring), there is at most one $\epsilon$ edge from any configuration.

For each transition that moves the input tape forward, there is also a unique next state once the revealed character $b$ is fixed. 
Thus, there cannot exist more than one of each kind of $0$, $1$, and $\blanksymbol{}$ edge, from each node on $D_i$.

Altogether, from each configuration there is at most one edge for each consumable symbol $\{0, 1, \blanksymbol{}\}$, making $D_i$ a DFA.
\end{proof}

\textbf{Polynomial Generation Algorithm}: for our \PG{} algorithm $F_i$, we describe a traversal of the DFA to find an accepting path, corresponding to an accepting execution of $G_i$.
We will use the following definitions:
\begin{definition}\label{def:write-transitions}
A transition $c \xrightarrow{ s } c'$ ($s \in \{0, 1, \epsilon\}$) is said to \emph{write} the symbol $x \in \Gamma$, if $c$ points to an empty cell on the output tape, and $c'$ points to a cell with $x$ on the output tape.
We also say such a transition is a \emph{writing transition}.
\end{definition}

\begin{definition}
For the execution path $p = c_1 \ldots c_k$ on $M_i$, the output of the path $p$ is the output symbol from all writing transitions used, in order of appearance in the execution path.
\end{definition}

Starting from $c_0$, perform a randomised BFS, without revisiting any nodes, until reaching the accepting state $c_A$.
This BFS results in a random, loopless path from $c_0$ to $c_A$.
The runtime of this algorithm is bounded above by $q(n)$, the total number of transitions and nodes in $M_i$, and thus runs in polynomial time of $n$.

\textbf{Correctness}: we will show the following: 1) any loopless path from $c_0$ to $c_A$ on $M_i$ corresponds to an input $b$ and a valid execution $G_i(b)$, and 2) any string generable by $G_i$ can be generated by $F_i$. 
That is, $\Lang{G_i} = \Lang{F_i}$. 

\begin{proposition}
Any path $p$ from $c_0$ to $c_A$ on $M_i$ corresponds to an input $b$ and an execution $G_i(b) = w$, which produces the same output $w$ as $p$.
\end{proposition}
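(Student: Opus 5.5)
The plan is to read the input $b$ and the run $G_i(b)$ directly off the path $p = c_0 c_1 \ldots c_k c_A$, and then check step by step that $G_i(b)$ follows exactly this sequence of configurations.

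\textbf{Constructing $b$.} Let $b$ be the word of labels on the non-$\epsilon$ edges of $p$, in order. This includes the first edge out of $c_0$, which fixes the first input symbol. Every later non-$\epsilon$ edge $c \xrightarrow{s} c'$ is a move of $G_i$ that advances the input head and reveals the next symbol $s$. If a $\blanksymbol{}$-labelled edge occurs, the halting condition of \Cref{def:halting-transducer} forces $G_i$ to reject. By the definition of halting transducers the rejecting configurations are non-accepting, and we must argue they have no outgoing edges. Hence no path that reaches $c_A$ can contain a $\blanksymbol{}$ edge, and $b \in \{0,1\}^*$.

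\textbf{Simulation by induction.} I will induct on the length of the prefix of $p$, with the following invariant. After following $c_0 \ldots c_j$, the run $G_i(b)$ is in configuration $c_j$, in the sense of \Cref{def:write-transitions}'s configuration tuple. Moreover, the input symbols consumed so far are exactly the labels read off that prefix, and the output tape holds exactly the symbols written by the writing transitions of the prefix.

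\emph{Base case.} The first edge picks the starting configuration with the first bit of $b$ under the input head, which is where $G_i(b)$ starts.

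\emph{Inductive step.} Split on the type of edge. An $\epsilon$ edge is by definition the unique deterministic step of $G_i$ that leaves the input head in place, so $G_i(b)$ takes it and the invariant is preserved. A $0$ or $1$ edge is the unique step that advances the input head, and the newly revealed cell is the next symbol of $b$ by construction. In both cases, if the step writes, the written output symbol agrees by \Cref{def:write-transitions}. Because $G_i$ is deterministic, each step is forced. This is also why looplessness is not needed here; it matters only for the running time of $F_i$.

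\emph{Termination.} The final edge $c_k \xrightarrow{\epsilon} c_A$ exists only when $c_k$ is an accepting configuration. So $G_i(b)$ halts and accepts, and its output tape contents equal the output $w$ of $p$.

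\textbf{Main obstacle.} The delicate step is justifying that a configuration records only the single current input symbol, yet the simulation stays faithful. Since the input head never moves left, $G_i$ never inspects past cells, so the current cell plus the control state suffice. The output tape is write-once with the head always moving right, so tracking only the current cell loses nothing, provided writes are accumulated along the path. A related point needs care: the $0/1$ edge must fix the content of the *newly revealed* cell, and I will state this convention explicitly. A last check is that the worktape content stays within $f(n)$ cells. This holds because $G_i$ is space bounded, so every configuration on the path is a genuine node of $D_i$.
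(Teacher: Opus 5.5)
Your proposal is correct and follows the paper's argument: you read $b$ off the $\{0,1\}$-labelled edges, rule out $\blanksymbol{}$ edges via the halting condition, and use determinism of $G_i$ to conclude that $G_i(b)$ visits exactly the configurations of $p$ and writes the same output. Your induction on prefixes spells out the paper's one-line ``$M_i$ simulates $G_i$'' step. Like the paper, your exclusion of $\blanksymbol{}$ edges tacitly assumes that reading a blank sends $G_i$ straight to rejection, so nothing after such an edge can reach $c_A$.
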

\begin{proof}
Any path from $c_0$ to $c_A$, by definition, corresponds to an accepting execution on $G_i$.
Since $M_i$ is a DFA, the path $p$ corresponds uniquely to a string $t_1 \ldots t_k$ where $t_j \in \{0, 1, \epsilon \}$, representing the label of each transition that has been taken (since the path was accepting, by definition of $G_i$ there cannot be an blank symbol consumed).
Define $b$ to be the subsequence consisting of only symbols from the set $\{0, 1\}$.
Since $G_i$ is deterministic, and $M_i$ simulates the execution of $G_i$, $G_i(b)$ will visit precisely the configurations within $p$ during its execution, and produce the same output as $p$.
\end{proof}

\begin{proposition}
Any string generable by $G_i$ is the output of some \emph{loopless} path from $c_0$ to $c_A$ in $M_i$.
\end{proposition}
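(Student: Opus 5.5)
Fix $w \in \Lang{G_i}$. By \Cref{def:generated-language} there is a bitstring $b$ with $G_i(b) = w$, and by \Cref{obs:monotone} we may take $b \in A_{G_i}$, the minimal accepted prefix. The run of $G_i$ on $b$ is a finite sequence of configurations. By the construction of $D_i$ (here $M_i$), it gives a walk $c_0 \to \cdots \to c_A$ whose transition labels, after deleting the $\epsilon$'s, spell $b$. The output along this walk, read off its writing transitions (\Cref{def:write-transitions}), is $w$. So $w$ is the output of \emph{some} path. The work is to turn this walk into a loopless one without changing its output.

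I will remove cycles one at a time. Suppose the walk visits a node $c$ twice, at positions $j < j'$. Delete the segment strictly between them and splice at $c$. The result is still a walk from $c_0$ to $c_A$ in $M_i$, because every edge left in it is an edge of $M_i$, and the outgoing edge used from $c$ after the splice is the one used at position $j'$. The walk is strictly shorter, so repeating this terminates in a loopless path. What remains is to show that each excision leaves the output unchanged.

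This is the main obstacle. Since the two occurrences of $c$ are the same node, they agree on the output cell under the head. The argument I intend is that a writing transition moves the output head right onto an empty cell, and this head never moves left. Because the configuration records only the current output symbol, I will show that the cycle contains no writing transition. If a write occurred inside the cycle, the output head would have advanced past a written cell. To make this rigorous I would first refine the configuration so that it records a flag for whether the head sits on a fresh cell, or equivalently the number of symbols written modulo something. Then any cycle returns the same flag. But a write followed by the head sitting again on a blank requires the post-write movement to the right, and the configuration at the end of the cycle, matching the start, must then also have been reached after that write. Hence equal configurations force zero writes in between.

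If this encoding argument proves too delicate, my fallback is to augment the node set of $M_i$ with an output counter up to $n$. This keeps the graph polynomial, since it costs only an extra $O(\log n)$ bits, and a repeated node then certifies that no symbol was written inside the cycle. The excision therefore preserves $w$, and the loopless path outputs exactly $w$. Together with the previous proposition, which shows that every such path is a genuine execution of $G_i$, this gives $\Lang{G_i} = \Lang{F_i}$.
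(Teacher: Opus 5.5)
Your skeleton is right: take the accepting run as a walk, excise repeated-node segments, and check that each excised segment is write-free. The step that establishes write-freeness does not work, though.

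A configuration $(q, b_i, m, w_{work}, c_o)$ of $G_i$ carries no information that stops a cycle from containing writes. The same is true of any refinement by a ``fresh cell'' flag or a write counter modulo a constant. Consider a transition that writes a symbol, advances the output head onto a blank cell, and returns to the same state and work-tape contents: it lands in exactly the configuration it started from. With a counter mod $k$, a cycle can simply contain $k$ writes. So ``equal configurations force zero writes in between'' is false as a local fact about $M_i$. For a general halting transducer, write-containing cycles on accepting walks certainly occur.

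Your fallback, adding an output counter to the nodes, does make repeated nodes write-free. As a repair of the overall construction that would be acceptable, but it proves a statement about a different graph, not about $M_i$. A loopless path in the augmented graph may also revisit the same $M_i$-node with different counter values, so without further argument it does not project to a loopless path in $M_i$.

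The missing idea is global rather than local, and it is the one the paper uses. Combine the preceding proposition (every $c_0 \to c_A$ walk in $M_i$ is a genuine accepting execution of $G_i$ with the same output) with the fact that $G_i$ only outputs strings of length exactly $n = i$. Suppose a cycle $r$ on an accepting walk contained a writing transition. Going around $r$ $m$ times yields, for every $m$, another $c_0 \to c_A$ walk. That walk is an accepting run of $G_i$ whose output has at least $m$ symbols, because the output tape is write-once-only. Taking $m > n$ contradicts the fixed output length.

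Hence every cycle on an accepting walk is write-free. Your excision step then preserves the output, and your cycle-removal loop finishes the proof with $M_i$ unchanged. The empty-output case needs no separate treatment, since a walk with no writes at all has no writes on its cycles.
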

\begin{proof}
First, note that if the empty string is being generated, then any loops on the path can be removed to obtain another accepting path.
Thus assume now that the generated string is nonempty.

Suppose for a contradiction that a loop $r = c_1 \ldots c_k$ on $M_i$ exists within an accepting path $p$, with $c_1 = c_k$, and that it contains some writing transition $c_j \xrightarrow{s} c_{j+1}$ (assuming WLOG that $j \neq k$).

If $p = c_0 \ldots r \ldots c_A$, then repeating the loop $r$ within $p$ also yields an accepting path through $M_i$.
That is, $p_m = c_0 \ldots c_1 (r')^m \ldots c_A$, with $m \in \mathbb{N}$ are all accepting paths, where $r' = c_2 \ldots c_k$, and $(r')^m$ denotes the repetition of $r'$ m times.

Each $p_m$ also corresponds to an accepting execution on $G_i$.
Since there are writing transitions within $r$, and $G_i$ is write-once-only, there are at least $m$ characters on the output tape of $G_i$ in the corresponding execution of $p_m$.
In particular, we can choose $m > n$ to reach a contradiction, since $G_i$ is a generator that can only output strings of length $n = i$.
\end{proof}

Our algorithm $F_i$ samples a random loopless accepting path on $M_i$, and the above propositions together shows that every string from $\Lang{F_i}$ can be sampled this way.

With this, we can form our \PG{} generator $F = (F_i)_{i \in \mathbb{N}}$ to perform, on input $i$:
\begin{enumerate}
    \item Construct $M_i$ described above.
    \item Run the sampling procedure to generate a random string recognised by $M_i$.
\end{enumerate}
which has a nonzero chance of generating each instance of $\Lang{G_i}$ in polynomial time.
Thus $\Lang{G}$ is in \PG{}, as required.
\end{proof}

\section{Proof of \Cref{prop:lg-epg}}\label{proof:lg-epg}
\begin{proof}
First we make some necessary definitions.
\begin{definition}
For a relation $R \subseteq X \times Y$, define:
\begin{gather*}
Im(x) = \{ y \in Y : (x, y) \in R \} \qquad Dom(R) = \{ x \in X : \exists y. (x, y) \in R \}
\end{gather*}
\end{definition}

\begin{definition}
$M$ is a logspace transducer for the relation $R \subseteq X \times Y$ if, for all $x \in Dom(R)$, $M(x)$ is a generator for $\textit{Im}(x)$.
That is, $\Lang{M(x)} = Im(x)$.
\end{definition}

Specifically, logspace generators for $L$ are logspace transducers for their length partitions, $(L_i)_{i \in \mathbb{N}}$.
\begin{example}
A log-space generator $G = (G_i)_{i \in \mathbb{N}}$ is a nondeterministic logspace transducer for the following relation:
$$
R_G = \{ ((i, 1^{i}), x) \in C \times \Sigma^* : \exists b \in \{0, 1\}^*. G(i, b) = x \}
$$
\end{example}

\begin{proposition}[Uniform Sampling Of Logspace Transducers~\citep{arenas2019efficient}]
If the relation $R$ can be implemented by a logspace transducer, then there exists a polynomial-time sampler, such that on input $x$, outputs with success probability $> \frac{1}{2}$ a uniform sample from $\textit{Im}(x)$.
\end{proposition}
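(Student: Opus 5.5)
The statement is the result of \citet{arenas2019efficient}, and I would prove it by reducing it to uniform sampling of fixed-length words accepted by a polynomial-size NFA, then invoking the #NFA approximation machinery.

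\emph{Step 1: from the transducer to an NFA.} Fix $x \in Dom(R)$. I would unfold the logspace transducer $M(x)$ into its configuration graph. A configuration records the state, the input head position on $x$, and the $O(\logr|x|)$ work-tape contents. The random input tape and the write-once output tape are treated exactly as in the alternative proof of \Cref{prop:generator-time-space-tradeoff}. This gives polynomially many nodes. Random-bit reads then become nondeterministic $\epsilon$-choices, and writing transitions become transitions labelled by the written output symbol. The result is an NFA $A_x$ with $\Lang{A_x} = Im(x)$.

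The looping argument from that appendix shows that loops containing writing transitions cannot occur on accepting paths when outputs have bounded length. So for each target length $n$, I would layer $A_x$ by the number of symbols written so far and eliminate $\epsilon$-transitions by reachability. This yields a layered acyclic NFA $A_{x,n}$ of size polynomial in $|x|$ and $n$. In the generator application, $n$ is fixed by the size parameter, so only one length matters. If several lengths were relevant, I would first choose a length with probability proportional to an estimated count.

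\emph{Step 2: approximate counting and sampling on the layered NFA.} This is the main obstacle, because $A_{x,n}$ is ambiguous: one word may have exponentially many accepting paths, so path counting by dynamic programming overcounts. I would follow the Karp--Luby union-of-sets approach. For each state $q$ in layer $\ell$, the set $L(q)$ of length-$\ell$ words reaching $q$ is a union over predecessors $p$ and symbols $a$ of the sets $L(p)\cdot a$. Processing layer by layer, I would maintain an estimate $\tilde N(q)$ of $|L(q)|$ and a multiset of near-uniform samples from $L(q)$. These are combined with a Karp--Luby estimator for $|\bigcup_{p,a} L(p)a|$. Its membership queries ``is $w \in L(p)$'' are answered exactly by NFA simulation, which is the key point. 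The delicate part is the error analysis: relative errors compound across $n$ layers, so each layer needs accuracy $(1\pm 1/\mathrm{poly})$ and enough fresh samples that a union bound over all states still leaves constant total failure probability.

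\emph{Step 3: from near-uniform to exactly uniform.} Given $\tilde N$, I would sample a word top-down by self-reducibility, in the style of \citet{jerrum1986random}. At each layer, choose the last symbol and predecessor set with probability proportional to the estimated counts, and track the probability $\hat p(w)$ with which $w$ was produced. I would then accept $w$ with probability $c/(\hat p(w)\,\tilde N)$ for a constant $c$ chosen so that this is at most $1$ whenever the estimates are within their error bounds. Conditioned on acceptance, the output is then exactly uniform on $Im(x)$. The acceptance probability is bounded below by a constant, so a constant number of repetitions pushes the overall success probability above $\tfrac12$. All components run in time polynomial in $|x|$, which gives the claimed polynomial-time uniform sampler.
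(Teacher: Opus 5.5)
The paper gives no proof of this statement. It quotes the theorem of \citet{arenas2019efficient} as a black box, after observing in one line that a logspace generator is a nondeterministic logspace transducer for $R_G$. You instead unpack that black box into the three stages its proof actually consists of:
\begin{itemize}
\item a polynomial-size layered NFA built from the configuration graph;
\item an FPRAS for the number of its length-$n$ words, by layer-wise union-of-sets estimation with exact membership tests;
\item a rejection step in the style of \citet{jerrum1986random} that turns estimated counts into exact uniformity.
\end{itemize}
Your route makes the interface explicit. Layering by the number of symbols written, and removing $\epsilon$-moves by reachability, directly handles random-bit generators that can idle forever without writing (as in \Cref{example:logspace-waiting}); the paper's appeal to ``nondeterministic logspace transducers'' leaves this implicit. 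The cost is that Step 2 is where all the difficulty of \citet{arenas2019efficient} lies, and there you assert the key mechanism rather than supply it.

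Three points need to be made explicit.
\begin{itemize}
\item \emph{Fresh samples.} The sample multisets for layer $\ell$ must be drawn fresh, by running your Step 3 top-down sampler (with its rejection step) on layers $<\ell$, using only the data already fixed for those layers. Conditioned on that data being good, these samples are independent and exactly uniform, and this is what lets the union-bound analysis go through. If you built them by extending stored samples from layer $\ell-1$, they would be correlated, and over $n$ layers their diversity would collapse.
\item \emph{Set-level estimates.} Step 3 needs estimates for predecessor \emph{sets} of states, not just single states. Sets are what make each word's derivation unique, so that $\hat p(w)$ is computable. These estimates come from the per-state sketches via the same union estimator, and Step 2 should say so.
\item \emph{Exactness is conditional.} Exact uniformity holds only on the event that every estimate is within its bounds, and that event cannot be checked efficiently. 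What you obtain, as in \citet{arenas2019efficient}, is a Las Vegas generator \emph{with preprocessing}: exactly uniform with probability $1-\delta$ over the sketch. The paper's informal statement, and its use in \Cref{prop:lg-epg}, suppress this caveat, and your final sentence should state it.
\end{itemize}
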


This leads to a straightforward \EPG{} algorithm for generating \LG{}, by applying this procedure repeatedly until a success is reached.
\end{proof}

\section{Proof of \Cref{prop:nspace-subseteq-exptime-generator}}\label{proof:nspace-subseteq-exptime-generator}
\begin{proof}
We will use two lemmas in the proof.
The first is the known result of time-space tradeoff for decision machines:
\begin{proposition}[Decision Time-Space Tradeoff]\label{prop:decision-time-space-tradeoff}
For space constructible $s: \mathbb{N} \rightarrow \mathbb{N}$ with $s(i) > log(i)$,
$$\NSPACE(s(n)) \subseteq \DTIME(2^{O(s(n))})$$
\end{proposition}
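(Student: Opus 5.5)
The plan is the standard configuration-graph argument. Let $L \in \NSPACE(s(n))$ be decided by a nondeterministic machine $M$ using at most $s(n)$ work-tape cells, with $s(n) > \logr(n)$. We may assume $M$ has a unique accepting configuration: before accepting, it erases its work tape and moves every head to a fixed position. On an input $x$ with $|x| = n$, a configuration consists of four components:
\begin{itemize}
\item the control state, from a constant-size set $Q$;
\item the input-head position, one of $n+1$ values;
\item the work-head position, one of at most $s(n)$ values;
\item the work-tape contents, one of at most $|\Sigma|^{s(n)}$ strings.
\end{itemize}
The number of configurations is therefore at most $|Q|(n+1)s(n)|\Sigma|^{s(n)}$. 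Because $s(n) > \logr(n)$, the factors $n+1$ and $s(n)$ are each $2^{O(s(n))}$, so the whole count is $2^{O(s(n))}$.

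Next, form the configuration graph $C_{M,x}$. Its nodes are these configurations, and there is an edge $c \to c'$ whenever $c'$ is a legal successor of $c$ under the transition relation of $M$ on input $x$. Then $M$ accepts $x$ iff the accepting configuration is reachable from the initial configuration in $C_{M,x}$.

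The deterministic simulator, on input $x$, does the following.
\begin{enumerate}
\item Enumerate all $2^{O(s(n))}$ configurations. Since $s$ is space-constructible, it can compute $s(n)$ and thus knows which configurations to list.
\item For each pair $(c, c')$, decide whether $c \to c'$ is an edge by consulting the finite transition table and reading the input symbol under the input head of $c$. Each check costs time polynomial in the size of a configuration, i.e. $\mathrm{poly}(s(n), \logr n)$.
\item Run breadth-first search from the initial configuration, and accept iff the accepting configuration is reached.
\end{enumerate}
BFS takes time polynomial in the number of nodes and edges, which is $(2^{O(s(n))})^{2} = 2^{O(s(n))}$. Adding the per-edge cost keeps the total at $2^{O(s(n))}$. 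This places $L$ in $\DTIME(2^{O(s(n))})$.

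The only point needing care is that the $n$-dependent factors, the input-head position and the $\logr n$-bit head encodings, are absorbed into $2^{O(s(n))}$. This is exactly where the hypothesis $s(n) > \logr(n)$ is used. Apart from that bookkeeping, every step is routine.
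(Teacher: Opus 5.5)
Your configuration-graph argument is correct and is the standard proof. The paper gives no proof of this statement (it quotes it as a known result), but its proof of the oracle analogue, \Cref{prop:oracle-nspace-subseteq-dtime}, uses the same configuration counting plus graph search, so your approach matches. One small imprecision: on a Turing machine an edge check can take order $n + \mathrm{poly}(s(n))$ steps rather than $\mathrm{poly}(s(n),\logr n)$, because the simulator must walk its input head to the position recorded in $c$. This is still $2^{O(s(n))}$ under the hypothesis $s(n) > \logr(n)$, so your bound is unaffected.
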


And the following:
\begin{lemma}\label{prop:partial-l}
For space constructible $s: \mathbb{N} \rightarrow \mathbb{N}$, and language $L \in \NSPACE(s(n))$, define the decision problem $\PARTIAL(L)$ as follows:
$$\PARTIAL(L) = \{ (x, 1^k) : \exists r. (x \concat{} r) \in L\ \land\ |r| = k \}$$
Then $\PARTIAL(L) \in \NSPACE(s(n))$
\end{lemma}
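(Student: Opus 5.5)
The plan is to build a nondeterministic machine $M'$ for $\PARTIAL(L)$ directly from a nondeterministic machine $M$ that decides $L$ in space $s$. On input $(x, 1^k)$, $M'$ simulates $M$ on the virtual input $x \concat r$, where the suffix $r$ of length $k$ is never written down but supplied by nondeterministic guesses.

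First, the input length. Let $n' = |x| + k + 1$ be the length of the encoding of $(x,1^k)$. Then $M$ runs on an input of length $m = |x|+k < n'$. Assuming $s$ is non-decreasing, which is harmless for space-constructible bounds, $M$ uses at most $s(m) \le s(n')$ work cells. So a faithful simulation of the work tape fits in the allowed budget. In addition $M'$ keeps a counter for the virtual head position $h \in \{0,\ldots,m+1\}$, which costs $O(\logr n') \le O(s(n'))$ cells since $s \ge \logr$.

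Second, the simulation itself. While $h \le |x|$, $M'$ reads the real symbol $x_h$ from its own input tape. When $h$ lies in the suffix region, $M'$ needs the symbol $r_{h-|x|}$, which it does not have stored.

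The consistency requirement here is the crux. The naive rule, guessing a fresh symbol on every visit, is only correct when $M$'s head sweeps the suffix monotonically. For a two-way head, different visits to the same cell must all see the same guessed bit.

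I would enforce this by guessing \emph{crossing sequences} at the boundaries between consecutive suffix cells, rather than guessing the symbols themselves. At the boundary after cell $j$, $M'$ guesses the ordered list of work-configurations with which $M$ crosses that boundary. For each cell, $M'$ then checks locally that the neighbouring crossing sequences are consistent with one fixed guessed symbol $r_j$ under $M$'s transition function. It sweeps left to right, keeping only two adjacent crossing sequences in memory at a time. If the crossing sequences can be bounded in space $O(s(n'))$, this gives a sound and complete $\NSPACE(s(n))$ procedure: accepting paths of $M'$ correspond exactly to pairs of a suffix $r$ with $|r|=k$ and an accepting run of $M$ on $x \concat r$.

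\textbf{Main obstacle.} The step I expect to fail is bounding the crossing sequences. Each sequence can have up to $2^{O(s)}$ entries, each of size $s$, so storing even two of them is exponentially too large.

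Worse, I believe the statement as worded is false in general unless $\NL = \NP$. Take $L = \{\varphi\#a : a \text{ satisfies } \varphi\}$, which lies in deterministic logspace. For $x = \varphi\#$ and $k$ equal to the number of variables, $\PARTIAL(L)$ is exactly CNF-SAT. So the lemma would place an $\NP$-complete problem in $\NL$, and this obstacle is one of truth, not just of technique.

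Before committing to a full proof, I would therefore check whether the downstream use in \Cref{prop:nspace-subseteq-exptime-generator} only needs the weaker conclusion $\PARTIAL(L) \in \DTIME(2^{O(s(n))})$. That would still not follow directly from the argument above. Alternatively, the downstream use might only need $s(n) \ge n$; in that regime $M'$ can store $r$ outright on its work tape, and the simulation in the second step becomes trivially consistent.
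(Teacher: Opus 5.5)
Your diagnosis is correct, and it exposes a flaw in the paper's own proof. The paper argues the naive way. It guesses the symbols of $r$ on the fly as the input head enters the suffix region, and it dismisses the consistency problem with the remark that the head does not move left ``since this is assumed if $s(n) < n$''. That is not a property of the standard model of $\NSPACE(s(n))$ for sublinear $s$, where the input tape is read-only but two-way. It is also the standard model the paper relies on elsewhere: $\REACH$ and 2-SAT in $\NL$, Savitch's theorem, and the linear-Datalog correspondence.

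Your example settles the matter. The language $L=\{\varphi\#a : a \text{ satisfies } \varphi\}$ is in deterministic logspace. The map $\varphi \mapsto (\varphi\#, 1^{m})$, with $m$ the number of variables, is a logspace reduction from CNF-SAT to $\PARTIAL(L)$. So the lemma at $s(n)=\logr n$ would give $\NP \subseteq \NL \subseteq \Poly$. This contradicts the paper's own standing assumption. If $\Poly=\NP$, a collision of $h_n$, which exists by pigeonhole, can be found deterministically by self-reduction, so collision-resistant hash functions in the sense of \Cref{def:collision-resistance} would not exist. Hence no local repair, crossing sequences included, can prove the statement as worded. What you have is a correct conditional refutation rather than a failed proof.

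On salvaging the lemma: it does hold for $s(n)\ge n$, as you say, by writing $r$ on the work tape. It also holds in a model with a one-way input tape, which is what the paper's argument tacitly uses. But checking the downstream use does not help for small $s$. \Cref{prop:nspace-subseteq-exptime-generator} needs $\PARTIAL(L) \in \DTIME(2^{O(s(n))})$, which at $s(n)=\logr n$ means $\PARTIAL(L)\in\Poly$. Your same example refutes that unless $\Poly=\NP$. So the logarithmic case of that proposition, and with it the corollary $\NL \subseteq \PG$, is left without a proof. Indeed, if some collision-resistant hash $h$ in the paper's sense is logspace-computable, then $\textit{COLLISION}_h \in \NL$, and the argument of \Cref{prop:collisions-not-generable} shows $\NL \not\subseteq \PG$ outright.
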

\begin{proof}
We give an NDTM $M$ with space bound $s$ that decides $\PARTIAL(L)$. 
Assume $L$ is decided by the NDTM $D$.
Given an input $(x, 1^k)$ with total length $n = |x| + k$, guess nondeterministically on each branch a string $r$ of length $k$.
Due to space constraints when $s(n) < n$, the string $r$ is not stored on the work tape but produced on-demand by nondeterministic branching for each symbol in $\Sigma$ when the input tape header moves forward.
We do not need to worry about the input tape header moving backwards and needing to ensure our choices are consistent, since this is assumed if $s(n) < n$. 

Then, simulate branches of $D$ on the string $x \concat{} r$, accepting if $D(x \concat{} r)$ accepts.
If there exists a $r \in \{0, 1\}^k$ such that $x \concat r \in L$ then there is a branch that chooses this $r$, and also there exists an accepting branch on the ensuing simulation of $D$ on $x \concat{} r$, thus $(x, 1^k)$ is accepted.
Conversely, if there doesn't exist such an $r$ then all branches on the ensuing simulation will reject for all choices of $r$, thus in this case $M$ will reject $(x, 1^k)$.
Hence $M$ decides $\PARTIAL(L)$.

As $D$ runs in $s(n)$ space, the simulation of $D$ on $M$ will also run in $s(n)$ space.
We have argued earlier that guessing the string $r$ do not require storing the guess onto the work tape, thus space usage remains $s(n)$ for $M$.
\end{proof}

Given an NDTM $M$ that runs in space $s(i)$ on each branch with language $L = L(M)$, we need to produce a generator $G = (G_i)_{i \in \mathbb{N}}$ such that for each $i$, $G_i$ runs in time $2^{O(s(n))}$ and has $\Lang{G_i} = L_i$, the strings of $L$ with length exactly $i$.

\Cref{prop:decision-time-space-tradeoff} and \Cref{prop:partial-l} together implies that $\PARTIAL(L)$ is decideable in time $O(2^{O(s(n))})$, thus size-$i$ instances of $\PARTIAL(L)$ can be decided on the generator $G_i$.
Let its decider be $P$.
$G_i$ proceeds as follows:
\begin{enumerate}
    \item Run $P(\epsilon, 1^i)$. 
        If this run rejects, then $G_i$ also rejects.
    \item Initialise $o = \epsilon$.
    \item Generate a random permutation of the $m$ symbols $c \in \Sigma$, $c_1, \ldots, c_k$. 
    \item For $c_i = c_1, \ldots, c_k$, run $P(o \concat c_i, 1^{i - |o|})$ until one of the $c_i$s accepts.
    Write this $c_i$ to the output tape, and apply the update $o := o \concat{} c_i$.
    \item If $|o| = i$, then terminate. 
    Otherwise continue from step 2.
\end{enumerate}
If $L_i = \emptyset$, then $P(\epsilon, 1^i)$ will reject on step 1, thus $\Lang{G_i} = \emptyset = L_i$.
Otherwise there must exist some choice of $s_1, \ldots, s_i$ at each step, such that $s = s_1 \concat{} \ldots \concat{} s_i \in L_i$.
It is safe to fix the $c_i$s by writing to the output tape since the run of $P$ ensures there exists an extension from that point onwards which belongs in $L_i$.
A different random permutation is necessary for each iteration to ensure all strings in $L_i$ have a chance of being sampled: if the same permutation is used for each step then only the first string from the permutation-induced lexicographical ordering will be produced by the algorithm.
Altogether we have $\Lang{G_i} = L_i$ and hence $\Lang{G} = L$

At most $|\Sigma|$ calls to $P$ are made on each iteration, together with the extra call of step 1.
Thus $i|\Sigma| + 1 \in O(i)$ calls to $P$ are made in total. 
Auxillary procedures take time $O(i)$.
Since $s(i) > log(i)$, the total runtime is $O(i2^{O(s(i))} + i) = O(2^{O(s(i))})$.
Therefore $L = \Lang{G} \in \GTIME(2^{O(s(n))})$.
\end{proof}

\section{Proof of \Cref{prop:reach-lg}}\label{proof:reach-lg}
\begin{proof}
The construction is based on certificate sampling and instance reconstruction (a la \Cref{prop:generator-verifier-equivalence}).
We first define the partition to work on.
Let $C = \{ (v, e) \in \mathbb{N} \times \mathbb{N} : e \leq v^2 \}$, where the tuple $(v, e) \in C \subseteq \{0, 1\}^*$ is interpreted as the binary encoding of two numbers, with $v$ representing the number of vertices and $e$ the number of edges. 
$l : C \rightarrow \mathbb{N}$ is defined as $l((v, e)) = \lceil log(v) \rceil + 2e* \lceil log(v) \rceil + 2 \lceil log(v) \rceil$.
$l((v, e))$ computes length of string needed to encode graph with $v$ vertices (encoded as a natural number), $e$ edges (encodes as an edge list), along with two nodes $i, j$ in the graph.
$l^{-1} : \{1\}^* \rightarrow \mathcal{P}(C)$, where $l^{-1}(1^n)$ is defined as a logspace machine that enumerates all possible pairs of integers $v, e$ in their binary representation, with $v, e \leq n$, and checks that $l((v, e)) = n$---output if true, otherwise continues onto the next candidate.
$l^{-1}$ runs in logspace of input $1^n$, since we are using binary representations.

We can iterate through the elements from the set $\{(v, e) : l(v, e) = n\}$ in logarithmic space, choosing a random element to stop at and run some $G_{v, e}$.
It suffices to show that each $G_{v, e}$ exists and runs in logarithmic space of $n = l(v, e)$.

We now describe a procedure that, when given $(v, e)$, generates a graph $G$ with $v$ vertices (labelled by natural numbers), $e$ edges, and two nodes $i, j$ within the graph.
The generated $G$ will, by construction, have a directed path between $i$ and $j$.

Denote $n = l((v, e))$.
The procedure is as follows:
\begin{enumerate}
    \item Output $v$ in unary, to denote the number of vertices in the graph.
    \item Select two random nodes $v_i$, $v_j$ with $i, j \leq v$.
    \item Generate a path of length $l \leq v$ via a random walk that starts from $v_i$ and ends at $v_j$:
    \begin{enumerate}
        \item On each step of the random walk, we write the edge taken to the output tape.
        \item When $v_j$ is reached, stop the random walk. 
        \item If, by the time we traversed $l-1$ nodes and still have not reached $v_j$, go directly to $v_j$ on the next step.
    \end{enumerate}
    \item \label{prop:reach-proof-erdos-renyi} Add $e$ random edges to the graph in Erdos-Renyi fashion. 
    Output each chosen edge immediately.
    \item Output the earlier $i, j$.
\end{enumerate}
The generated graph $G$ has a path from $v_i$ to $v_j$ by construction, and thus the full outputted instance is a member of $\textit{REACH}$.

For any given directed graph $H$ with $v$ vertices, $e$ edges, and a path from $x$ to $y$ there exists a bitstring that makes the following choices:
\begin{enumerate}
    \item Chooses exactly $i=x, j=y$ to be the start/endpoints of the random walk.
    \item Chooses exactly the path from $x$ to $y$ on $H$ in the random walk.
    \item Fills in exactly the rest of the edges of $H$ in step \ref{prop:reach-proof-erdos-renyi}.
\end{enumerate}
and thus $\langle H, x, y \rangle$ is generable by the above procedure.

Throughout the procedure, we store the binary representation of $v$, the chosen vertices $i, j$, the counter for the random walk up to $v$ in length, and the counter for the number of edges up to $v^2$.
All of which are representable in space $O(log(n))$.
\end{proof}

\section{Proof of \Cref{prop:gspace-subset-nspace-subset-exptime-oracle}}\label{proof:gspace-subset-nspace-subset-exptime-oracle}
\begin{proof}
$\GSPACE(s(n))^f \subseteq \NSPACE(s(n))^f$ follows from the same simulation argument as for the non-oracle case.

For $\NSPACE(s(n))^f \subseteq \GTIME(2^{O(s(n))})^f$, we need the following lemma for the time-space tradeoff on oracle machines:
\begin{lemma}\label{prop:oracle-nspace-subseteq-dtime}
For length-bounded $f : \Sigma^* \rightarrow \Sigma^*$,
$\NSPACE(s(n))^f \subseteq \DTIME(2^{O(s(n))})^f$
\end{lemma}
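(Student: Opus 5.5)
We prove the lemma with the standard configuration-graph argument from \Cref{prop:decision-time-space-tradeoff}, adapted to oracle machines. Fix an NDTM $M_f$ with work and query tapes bounded by $s(n)$, and $f$ length-bounded by $2^{O(n)}$. A configuration of $M_f$ on input $x$ with $|x| = n$ records:
\begin{itemize}
\item the state;
\item the input head position ($n+1$ choices);
\item the work tape contents and head position ($2^{O(s(n))}$ choices);
\item the query tape contents and head position (also $2^{O(s(n))}$, since the query tape is under the same space bound);
\item the answer tape head position.
\end{itemize}
The answer tape contents are deliberately left out of the configuration.

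The key observation is that the answer tape contents are always a deterministic function of the last query made. Since $f$ is a function and the answer tape is read-only, the contents are $f(q)$ for the most recent query $q$, or empty if no query has yet been made. Any query $q$ has $|q| \le s(n)$, so $|f(q)| \le 2^{O(s(n))}$, and the answer head position takes only $2^{O(s(n))}$ values.

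There is one subtlety: the last query need not still be on the query tape, since the tape is read-write. I would handle this by adding a ``last query'' register of size $s(n)$ to the configuration. Alternatively, one can normalise $M_f$ so that it does not modify the query tape until it has finished reading the answer. Either way the configuration count stays $2^{O(s(n))}$, using $s(n) \ge \logr(n)$ to absorb the $n+1$ input positions.

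The deterministic oracle machine $D_f$ then explores the configuration graph of $M_f$ by BFS or DFS from the start configuration. It accepts iff an accepting configuration is reachable. To compute the successors of a configuration, $D_f$ needs the answer tape symbol under the head. It obtains this by calling its own oracle on the stored last query, copying $f(q)$ once, and reading the relevant cell.

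Each oracle call costs $D_f$ time polynomial in $|q| + |f(q)| = 2^{O(s(n))}$. At most one call is made per configuration expansion, or answers can be cached. The graph has $2^{O(s(n))}$ nodes and edges, so the total running time is $2^{O(s(n))} \cdot 2^{O(s(n))} = 2^{O(s(n))}$. Correctness is the usual argument: $M_f$ accepts $x$ iff some nondeterministic branch reaches acceptance, which holds iff the graph has a path from the start configuration to an accepting one.

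The main obstacle is the bookkeeping for the answer tape. We must justify that $f$ being a function, together with the answer tape being read-only, makes the abstracted configuration graph faithful. This means a configuration together with the stored last query must determine the successor set exactly. The same point fixes the length bound: $f$ must be length-bounded by $2^{O(n)}$ so that reading or copying $f(q)$ fits in time $2^{O(s(n))}$.
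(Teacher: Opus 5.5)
Your proposal is correct and follows the paper's proof. Both bound the configurations of $M_f$ by $2^{O(s(n))}$, using that the read-only answer tape is determined by a query of length at most $s(n)$ together with the $2^{O(n)}$ length bound on $f$, and then decide acceptance by a graph search on the configuration graph. Your explicit last-query register, and your step of re-querying the oracle to read answer-tape symbols, usefully tighten the paper's claim that the current query-tape contents fix the answer tape, which is not literally true when the query tape is read-write.
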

\begin{proof}
Take the nondeterministic oracle machine $M_f$, space-bounded by the polynomial $s$ and recognising some $L \in \NSPACE(s(n))^f$.
Suppose we are given an $x \in \Sigma^*$ with $|x| = n$.
Note that the query tape has size at most $s(n)$, and that the contents of the query tape fixes the contents of the answer tape.
Since the largest size query that can be made is of length $s(n)$, the answer tape can record at most $2^{O(s(n))}$ nonempty cells (as $f$ is length-bounded by $2^{O(n)}$), which is indexable by a pointer of size $O(s(n))$.
As the answer tape is read-only, the number of possible configurations on both the query and the answer tapes is $2^{O(s(n))}$.
The work tape and the input tape contribute another $2^{O(s(n))}$ configurations to the total state space, meaning the overall possible configurations for $M_f$ remain $2^{O(s(n))}$.
Querying edges on the configuration graph can be done by querying $M_f$'s definition.
Hence, a standard graph algorithm through the configuration graph can decide whether $x \in L$ in time $2^{O(s(n))}$.
\end{proof}

An identical construction to the proof of \Cref{prop:nspace-subseteq-exptime-generator} then follows: as the oracle variant of $PARTIAL(L)$ is still within $\NSPACE(s(n))^f$, the bit-by-bit construction can be used to construct all elements of $L \in \NSPACE(s(n))^f$ at random.
\end{proof}

\section{Proof of \Cref{prop:default-elem-implies-pg-membership}}\label{proof:default-elem-implies-pg-membership}
\begin{proof}
Clearly if $L \in \PG$ then $\DEFAULT(L) \in FP$ by running the generator $G $for $L$.

Suppose now $\DEFAULT(L) \in FP$.
With a computable default element as a fallback, we are free to sample the $\NP$ computation even though there is a chance of failure (i.e. finding a rejecting branch).
This is similar in idea to the proof of \Cref{prop:re-subseteq-gen}.
The following procedure $G = (G_i)_{i \in \mathbb{N}}$ is an algorithm for sampling $L$, given an $\NP$ machine $M$ that decides $L$ in polynomial time.
Assume $\DEFAULT(L)$ is implemented by the machine $D$ which, on input $1^i$, will output either $\epsilon$ or a string in $L_i$.
Define $G_i$ as:
\begin{enumerate}
    \item Sample a random string $x$ of length $i$.
    \item Simulate a single branch of $M(x)$.
    Output $x$ if $M(x)$ accepts.
    \item Otherwise, run $D(1^i)$. 
    If $D(1^i) = y \neq \epsilon$ then output $y$, otherwise reject.
\end{enumerate}
$G_i$ runs in polynomial time since both $M$ and $D$ are polynomial-time bounded.
Clearly all of $x \in L_i$ is samplable this way.
\end{proof}

\section{Proof of \Cref{prop:pspaceg-eq-pspace-oracle}}\label{proof:pspaceg-eq-pspace-oracle}
\begin{proof}
For $\PSPACEG{}^f \subseteq \PSPACE{}^f$, we know from \Cref{prop:gspace-subset-nspace-subset-exptime-oracle} that $\PSPACEG^f \subseteq \NPSPACE{}^f$.
We show that Savitch's theorem still holds for the oracle case: $\NPSPACE{}^f \subseteq \PSPACE{}^f$, which implies $\PSPACEG{}^f \subseteq \PSPACE{}^f$.
Suppose $M_f$ is an NDTM deciding some $L \in \NPSPACE(s(n))^f$, and space-bounded by the polynomial $s$.
Given any $x \in \Sigma^*$ with $|x| = n$, we want to decide whether $x \in L$. 
In the proof of \Cref{prop:oracle-nspace-subseteq-dtime} we see that the number of configurations on $M_f$ remains $2^{O(s(n))}$.
Thus, using the divide-and-conquer method detailed in the proof of Savitch's theorem, we get a polynomial $O(s^2(n))$ space deterministic algorithm for $L$, using the same oracle $f$, implying $L \in \PSPACE^f$.

$\PSPACE{}^f \subseteq \PSPACEG{}^f$: suppose $M_f$ decides a language $L \in \PSPACE{}^f$.
$G_f$ is defined as follows: on input $n$, choose a enumeration of the strings $x \in \Sigma^n$, and simulate the $\PSPACE^f$ machine on $x$. 
If $M_f(x)$ accepts then accept and output $x$ with probability $1/2$.
If all strings $x$ are rejected by $M_f$, then $G_f$ rejects.
If all $x$s are exhausted before we chose to accept, loop back to the beginning of the enumeration.
Clearly $x \in L$ has length $n$ if and only if $x$ has nonzero probability of being generated.
\end{proof}

\section{Proof of \Cref{prop:sufficient-pg-basis}}\label{proof:sufficient-pg-basis}
\begin{proof}
We show that $L$ is in $\PG{}$ by constructing a generator $G_L = (G_{L, i})_{i \in \mathbb{N}}$ for it.
Since $K \in \PG{}$, let $G_K = (G_{K, i})_{i \in \mathbb{N}}$ be the generator for $K$, where $G_{K, i}$ runs in time $p_K(i)$ for some polynomial $p_K$.
Also, let $G_s$ be the generator from the Efficient Extension condition for $s \in K$. Since $G_s$ is polynomial-time, it runs in time $p_E(|s|)$ for some polynomial $p_E$.

For any index $n \in \mathbb{N}$, we define $G_{L, n}$ as follows:
\begin{enumerate}
    \item Use $G_{K, n}$ to generate a sample $s \in K$. Note that since $(L_n)_{n \in \mathbb{N}}$ partitions $L$ by size, and $K \subseteq L$, $G_{K, n}$ generates elements in $K \cap L_i$. Thus $|s| = n$.
    \item Run the extension generator $G_s$ to obtain a string $s'$.
    \item Output $s'$.
\end{enumerate}

The total runtime is dominated by the runtime of $G_{K, i}$ and $G_s$. Since both are polynomial in $n$, the composed generator runs in polynomial time.
By the property of $G_s$, $s \preccurlyeq s'$. Since $s \in K \subseteq L$, by Upward Closure, $s' \in L$. Also $|s'| = |s| = n$, so $s' \in L_i$.
For any $w \in L_i$, since $K$ contains all minimal elements of $L$, there exists some $s \in K$ such that $s \preccurlyeq w$.
Since we restrict extensions to same-length strings (by the definition of $G_s$), we must have $|s| = |w| = n$, so $s \in K \cap L_i$.
$G_{K, i}$ generates $s$ with non-zero probability.
$G_s$ generates $w$ from $s$ with non-zero probability (since $w \in \{s' : s \preccurlyeq s' \land |s'| = |s|\}$).
Thus, $G_{L, i}$ generates $w$ with non-zero probability.

Therefore, $G_L$ is a polynomial-time generator for $L$, and $L \in \PG{}$.
\end{proof}

\section{Examples of Generators via Basis Extension}\label{example:basis-extension-examples}
\begin{example}[CNF-SAT revisited]
    We show that CNF-SAT is in \PG{} using \Cref{prop:sufficient-pg-basis}.
    We define the size of a formula by the number of variables $n$ and clauses $m$, using a fixed-length encoding.
    We define the partial order $\preccurlyeq$ such that 
    $\phi_1 \preccurlyeq \phi_2$ if $\phi_1$ and $\phi_2$ have the same number of clauses, 
    and each clause in $\phi_2$ contains the literals of the corresponding clause in $\phi_1$ 
    (i.e., $\phi_2$ is obtained by adding literals to $\phi_1$).
    \begin{itemize}
        \item \textbf{Upward Closure}: Adding literals to a disjunctive clause makes it strictly easier to satisfy. 
        Thus if $\phi_1$ is satisfiable, $\phi_2$ is also satisfiable.
        \item \textbf{Samplable Minimal Basis}: The minimal elements $K$ are satisfiable formulas where no literals 
        can be removed from clauses while maintaining the structure (i.e., unit clauses).
        This corresponds to consistent partial assignments.
        We can sample these by generating random consistent assignments for the $n$ variables, 
        and using the active literals as the unit clauses.
        \item \textbf{Efficient Extension}: Given a consistent assignment (formula of unit clauses), 
        we can extend it to a full formula of the target size by adding random literals to each clause.
    \end{itemize}
\end{example}

\begin{example}[Hamiltonian Path]
    Let HAMPATH be the set of graphs (encoded as adjacency matrices of size $n^2$) that contain a Hamiltonian path.
    We define $\preccurlyeq$ by subgraph inclusion: $G_1 \preccurlyeq G_2$ if $E(G_1) \subseteq E(G_2)$.
    \begin{itemize}
        \item \textbf{Upward Closure}: Adding edges to a graph preserves the existence of a Hamiltonian path.
        \item \textbf{Samplable Minimal Basis}: The minimal elements are graphs consisting of exactly one Hamiltonian path (and no other edges).
        These can be sampled efficiently by generating a random permutation of the vertices.
        \item \textbf{Efficient Extension}: Given a graph $G$ consisting of a Hamiltonian path, we can extend it to a random graph $G'$ of the same size (adjacency matrix representation) by setting additional entries in the adjacency matrix to 1 with some probability.
    \end{itemize}
\end{example}

\section{Proof of \Cref{prop:generator-verifier-equivalence}}\label{proof:generator-verifier-equivalence}
\begin{proof}
$(\ref{cert-verif}) \implies (\ref{cert-gen})$ is clear, as the composition of the certificate generator with the certificate recoverer leads to a polynomial-time generator of $L$.
That is, suppose the recovery scheme is $(S_V, R_V)$.
Then $G_i = R_i \circ S_i$ is the generator for $L_i$ that runs in time $p(i, q(i))$, which is polynomial in $i$.
And thus $L$ has a polynomial-time generator. 

To show $(\ref{cert-gen}) \implies (\ref{cert-verif})$, we need to construct a verifier $V$, along with its associated recovery scheme $(S_V, R_V)$, from a polynomial-time generator $G = (G_i)_{i \in \mathbb{N}}$ of $L$.

Define $V$ to use bitstrings as certificates. 
On input $w\#c$ with $|w| = n$, simulate $G_n$ on input $c$ until termination, and accept if the output of $G_n$ is $w$, reject otherwise.
Since $G_n$ will terminate in time $p(n)$, for a polynomial $p$, it will consume at most $p(n)$ random bits during any run. 
We define the certificates $c$ of $w$ to be any bitstream of length up to $p(n)$ that leads to $w$ being produced by $G_n$.
If $w \in L$ then there exists a bitstream (certificate) $c$ that leads to $V(w\#c)$ accepting. 
On the other hand, if $w \notin L$, then $V(w\#c)$ will always reject for any bitstrings $c$.
Thus all conditions of $V$ being a polynomial-time verifier for $L$ is satisfied.

We now need to show that the certificate recovery scheme $(S_V, R_V)$ exists.
Producing a random certificate is easy, as we are using bitstrings as certificates. 
By definition, $G_i$ terminates in time $p(i)$. 
For $S_V = (S_i)_{i \in \mathbb{N}}$, we define each $S_i$ as follows:
\begin{enumerate}
    \item Simulate $G_i$ to determine if $L_i$ is empty. Reject if $L_i = \emptyset$.
    \item Output a random bitstring of length $p(i)$.
\end{enumerate}

To see that $S_V$ is indeed a certificate generator, suppose firstly $w \in L_i = \Lang{G_i}$.
Then there exists some $p(i)$-length bitstring $c$ where $G_i(c) = w$, hence $V(w\#c)$ accepts by definition, and $w \in V_f(c)$.
We know $c$ is samplable from $S_i$ since it samples all bitstrings of length $p(i)$, hence $c \in \Lang{S_i}$.
Thus $w \in V_f(c)$, with $c \in \Lang{S_i}$.
If $L_i = \emptyset$ then $S_i$ indeed rejects.
Therefore we have $L_i \subseteq \Sigma^{i} \cap \bigcup_{c \in \Lang{S_i}} V_f(c)$.

By definition of $V$, if $V(w\#c)$ accepts then $w \in L$. 
Hence $V_f(c) \subseteq L$, and $\Sigma^{i} \cap V_f(c) \subseteq L_i$ for any $c$.
It follows $\Sigma^{i} \cap \bigcup_{c \in \Lang{S_i}} V_f(c) \subseteq L_i$.

Altogether, we have condition 1 of the certificate generator.
Condition 2 of the certificate generator clearly holds since $S_i$ takes exactly $p(i)$ steps, which is a polynomial in $i$.

For $R_V$, we define for each $i \in \mathbb{N}$, $R_i(c, b) = G_i(c)$.
That is, each $R_i$ will simulate $G_i$ on the certificate $c$ it received on the certificate input tape, and ignore its own random tape.
Condition 2 of certificate recoverer holds since $L_i \cap V_f(c)$ contains the single string $G_i(c)$, which is the only output of $R_i(c, b)$ for any choices of bitstring $b$.
Finally, condition 3 of certificate recoverer holds as $G_i$ accepts within $p(i)$ steps, which is a polynomial in parameters $i$ and $|c|$ in the trivial way.
Thus $R_V$ also satisfies all conditions of a certificate recoverer.
\end{proof}

\section{Proof of \Cref{prop:pg-closed-ops}}\label{proof:pg-closed-ops}
\begin{proof}
Suppose $G = (G_i)_{i \in \mathbb{N}}, H = (H_i)_{i \in \mathbb{N}}$ with $\Lang{G}, \Lang{H} \in \PG{}$.
Assume $G$ and $H$ run in polynomial time bounds $p$ and $q$ respectively.

\paragraph{Concatenation.}
Define $G \concat{} H$ as follows. 
For index $i$:
Sample a random $j \in 1, \ldots, i-1$ and define $k = i-j$.
Run $G_j$ first, followed by $H_k$ on an empty work tape immediately after $G_j$ terminates.
If any of $G_j$ or $H_k$ rejected, set $j := j+1$ and retry. 
If all of $j \in 1, \ldots i-1$ failed, then reject.
The total time is bounded by $i(p(i) + q(i))$ to generate an input of size $i$, which is still polynomial in $i$.

\paragraph{Kleene Star.}
For $G^*$, given an index $i$, we need to produce a length $i$ string that is the concatenation of shorter strings from $\Lang{G}$.
At first sight, this seems difficult, since the partition function for the integer $i$ grows at rate $O(2^{\sqrt{n}})$, and each partition needs to be checked to determine whether $\Lang{G^*_i} = \emptyset$.

However, we can define a polynomial time generator for $\Lang{G^*_i}$ by reducing to the Knapsack problem.
First run all $G_1, \ldots, G_i$ in sequence to obtain a sequence of $l_1, \ldots, l_m$ string lengths where for all $j \in 1, \ldots m$, $\Lang{G_{l_j}} \neq \emptyset$.
Then run the dynamic programming algorithm of Knapsack to extract a random way to sum up to $i$ using $l_1, \ldots, l_m$.
Reject if no solution exists.
Finally, run each $G_{l_j}$ in sequence, concatenating the outputs from each run to produce $G^*_i$.
This algorithm clearly produces some string in $\Lang{G^*_i}$ if one exists.
On the other hand, each partition $l_1, \ldots, l_m$ of $i$ has nonzero probability of being extracted from the Knapsack algorithm, and for each $j$, any string $x_j \in \Lang{G_{l_j}}$ has nonzero probability of being produced by their respective generators. 
Thus, all members of $\Lang{G*_i}$ also have nonzero probability of being sampled.

Determining the sequence of $l_j$s takes time $O(ip(i))$, Knapsack runs in time $O(mi) = O(i^2)$, and the final output stage takes time bounded above by $O(ip(i))$.
Thus the total runtime remains polynomial in $i$.

\paragraph{Union.}
For index $i$, select one of $G_i$ or $H_i$ at random to run first. 
If the first machine rejects, then run the second machine.
Reject if both runs reject, otherwise output what is written on the output tape from the first accepting run.
This generates all strings $x \in \Lang{G_i} \cup \Lang{H_i}$.
\end{proof}

\section{Proof of \Cref{prop:pg-not-closed-intersection}}\label{proof:pg-not-closed-intersection}
\begin{proof}
Suppose for a contradiction that given $L_1, L_2 \in \PG{}$, $L_1 \cap L_2 \in \PG{}$.
We will show that $\PG \supseteq \NP$, which contradicts \Cref{prop:crhf-implies-pg-neq-np}.
Take any $L \in \NP$, and define the following languages:
$$
    L^* = \{ 00x : x \in L \} \qquad A = \{ 01y : y \in \{0, 1\}^* \} \qquad B   = \{ 10y : y \in \{0, 1\}^* \}
$$
Observe that if $L^* \in \PG$, then also $L \in \PG$ by taking a suffix.

We claim both $L^* \cup A \in \PG{}$ and $L^* \cup B \in \PG$.
To see this, note that $\DEFAULT(L^* \cup A) \in FP$ as the function $f : \{1\}^* \rightarrow (L^* \cup A)$ defined as $f(\epsilon) = f(1) = f(11) = \epsilon$, and $f(1^k) = 010^{k-2}$ for $k \geq 2$ computes the default elements of $L^* \cup A$ in time linear in $k$.
By \Cref{prop:default-elem-implies-pg-membership}, $L^* \cup A \in \PG$.
$L^* \cup B \in \PG$ follows by a similar argument.

By assumption, $L^* = (L^* \cup A) \cap (L^* \cup B) \in \PG$, where the equality comes from the fact that $L^* \cap A = A \cap B = L^* \cap B = \emptyset$.
Thus, $L^* \in \PG$, which implies $\NP \subseteq \PG$, and we get the desired contradiction.
\end{proof}

\end{document}